 \newcommand{\bs}{\bigskip}
 \newcommand{\ms}{\medskip}
 \newcommand{\n}{\noindent}
 \newcommand{\s}{\smallskip}
 \newcommand{\hs}[1]{\hspace*{ #1 mm}}
 \newcommand{\vs}[1]{\vspace*{ #1 mm}}
 \newcommand{\setempty}{\mathrm{\O}}
 \newcommand{\real}{\mathbb{R}}
 \newcommand{\nat}{\mathbb{N}}
 \newcommand{\integer}{\mathbb{Z}}
 \newcommand{\rational}{\mathbb{Q}}
 \newcommand{\complex}{\mathbb{C}}
 \newcommand{\algebraic}{\mathbb{A}}
 \newcommand{\ie}{\textrm{i.e.},\hspace*{2mm}}
 \newcommand{\eg}{\textrm{e.g.},\hspace*{2mm}}
 \newcommand{\etal}{\textrm{et al.}\hspace*{2mm}}
 \newcommand{\etalc}{\textrm{et al.}}
 \newcommand{\AAA}{{\cal A}}
 \newcommand{\FF}{{\cal F}}
 \newcommand{\HH}{{\cal H}}
 \newcommand{\GG}{{\cal G}}
 \newcommand{\PP}{{\cal P}}
 \newcommand{\UU}{{\cal U}}
 \newcommand{\p}{\mathrm{P}}
 \newcommand{\np}{\mathrm{NP}}
 \newcommand{\fp}{\mathrm{FP}}
 \newcommand{\sharpp}{\#\mathrm{P}}
\theoremstyle{plain}
 \newtheorem{theorem}{Theorem}[section]
 \newtheorem{lemma}[theorem]{Lemma}
 \newtheorem{proposition}[theorem]{Proposition}
 \newtheorem{corollary}[theorem]{Corollary}
 \newtheorem{claim}{Claim}
 \newenvironment{proof}{\par \noindent
            {\bf Proof. \hs{2}}}{\hfill$\Box$ \vspace*{3mm}}
 \newenvironment{proofsketch}{\par \noindent
            {\bf Proof Sketch. \hs{2}}}{\hfill$\Box$ \vspace*{3mm}}
 \newenvironment{proofof}[1]{\vspace*{5mm} \par \noindent
         {\bf Proof of #1.\hs{2}}}{\hfill$\Box$ \vspace*{3mm}}
 \newcommand{\pair}[1]{\langle #1 \rangle}
\newcommand{\ignore}[1]{}
\newcommand{\holant}{\mathrm{Holant}}
\newcommand{\sharpcsp}{\#\mathrm{CSP}}
\newcommand{\sharpcspstar}{\#\mathrm{CSP}^{*}}
\newcommand{\sharpcspplus}{\#\mathrm{CSP}^{+}}
\newcommand{\APreduces}{\leq_{\mathrm{AP}}}
\newcommand{\APequiv}{\equiv_{\mathrm{AP}}}
\newcommand{\DG}{{\cal DG}}
\newcommand{\NZ}{{\cal NZ}}
\newcommand{\ED}{{\cal ED}}
\newcommand{\IM}{{\cal IM}}
\newcommand{\csp}{\mathrm{csp}}
\begin{document}
\pagestyle{plain}
\setcounter{page}{1}

\begin{center}
{\Large {\bf Approximate Counting for Complex-Weighted Boolean \s\\
Constraint Satisfaction Problems}}\footnote{This paper improves an initial report that appeared in the Proceedings of the 8th Workshop on Approximation and Online Algorithms (WAOA 2010), Liverpool, United Kingdom, September 9--10, 2010, Lecture Notes in Computer Science, Springer-Verlag, Vol.6534, pp.261--272, 2011.} \bs\ms\\
{\sc Tomoyuki Yamakami}\footnote{Current Affiliation: Department of Information Science, University of Fukui, 3-9-1 Bunkyo, Fukui 910-8507, Japan} 
\end{center}

\bs

\begin{quote}
\n{\bf Abstract:} 
Constraint satisfaction problems (or CSPs) have been extensively 
studied in, for instance, artificial intelligence, database theory, 
graph theory, and statistical physics. {}From a practical  
viewpoint, it is beneficial  to approximately solve those CSPs. 
When one tries to approximate the 
total number of truth assignments that satisfy all Boolean-valued  
constraints for (unweighted) Boolean CSPs, there  is a known 
trichotomy theorem by which all such counting problems are neatly classified into exactly three categories under polynomial-time (randomized) approximation-preserving reductions.  
In contrast, we obtain a dichotomy theorem of approximate counting for 
complex-weighted Boolean CSPs, provided that all complex-valued unary 
constraints are freely available to use. It is the expressive power of free unary constraints that enables us to prove such a stronger, complete classification theorem. 
This discovery makes a step forward in the quest for the approximation-complexity  
classification of all counting CSPs.
To deal with complex weights, we employ proof techniques of factorization and arity reduction  
along the line of solving Holant problems.  
Moreover, we introduce a novel notion 
of T-constructibility that naturally induces approximation-preserving reducibility. 
Our result also gives an approximation analogue of the dichotomy 
theorem on the complexity of 
exact counting for complex-weighted Boolean CSPs.

\n{\bf Keywords:} 
constraint satisfaction problem, 
constraint,  
T-constructibility,  
Holant problem,  
signature,  
approximation-preserving reduction,  
dichotomy theorem

\n{\bf AMS subject classification:} 68Q15, 68Q17, 68W20, 68W25, 68W40
\end{quote}

\section{Background, New Challenges, and Achievement}\label{sec:introduction}

{\em Constraint satisfaction problems} (or CSPs) have appeared in many different contexts, such as graph theory, database theory, type inferences, scheduling, and notably artificial intelligence, 
from which the notion of CSPs was originated. The importance of CSPs comes partly from the fact that the framework of the CSP is broad enough to capture numerous natural problems arising in real applications. 
Generally, an input instance of a CSP is a set of ``variables'' 
(over a specified domain) and a set of ``constraints'' (such a set of constraints is 
sometimes called a {\em constraint language}) among these variables. 
We are particularly interested in the case of {\em Boolean variables} throughout this paper. 

As a decision problem, a CSP asks whether there exists 
an appropriate variable assignment that satisfies all the given constraints. In particular, Boolean-valued   constraints (or simply, {\em Boolean constraints}) can be expressed 
by Boolean functions or equivalently propositional logical formulas;  
hence, the CSPs with Boolean constraints are all $\np$ problems. 
Typical examples of CSP are the satisfiability problem    
(or SAT), the vertex cover problem, and the colorability problem, 
all of which are known to be $\np$-complete. On the contrary,  other CSPs, 
such as the Horn satisfiability (or HORNSAT), fall into $\p$. 
One naturally asks 
what kind of constraints make them $\np$-complete or solvable efficiently. To be more precise, 
we first restrict our attention on CSP instances that depend only on constraints chosen from a given set $\FF$ of constraints. 
Such a restricted CSP is conventionally denoted $\mathrm{CSP}(\FF)$. 
A classic {\em dichotomy theorem} of Schaefer \cite{Sch78} states that 
if $\FF$ is included in one of six clearly specified classes,\footnote{These classes are defined in terms of $0$-valid, $1$-valid, weakly positive, weakly negative, affine, and bijunctive constraints. See \cite{Sch78} for their definitions.} $\mathrm{CSP}(\FF)$ belongs to $\p$; otherwise, it is indeed $\np$-complete. To see the significance of this theorem, let us  recall a result of Ladner \cite{Lad75}, who demonstrated under the   $\p\neq\np$ assumption that 
all $\np$ problems fill 
infinitely many disjoint categories located between the class $\p$ and the class of $\np$-complete problems. Schaefer's claim implies that 
there are no intermediate categories for Boolean CSPs. 

Another challenging question is to count the number of satisfying assignments for a given CSP instance. The counting satisfiability problem, $\#\mathrm{SAT}$, is a typical counting CSP (or succinctly, $\sharpcsp$), 
which is known to be complete for Valiant's counting class $\sharpp$ \cite{Val79a}. Restricted to a set $\FF$ of Boolean constraints,   
Creignou and Hermann \cite{CH96} gave a dichotomy theorem,  
concerning the computational complexity of the restricted counting problem 
$\sharpcsp(\FF)$. 
\begin{quote}
If all constraints in $\FF$ are affine,\footnote{An affine relation is a set of solutions of a certain set of linear equations over $\mathrm{GF}(2)$.} then $\sharpcsp(\FF)$ is solvable in polynomial time. Otherwise, $\sharpcsp(\FF)$ is $\sharpp$-complete under polynomial-time Turing reductions.
\end{quote}
In real applications, constraints often take real numbers, and this fact leads us to concentrate on ``weighted'' \#CSPs (namely, \#CSPs with arbitrary-valued constraints). In this direction, 
Dyer, Goldberg, and Jerrum \cite{DGJ09} extended the above result to  
nonnegative-weighted Boolean \#CSPs. Eventually, Cai, Lu, and Xia 
\cite{CLX09x,CLX09}  pushed the scope of Boolean \#CSPs further to 
complex-weighted Boolean \#CSPs, and thus all Boolean \#CSPs have been completely classified in the following fashion.
\begin{quote}
If either all complex-valued constraints in $\FF$ are in a set $\AAA$ of ``generalized'' affine or they are in a set $\PP$ of ``product-type'' constraints,\footnote{More precisely, this set, which was originally introduced as $\PP$ by Cai \etalc~\cite{CLX09x,CLX09}, is composed of products of the  equality/disequality constraints (which 
will be explained in Section \ref{sec:constraint-set}) together with unary constraints. Our initial report \cite{Yam10} used an alternative notation of ``$\ED$,'' emphasizing the importance of its key components, {\em equality}  and {\em disequality}. To be consistent with the report, we will contnue using  ``$\ED$'' throughout this paper. } then $\sharpcsp(\FF)$ is solved in polynomial time. Otherwise, $\sharpcsp(\FF)$ is $\sharpp$-hard under polynomial-time Turing reductions.  
\end{quote}

When we turn our attention from exact counting 
to {\em (randomized) approximate counting}, however, a situation seems much more complicated and its landscape looks quite different. 
Instead of the aforementioned dichotomy theorems on the exact-counting model, Dyer, Goldberg, and Jerrum \cite{DGJ10} presented 
a {\em trichotomy theorem} regarding the complexity of approximately counting the number of satisfying assignments for each Boolean CSP instance. 
What they actually proved 
is that, depending on the choice of a set $\FF$ of Boolean constraints, the complexity of approximately solving $\sharpcsp(\FF)$ can be classified into  exactly three categories.   
\begin{quote}
If all constraints in $\FF$ are affine, then $\sharpcsp(\FF)$ is polynomial-time solvable. Otherwise, if all constraints in $\FF$ belong to 
a well-defined class, known as $IM_2$, 
then $\sharpcsp(\FF)$ is equivalent in complexity 
to $\#\mathrm{BIS}$. Otherwise, 
$\sharpcsp(\FF)$ is equivalent to $\#\mathrm{SAT}$. 
The equivalence is defined  
via {\em polynomial-time (randomized) approximation-preserving (Turing) reductions} 
(or AP-reductions, in short).
\end{quote}
Here, \#BIS is the problem of counting the number of independent sets in a given bipartite graph. 

There still remains a nagging question on the approximation complexity 
of a ``weighted'' version of Boolean \#CSPs: what happens if we expand the scope of Boolean \#CSPs from unweighted problems to real/complex-weighted ones? Unfortunately, there have been few results showing the hardness of approximately solving \#CSPs with real/complex-valued constraints, 
except for, \eg \cite{GJ07}. 
Unlike Boolean constraints, when we deal with complex-valued constraints, a significant complication occurs as a result of massive cancellations of weights in the process of summing 
all weights produced by given constraints. This situation demands a distinctive approach toward an analysis of the complex-weighted \#CSPs. Do we still have a classification theorem similar to the theorem of 
Dyer \etal or something quite different? 
In this paper, we will answer this question under a reasonable assumption 
that all unary (\ie arity $1$) constraints are freely available to use. 
Meanwhile, let the notation 
$\sharpcspstar(\FF)$ denote the complex-weighted counting problem $\sharpcsp(\FF\cup\UU)$, in which $\UU$ expresses the set of all complex-valued unary constraints.  
A free use of such unary constraints appeared in the past literature for, \eg CSPs \cite{DF03}, degree-bounded \#CSPs \cite{DGJR09}, and Holant problems \cite{CLX09x,CLX@}. 
Although it is reasonable, this extra assumption draws a clear distinction between the approximation 
complexity of $\sharpcspstar(\FF)$ and that of 
$\sharpcsp(\FF)$, except for the case of Boolean  constraints. 
If we restrict our interest on Boolean constraints, then the only nontrivial unary constraints are $\Delta_{0}$ and $\Delta_{1}$ (which are called ``constant constraints'' and will be explained in Section \ref{sec:constraint}) and thus,  
as shown in \cite{DGJ10}, 
we can completely eliminate them from the definition of $\sharpcspstar(\FF)$ using polynomial-time randomized approximation algorithms. The elimination of those constant constraints is also possible in our general setting of complex-weighted \#CSPs when the values of all constraints are limited to algebraic complex numbers.  

Regarding the approximation complexity of $\sharpcspstar(\FF)$'s, 
the expressive power of unary complex-valued constraints leads us to 
a surprising {\em dichotomy theorem}---Theorem \ref{dichotomy-theorem}---which depicts a picture that looks markedly 
different from that of Dyer \etal     
\begin{theorem}\label{dichotomy-theorem}
Let $\FF$ be any set of complex-valued constraints. If  
$\FF$ is included in $\ED$ (expressed as $\PP$ in \cite{CLX09x,CLX09}), then $\sharpcspstar(\FF)$ is solvable in polynomial time. 
Otherwise, $\sharpcspstar(\FF)$ is $\#\mathrm{SAT}_{\complex}$-hard (\ie at least as hard as $\#\mathrm{SAT}_{\complex}$) under AP-reductions. 
\end{theorem}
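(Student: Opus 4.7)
The plan is to split the proof into the easy tractable direction and the hard direction, which follows the standard factorization and arity-reduction strategy adapted to the complex-weighted setting via T-constructibility. For the tractable side, when $\FF\subseteq\ED$, every constraint $f(x_1,\dots,x_k)$ factors as a product $\prod_{i=1}^{k}u_i(x_i)$ of unary functions. Hence, on any input instance of $\sharpcspstar(\FF)$, every variable $x$ is touched only by unary contributions, so the weighted sum decouples into $\prod_{x}\bigl(\sum_{b\in\{0,1\}}\prod_i u_i^{(x)}(b)\bigr)$, which is computed in polynomial time. The free unary constraints, by definition already in the scope of $\sharpcspstar$, fit this same product form and do not affect tractability.

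For the hardness direction, assuming $\FF\not\subseteq\ED$, pick some $f\in\FF\setminus\ED$ of minimum arity $k\geq 2$. The plan is to perform \emph{arity reduction}: using the freely available complex-valued unary constraints to pin or re-weight coordinates of $f$, T-construct a constraint of arity $k-1$ which is itself non-degenerate. Iterating, one reaches a binary non-degenerate complex-valued constraint $g$. The principal obstacle at this stage is guaranteeing that each reduction step preserves non-degeneracy: a careless choice of unary weights can collapse a non-$\ED$ constraint to a product of unaries. This is circumvented by choosing unary weights \emph{generically} (outside a finite union of algebraic varieties corresponding to the degenerate locus) and invoking T-constructibility to transfer the approximation factor through the gadget replacement.

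Once a binary non-degenerate constraint $g$ is available, the strategy is to perform a case analysis on the $2\times 2$ matrix representing $g$, modulo left/right multiplication by diagonal unary reweightings. Each canonical form should then be shown to AP-simulate either a disequality-type gadget or an OR-type gadget, from which $\#\mathrm{SAT}_{\complex}$ reduces via the standard encoding. Constant unaries $\Delta_0,\Delta_1$ needed along the way are obtainable, as indicated in the introduction, from the freely available unary constraints together with polynomial-time randomized approximation techniques when the values are algebraic.

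The technical heart of the argument, and the step I expect to be hardest, is controlling complex cancellation in the T-constructibility chain: since weights may be complex, polynomially many gadget evaluations can conspire to cancel, destroying the multiplicative error guarantees required for an AP-reduction. This is where the introduction of T-constructibility pays off, since it is explicitly designed to track approximation factors under factorization and arity-reduction gadgets; the proof will consist largely of verifying that at each reduction step the resulting constraint still carries a nonzero "signature" in the Holant sense, so that the AP-reduction composes cleanly with the base reduction from $\#\mathrm{SAT}_{\complex}$.
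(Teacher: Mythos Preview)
Your proposal rests on a misreading of the definition of $\ED$, and this breaks both directions of the argument. You treat $\ED$ as if it were the class $\DG$ of degenerate constraints, i.e.\ products of unaries. In fact $\ED$ is strictly larger: a constraint in $\ED$ is a product of unary constraints \emph{together with} factors of the form $EQ(x_i,x_j)$ and $XOR(x_i,x_j)$. In particular $XOR=[0,1,0]$ and $EQ=[1,0,1]$ are in $\ED$ but are not products of unaries. So your tractable-case argument, where the weighted sum ``decouples'' variable by variable, is simply false: if $f(x_1,x_2)=XOR(x_1,x_2)$ the partition function does not factor as $\bigl(\sum_b u(b)\bigr)\bigl(\sum_b v(b)\bigr)$. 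The paper's Lemma~\ref{basic-case-FPC} handles this correctly by first eliminating $EQ$ factors via variable identification, then observing that the remaining $XOR$ factors force the values of all variables in each connected component once one variable is fixed, so one sums over only two assignments per component.

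The same confusion undermines your hardness direction. Your plan is to T-construct a ``binary non-degenerate'' constraint and then run a case analysis. But non-degenerate is the wrong target: $XOR$ is binary and non-degenerate, yet $\sharpcspstar(XOR)$ is in $\fp_{\complex}$. You would need to reach a binary constraint outside $\ED$, and your generic-unary-weight argument gives no mechanism for avoiding $\ED$ (as opposed to merely avoiding $\DG$). Concretely, if $f\in\NZ\setminus\ED$, pinning coordinates always lands in $\NZ$, and Lemma~\ref{AG-vs-DD} says that for non-zero constraints $\ED$ and $\DG$ coincide, so your reduction works there; but if $f\notin\NZ$, pinning can easily produce something like $XOR$, and your sketch gives no way to rule this out.

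The paper's route is genuinely different from your outline and hinges on a structural invariant you do not mention: the notion of \emph{imp support}, i.e.\ whether the underlying relation $R_f$ lies in $IMP$. The proof of Proposition~\ref{key-proposition} splits on this. If $f$ has imp support, one reduces arity while preserving both imp support and non-membership in $\ED$ (Lemma~\ref{simple-form} and Corollary~\ref{support-IM-not-affine}), eventually reaching a binary constraint of the form $(1,a,0,b)$ with $ab\neq 0$, from which $Implies$ and then $OR$ are T-constructed (Lemma~\ref{arity-2-implies-reduction}, Proposition~\ref{implies-vs-OR}). If $f$ lacks imp support, one either uses $f\in\NZ$ and Lemma~\ref{affine-PP-induction} to reach a non-degenerate $(1,x,y,z)$ with $xy\neq z$, or uses the lattice argument of Lemma~\ref{DGJ10-IMP-property} to pin/link down to a binary $(a,b,c,0)$ or $(0,b,c,d)$ with $bc\neq 0$ (Proposition~\ref{IM-XOR-IMP}). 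The constraint $Implies$ and the AP-equivalence $\sharpcspstar(Implies)\APequiv\sharpcspstar(OR)$ of Proposition~\ref{implies-vs-OR} are the pivot of the whole argument; nothing in your sketch approaches this.

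Finally, a minor point: in $\sharpcspstar(\FF)$ the constants $\Delta_0,\Delta_1$ are already free (they are unary), so no randomized approximation trick is needed to obtain them; that machinery is only invoked for the stronger Corollary~\ref{algebraic-main-theorem} about $\sharpcspplus_{\algebraic}$.
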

Here, the counting problem $\#\mathrm{SAT}_{\complex}$ is a complex-weighted analogue of $\#\mathrm{SAT}$.  

Theorem \ref{dichotomy-theorem} marks a significant progress in quest of  determining the approximation complexity of all counting problems $\sharpcsp(\FF)$ in the most general form. 
This theorem also bears a close resemblance to the aforementioned dichotomy theorem of Cai \etal on the  exact-counting model when all unary constraints are assumed to be freely available. To be more precise, we can deduce from the theorem of Cai \etal the following corollary: for any constraint set $\FF$, if $\FF$ is included in $\ED$ (denoted $\PP$ in \cite{CLX09x,CLX09}) then $\sharpcspstar(\FF)$ belongs to $\fp_{\complex}$ and, otherwise, $\sharpcspstar(\FF)$ is $\sharpp$-hard under polynomial-time Turing reductions. This comes from a fact that, since the set $\AAA$ does not contain $\UU$,  $\sharpcspstar(\AAA\cup \UU)$ (which equals $\sharpcspstar(\FF)$) must be $\sharpp$-hard as a consequence of Cai \etalc's dichotomy theorem. This resemblance is in fact induced from the powerful expressibility of free unary constraints.  

Our proof of Theorem \ref{dichotomy-theorem} heavily relies on the previous work of Dyer 
\etalc~\cite{DGJ09,DGJ10} and, particularly, the work of 
Cai \etalc~\cite{CLX09x,CLX09}, 
which is based on a theory of {\em signature} 
(see \cite{CL08,CL11}) that formulate underlying concepts of {\em holographic algorithms} (which are Valiant's \cite{Val02a,Val02b,Val06,Val08} manifestation 
of a new algorithmic design method of solving seemingly-intractable counting problems in polynomial time). 
A challenging issue for this paper is that core arguments of Dyer \etalc~\cite{DGJ10} 
exploited Boolean natures of Boolean  constraints and 
they are not designed to lead to a dichotomy theorem for 
complex-valued constraints. 
Cai's theory of signature, on the contrary,  deals with complex-valued  
constraints (which are formally called {\em signatures}); however, the theory has been developed 
over polynomial-time Turing reductions but it is not meant to be valid under AP-reductions. For instance, a useful technical tool known as {\em polynomial interpolation}, which is frequently used in an analysis of exact-counting of Holant problems, is no longer applicable in general.  
Therefore, our first task is to re-examine the well-known 
results in this theory and salvage its key arguments that are still valid  for our AP-reductions. 
{}From that point on, we need to find our own way to establish an approximation theory. 

Toward forming a solid approximation theory, a notable technical tool developed in this paper 
is a notion of {\em T-constructibility} in tandem with the aforementioned AP-reducibility. Earlier, Dyer \etalc~\cite{DGJ09} utilized an existing notion of (faithful and perfect) {\em implementation} for Boolean   constraints in order to induce their desired AP-reductions. The T-constructibility similarly maintains the validity of the AP-reducibility; in addition, it is more suitable to handle complex-valued constraints in a more systematic fashion.    
Other proof techniques involved in proving our main theorem include   
(1) {\em factorization} (of Boolean parts) of complex-valued constraints and (2) {\em arity reduction} of complex-valued constraints. Factoring complex-valued constraints helps us conduct crucial analyses on fundamental properties of those constraints, and reducing the arities of constraints helps construct, from given constraints of higher arity, binary constraints, which we can handle directly by a case-by-case analysis. In addition, a particular binary constraint---$Implies$---plays a pivotal role in the proof of Theorem \ref{dichotomy-theorem}.  This situation is clearly distinguished from \cite{CLX09x,CLX@,CLX09}, which instead utilized the affine property. 


To prove our dichotomy theorem, we will organize the subsequent sections in the following fashion. Section \ref{sec:basic-definition} will give the detailed descriptions of our key terminology: constraints, Holant problems, counting CSPs, and 
AP-reductions. In particular, an extension of the existing notion of 
randomized approximation scheme over non-negative integers to 
arbitrary complex numbers will be described in Section \ref{sec:randomized-scheme}. Briefly explained in 
Section \ref{sec:constructibility} is the concept of T-constructibility, 
a technical tool developed exclusively in this paper. 
For readability, a basic property of T-constructibility will be proven in Section \ref{sec:elimination}. 
Section \ref{sec:constraint-set} will introduce several crucial sets of constraints, which are bases of our key results.  
Toward our main theorem, we will develop solid 
foundations in Sections \ref{sec:elementary-reduction} 
and \ref{sec:support-reduction}. Notably, a free use of ``arbitrary'' unary constraint is heavily required in Section \ref{sec:elementary-reduction} to prove several approximation-complexity bounds of $\sharpcspstar(f)$.  
As an important ingredient of the proof of the 
dichotomy theorem, we will present in Section \ref{sec:IM2-support}   approximation  hardness of  $\sharpcspstar(f)$ for two types of constraints $f$. 
The dichotomy theorem will be finally proven in Section \ref{sec:main-theorem}, achieving the goal of this paper.   

Given a constraint, if its outcomes are limited to algebraic complex numbers, we succinctly call the constraint an {\em algebraic constraint}. 
When all input instances are only algebraic constraints, as we noted earlier, we can further eliminate the constant constraints and thus strengthen the main theorem. To describe our next result, we introduce a special notation $\sharpcspplus_{\algebraic}(\FF)$ to indicate $\sharpcspstar(\FF)$ in which (i) all input instances are limited to algebraic constraints and (ii) free unary constraints take neither of the forms $c\cdot \Delta_0$ nor $c\cdot\Delta_1$ for any constant $c$.  
Similarly, $\#\mathrm{SAT}_{\algebraic}$ is induced from $\#\mathrm{SAT}_{\complex}$ by limiting node-weights within algebraic complex numbers.  The power of AP-reducibility helps us establish the following corollary of the main theorem.
\begin{corollary}\label{algebraic-main-theorem}
Let $\FF$ be any set of complex-valued constraints. If $\FF\subseteq\ED$, then $\sharpcspplus_{\algebraic}(\FF)$ is solved in polynomial time; otherwise, $\sharpcspplus_{\algebraic}(\FF)$ is 
$\#\mathrm{SAT}_{\algebraic}$-hard under AP-reductions.  
\end{corollary}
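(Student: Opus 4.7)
The plan is to deduce the corollary directly from Theorem \ref{dichotomy-theorem}, the only genuinely new ingredient being the elimination of the constant unary constraints $c\cdot\Delta_0$ and $c\cdot\Delta_1$ in the algebraic setting. The tractability side is immediate: $\sharpcspplus_{\algebraic}(\FF)$ is obtained from $\sharpcspstar(\FF)$ by restricting the allowed inputs to algebraic constraints and by forbidding the free use of constant unary constraints, so any polynomial-time algorithm for $\sharpcspstar(\FF)$ (which exists when $\FF\subseteq\ED$) automatically solves $\sharpcspplus_{\algebraic}(\FF)$ as well.

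For the hardness direction, assume $\FF\not\subseteq\ED$. By Theorem \ref{dichotomy-theorem} there is an AP-reduction from $\#\mathrm{SAT}_{\complex}$ to $\sharpcspstar(\FF)$. I would first observe that, when fed an instance of $\#\mathrm{SAT}_{\algebraic}$ (whose node weights are algebraic by definition), every constraint produced by this reduction is algebraic, because the T-constructibility operations developed in Section \ref{sec:constructibility} (tensor products, summations over pinned variables, and scalar multiplications) preserve the field of algebraic complex numbers. Hence, up to the presence of the constant unary constraints, the reduction already takes $\#\mathrm{SAT}_{\algebraic}$ into $\sharpcspplus_{\algebraic}(\FF)$.

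The core technical step is therefore replacing every occurrence of $c\cdot\Delta_0$ and $c\cdot\Delta_1$ by nearby \emph{non-constant} algebraic unary constraints. Concretely, I would replace $\Delta_0$ by a unary $[1,\epsilon]$ and $\Delta_1$ by $[\epsilon,1]$ for a sufficiently small positive rational $\epsilon$, chosen so that its inverse powers remain polynomially bounded in the input size $n$ and in the requested approximation parameter $1/\delta$. By grouping the sum defining the Holant value according to which of the pinned variables take the ``wrong'' value, the perturbed Holant expands as the true Holant plus error terms, each weighted by a positive power of $\epsilon$ and by absolute constants coming from $\FF$. Choosing $\epsilon$ inverse-polynomially small then keeps the multiplicative error within any required factor $1+\delta$, which fits the randomized approximation scheme framework set up in Section \ref{sec:randomized-scheme}.

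The main obstacle I anticipate is precisely this error analysis in the complex-weighted setting: because complex weights admit large cancellations, the target Holant value may be extremely small relative to the magnitudes of its partial contributions, so na\"{\i}ve $\epsilon$-perturbation need not give an AP-reduction. To handle this, I would first use the AP-reduction in Theorem \ref{dichotomy-theorem} to reduce to instances whose Holant values are controllable (e.g.\ by reducing through $\#\mathrm{SAT}_{\algebraic}$ itself, where the weights are at worst algebraic and whose magnitudes can be uniformly bounded), and then apply the perturbation argument above at that intermediate stage. Composing the Theorem \ref{dichotomy-theorem} reduction with this constant-constraint-elimination AP-reduction yields the desired AP-reduction $\#\mathrm{SAT}_{\algebraic}\APreduces \sharpcspplus_{\algebraic}(\FF)$, completing the proof of the corollary.
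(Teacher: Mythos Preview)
Your overall architecture is right: the tractability side is trivial, and for hardness one needs to eliminate $\Delta_0,\Delta_1$ from the chain $\#\mathrm{SAT}_{\algebraic}\APreduces\sharpcspstar_{\algebraic}(\FF)$ by perturbing them to $[1,\lambda^m]$ and $[\lambda^m,1]$. You also correctly identify the real danger, namely that cancellations among complex weights can make $\csp_\Omega$ exponentially small compared to the individual summands, so that an inverse-polynomial $\epsilon$ does not give a multiplicative $(1+\delta)$-approximation.

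The gap is in how you propose to handle that danger. ``Reducing through $\#\mathrm{SAT}_{\algebraic}$ where magnitudes can be uniformly bounded'' does not help: the node weights in $\#\mathrm{SAT}_{\algebraic}$ are arbitrary nonzero algebraic complex numbers, so the same catastrophic cancellations can occur there, and bounding the magnitudes of the summands says nothing about the magnitude of their sum. What the paper actually uses---and this is precisely why the corollary is stated only for \emph{algebraic} constraints---is a quantitative lower bound from algebraic number theory (Stolarsky): any nonzero integer polynomial in fixed algebraic numbers has absolute value at least an explicit quantity $d$ that is only singly exponentially small in the combinatorial parameters of the instance. With this $d$ in hand one can (i) choose $m$ polynomial in the instance size so that the perturbation error $|\lambda|^m\cdot 2^n|\Omega|$ is below both $(1-2^{-\delta})d$ and a suitable argument-bound, and (ii) distinguish the case $\csp_\Omega=0$ from $\csp_\Omega\neq 0$ by thresholding the oracle's answer at $d$. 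Without this algebraic lower bound, neither the choice of $m$ nor the zero/nonzero decision can be made in polynomial time, and your perturbation step does not yield an AP-reduction.
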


This corollary will be proven in Section \ref{sec:main-theorem}. A key to the proof of the corollary is an AP-equivalence between $\sharpcspstar_{\algebraic}(\FF)$ and $\sharpcspplus_{\algebraic}(\FF)$ 
for any constraint set $\FF$, where the subscript ``$\algebraic$'' in $\sharpcspstar_{\algebraic}(\FF)$ emphasizes the restriction on input instances within algebraic constraints. This AP-equivalence is a direct consequence of the elimination of  $\Delta_0$ and $\Delta_1$ from $\sharpcspstar_{\algebraic}(\FF)$ and this elimination will be demonstrated in Section \ref{sec:elimination}.


\ms
\n{\bf Outline of the Proof of the Main Theorem:}\hs{1} 
The proof of our dichotomy theorem (Theorem \ref{dichotomy-theorem}) is outlined as follows. 
First, we will establish in Section \ref{sec:upper-bound} the equivalence between $\#\mathrm{SAT}_{\complex}$ and $\sharpcspstar(OR)$, where $OR$ represents the logical ``or'' on two Boolean variables.   
This makes it possible to work solely with $\sharpcspstar(OR)$, instead of 
$\#\mathrm{SAT}_{\complex}$ in the subsequent sections.  
When a constraint set $\FF$ is completely included in $\ED$, we will show  in Lemma \ref{basic-case-FPC} that $\sharpcspstar(\FF)$ is polynomial-time solvable. On the contrary, when $\FF$ is not included in $\ED$, 
we choose a constraint $f$ not in $\ED$. Such a constraint will be treated by Proposition \ref{key-proposition}, in which we will AP-reduce $\sharpcspstar(OR)$ to $\sharpcspstar(f)$. The proof of this proposition will be split into two cases, depending on whether or not $f$ has ``imp support,'' which is a property associated with the constraint $Implies$. When $f$ has such a property, Proposition \ref{no-affine-and-IM2} helps demonstrate the hardness of $\sharpcspstar(f)$, namely, an AP-reduction of $\sharpcspstar(f)$ from $\sharpcspstar(OR)$ and thus from $\#\mathrm{SAT}_{\complex}$. In contrast, if $f$ lacks the property, then we will examine two subcases. If $f$ is a non-zero constraint, then Lemma \ref{affine-PP-induction} together with Proposition \ref{1-x-y-z-case} will 
lead to the hardness of 
$\sharpcspstar(f)$. Otherwise, Proposition \ref{IM-XOR-IMP} will establish   the desired AP-reduction. Therefore, the proof of the theorem is completed.

\ms

Now, we begin with an explanation of basic definitions.

\section{Basic Definitions}\label{sec:basic-definition}

This section briefly presents fundamental notions and notations that 
will be used in later sections. For any finite set $A$, the notation $|A|$ 
denotes the {\em cardinality} of $A$. 
A {\em string} over an alphabet $\Sigma$ is a finite sequence of symbols from $\Sigma$ and $|x|$ denotes the {\em length} of a string $x$, where an {\em alphabet} is a non-empty finite set of ``symbols.''   
Let $\nat$ denote the set of all {\em natural numbers} (\ie non-negative integers). For convenience, $\nat^{+}$ denotes $\nat-\{0\}$. 
 For each number 
$n\in\nat$, $[n]$ expresses the integer set $\{1,2,\ldots,n\}$. 
Moreover, $\real$ and $\complex$ denote respectively the sets of all real numbers and of all complex numbers.  Given a complex number $\alpha$,  
let $|\alpha|$ and $\arg(\alpha)$ respectively denote the {\em absolute value} and the {\em argument} of $\alpha$, where we always assume that $-\pi<\arg(\alpha)\leq\pi$. The special notation $\algebraic$ represents the set of all {\em algebraic complex numbers}. 
The notation $A^{T}$ for any matrix $A$ indicates the {\em transposed matrix} of $A$.  We always treat ``vectors'' as {\em row vectors}, unless stated otherwise.  

For any undirected graph $G=(V,E)$ (where $V$ is a {\em node set} and $E$ is an {\em edge set})  and a node $v\in V$, an {\em incident set} $E(v)$ of $v$ is the set of all edges incident on $v$, and $deg(v)=|E(v)|$ 
is the {\em degree} of $v$. 
When we refer to labeled nodes in a given undirected graph, unless there is any ambiguity, we call such nodes by their labels instead of their original node names. For example, if a node $v$ has a label of Boolean variable $x$, then we often call it ``node $x$,'' although there are many other nodes labeled $x$, as far as it is clear from the context which node is referred to. Moreover, when $x$ is a Boolean variable, as in this example, we succinctly call any node labeled $x$ a ``variable node.'' 

\subsection{Constraints, Signatures, Holant Problems, and \#CSP}\label{sec:constraint}

The most fundamental concept in this paper is ``constraint'' on the 
Boolean domain. 
A function $f$ is called a {\em (complex-valued) constraint} of arity $k$ 
if it is a function from $\{0,1\}^k$ to $\complex$.  Assuming the standard lexicographic order on $\{0,1\}^{k}$, we express $f$ as a series of its output values, which is identified with an element in the complex space $\complex^{2^{k}}$. 
For instance, if $k=1$, then $f$ equals $(f(0),f(1))$, and if $k=2$, 
then $f$ is expressed as  $(f(00),f(01),f(10),f(11))$. 
 A constraint $f$ is {\em symmetric} if the values of $f$ depend only on the {\em Hamming weights} of inputs; otherwise, $f$ is called 
{\em asymmetric}.  When $f$ is a 
symmetric constraint of arity $k$, we use another notation  $f=[f_0,f_1,\ldots,f_k]$, where each $f_i$ is the value of $f$ on inputs of 
Hamming weight $i$. As a concrete example, when $f$ is the equality function ($EQ_k$) of arity $k$, it is expressed as $[1,0,\ldots,0,1]$ (including $k-1$ zeros). Let us recall from Section \ref{sec:introduction} the set $\UU$ of all unary constraints 
and we use the following special unary constraints: $\Delta_0 =[1,0]$ and $\Delta_1=[0,1]$. These constraints are often referred to as ``constant constraints.'' 

Before introducing \#CSPs, we will give a brief description of Holant problem; however, we focus our attention only on ``bipartite Holant problems''  whose input instances are ``signature grids'' containing bipartite graphs $G$, in which all nodes on the left-hand side of $G$ are labeled by signatures in $\FF_1$ and all nodes on the right-hand side of $G$ are labeled by signatures in $\FF_2$, where ``signature'' is another name for complex-valued constraint, and $\FF_1$ and $\FF_2$ are two sets of signatures. 
Formally, a {\em bipartite Holant problem}, denoted $\holant(\FF_1|\FF_2)$, (on a Boolean domain) is a counting problem defined as follows. 
The problem takes an input instance, called 
a {\em signature grid} $\Omega =(G,\FF'_1|\FF'_2,\pi)$, that consists of a finite undirected bipartite graph $G=(V_1|V_2,E)$ (where all nodes in $V_1$ appear on the left-hand side and all nodes in $V_2$ appear on the right-hand side), 
two {\em finite} subsets $\FF'_1\subseteq \FF_1$ and $\FF'_2\subseteq \FF_2$, and a labeling function $\pi:V_1\cup V_2\rightarrow\FF_1'\cup\FF'_2$ such that  $\pi(V_1)\subseteq \FF'_1$, $\pi(V_2)\subseteq \FF'_2$, and each node $v\in V_1\cup V_2$ is labeled $\pi(v)$, which is a function mapping $\{0,1\}^{deg(v)}$ to $\complex$. 
For convenience, we often write $f_{v}$ for this $\pi(v)$. 
Let $Asn(E)$ denote the set of all {\em edge assignments} $\sigma: E\rightarrow \{0,1\}$. The bipartite Holant problem is meant to compute the complex value $\holant_{\Omega}$: 
\[
\holant_{\Omega} = \sum_{\sigma\in Asn(E)} 
\prod_{v\in V_1\cup V_2}f_{v}(\sigma|E(v)), 
\]    
where $\sigma|E(v)$ denotes the binary string $(\sigma(w_1),\sigma(w_2),\cdots,\sigma(w_k))$ if $E(v)=\{w_1,w_2,\ldots,w_k\}$, sorted in a certain pre-fixed order associated with $f_{v}$. 

Let us define complex-weighted Boolean \#CSPs associated with a set $\FF$ of constraints. Conventionally, a complex-weighted Boolean \#CSP, denoted $\sharpcsp(\FF)$, takes a finite set $\Omega$ 
of ``elements'' of the form $\pair{h,(x_{i_1},x_{i_2},\ldots,x_{i_k})}$ on Boolean variables $x_1,x_2,\ldots,x_n$, where $h\in\FF$ and $i_1,\ldots,i_k\in[n]$. The problem outputs the value $\csp_{\Omega}$: 
\[
\csp_{\Omega} = \sum_{\sigma} \prod_{\pair{h,x'}\in \Omega} h(\sigma(x_{i_1}),\sigma(x_{i_2}),\ldots,\sigma(x_{i_k})),
\] 
where $x'=(x_{i_1},x_{i_2},\ldots,x_{i_k})$ and $\sigma:\{x_1,x_2,\ldots,x_n\}\rightarrow\{0,1\}$ ranges over the set of all {\em variable assignments}. 

Exploiting a close resemblance to Holant problems, we intend to 
adopt the Holant framework and re-define $\sharpcsp(\FF)$ in a form of ``bipartite graphs'' as follows: an input instance to $\sharpcsp(\FF)$ is a triplet $\Omega=(G,X|\FF',\pi)$, which we call a ``constraint frame'' (to distinguish it from the aforementioned conventional framework), where $G$ is an undirected  bipartite graph whose left-hand side contains nodes labeled by Boolean variables and the right-hand side contains nodes labeled by constraints in $\FF'$. 
Throughout this paper, we take this constraint-frame formalism  to treat  complex-weighted Boolean \#CSPs; that is, we always assume that an input instance to  $\sharpcsp(\FF)$ is a certain constraint frame $\Omega$ and an output of $\sharpcsp(\FF)$ is the value $\csp_{\Omega}$. 

The above concept of constraint frame is actually inspired by the fact that  $\sharpcsp(\FF)$ can be viewed as a special case of bipartite Holant problem  $\holant(\{EQ_{k}\}_{k\geq1}|\FF)$ by the following translation: any constraint frame $\Omega$ given to $\sharpcsp(\FF)$ is viewed as a signature grid $\Omega'=(G,\{EQ_k\}_{k\geq1}|\FF',\pi)$ in which each Boolean variable $x$ appearing as a label of node $v$ in the original constraint frame $\Omega$  corresponds to all edges incident on the node $v$ whose label is $EQ_k$ in $G$, and thus each variable assignment for $\Omega$  matches the corresponding 0-1 edge assignment for $\Omega'$. Obviously, each outcome of the constraint frame $\Omega$ coincides with the outcome of the signature grid $\Omega'$.  

To improve readability, we often omit the set notation and express, \eg $\sharpcsp(f,g)$ and $\sharpcsp(f,\FF,\GG)$ to mean $\sharpcsp(\{f,g\})$ and $\sharpcsp(\{f\}\cup \FF\cup\GG)$, respectively.    
When we allow unary constraints to appear in any instance freely, 
we succinctly write $\sharpcspstar(\FF)$ for $\sharpcsp(\FF,\UU)$. In the rest of this paper, we will target the counting problems 
$\sharpcspstar(\FF)$.

\ms
\n{\bf Our Treatment of Complex Numbers.}\hs{1} 
Here, we need to address a technical issue concerning how to handle complex numbers as well as complex-valued functions. Recall that each input instance to a \#CSP  involves a finite set of constraints, which are actually  complex-valued functions. 
How can we compute or manipulate those functions? More importantly, how can we ``express'' them as part of input instances even before starting to compute their values? 

The past literature has exhibited numerous ways to treat complex numbers in an existing framework of theory of string-based computation. There are several reasonable definitions of  ``polynomial-time computable'' complex numbers. They vary depending on which viewpoint we take. To state our results independent of the definitions of computable complex numbers, however, we rather prefer to treat complex numbers as basic ``objects.'' 
Whenever complex numbers are given as part of input instances, we implicitly assume that we have a clear and concrete means of specifying those numbers within a standard framework of computation. Occasionally, however, we will limit our interest within a scope of algebraic numbers, 
as in Lemma \ref{equivalent-plus-star}.  

To manipulate such complex numbers algorithmically, we are limited to 
perform only ``primitive''  operations, such as, multiplications, addition, division, etc., on the given numbers in a very plausible fashion. The execution time of an algorithm that handles those complex numbers is generally measured by the number of those primitive operations. 
To given complex numbers, we apply such primitive operations only; therefore, our assumption on the execution time 
of the operations causes no harm in a later discussion on the computability of $\sharpcsp(\FF)$. (See \cite{CL08,CL11} for further justification.) 

\ms

By way of our  treatment of complex numbers, we naturally define the function class $\fp_{\complex}$ as the set of all complex-valued 
functions that can be computed deterministically on input strings in time 
polynomial in the sizes of the inputs. 

\subsection{Randomized Approximation Schemes}\label{sec:randomized-scheme}

We will lay out a notion of randomized approximation scheme, 
particularly, working on complex numbers. Let 
 $F$ be any counting function mapping from 
$\Sigma^*$ (over an appropriate alphabet $\Sigma$) to $\complex$. Our goal is to approximate each value $F(x)$ when $x$ is given as an input instance to $F$. 
A standard approximation theory (see, \eg \cite{ACG+98}) deals mostly with natural numbers; however, treating complex numbers in the subsequent sections requires an appropriate modification of the standard definition of computability and approximation. In what follows, we will make a specific form of complex-number approximation. 

A fundamental idea behind ``relative approximation error'' is that a maximal ratio between an approximate solution $w$ and a true solution $F(x)$
should be close to $1$.   
Intuitively, a complex number $w$ is an ``approximate solution'' for $F(x)$ if a performance ratio $z=w/F(x)$ (as well as $z=F(x)/w$) is 
close enough to $1$. 
In case when our interest is limited to ``real-valued'' functions, we can expand a standard notion of relative approximation of functions producing non-negative integers (\eg \cite{ACG+98}) and we demand $2^{-\epsilon} \leq  w/F(x)  \leq 2^{\epsilon}$ (whenever $F(x)=0$, we further demand $w=0$). This requirement is logically equivalent to both $2^{-\epsilon} \leq  \left|w/F(x)\right|  \leq 2^{\epsilon}$ and $\arg(F(x))=\arg(w)$ (when $F(x)=0$, $w=0$ must hold), where the ``positive/negative signs'' of real numbers $F(x)$ and $w$ are represented by the ``arguments'' of them in the complex plane. Because our target object is complex numbers $z$, which are always specified by their absolute values  $|z|$ and their arguments $\arg(z)$, both values must be approximated simultaneously.  
Given an {\em error tolerance parameter} $\epsilon\in[0,1]$, we call a value $w$ a {\em $2^{\epsilon}$-approximate solution} for $F(x)$ if $w$ satisfies the following two conditions:
\[
2^{-\epsilon} \leq \left| \frac{w}{F(x)} \right| \leq 2^{\epsilon} 
\hs{5}\text{and}\hs{5} 
\left| \arg\left( \frac{w}{F(x)} \right) \right|\leq \epsilon, 
\]
provided that we apply the following exceptional rule: when $F(x)=0$, we instead require $w=0$.  Notice that this way of approximating  complex numbers is more suitable to establish Lemma \ref{equivalent-plus-star} than the way of approximating both the real parts and the imaginary parts of the complex numbers.

A {\em randomized approximation scheme} for (complex-valued) $F$ is a randomized algorithm that takes a standard input $x\in\Sigma^*$ together with an error tolerance parameter $\varepsilon\in(0,1)$, and outputs a $2^{\epsilon}$-approximate solution (which is a random variable)  for $F(x)$ with probability at least $3/4$. 
A {\em fully polynomial-time randomized approximation scheme} 
(or simply, {\em FPRAS}) for $F$ is a randomized approximation scheme for $F$ that runs in time polynomial in $(|x|,1/\varepsilon)$.

Next, we will describe our notion of approximation-preserving reducibility among counting problems. Of numerous existing notions of approximation-preserving reducibilities (see, \eg \cite{ACG+98}), we choose a notion introduced by Dyer \etalc~\cite{DGGJ03}, which can be viewed as a randomized variant of Turing reducibility, described by a mechanism of {\em oracle Turing machine}. Given two counting functions $F$ and $G$, a {\em polynomial-time (randomized) approximation-preserving (Turing) reduction} (or {\em AP-reduction}, in short) from $F$ to $G$ is a randomized algorithm $N$ that takes a pair $(x,\varepsilon)\in\Sigma^*\times(0,1)$ as input,  
uses an arbitrary randomized approximation scheme (not necessarily polynomial time-bounded) $M$ for $G$ as {\em oracle}, 
and satisfies the following three conditions:
(i) $N$ is a randomized approximation scheme for $F$;  
(ii) every {\em oracle call} made by $N$ is of the form $(w,\delta)\in\Sigma^*\times(0,1)$ satisfying  
$1/\delta \leq p(|x|,1/\varepsilon)$, where $p$ 
is a certain absolute polynomial, 
and an oracle answer is an outcome of $M$ on 
the input $(w,\delta)$; and 
(iii) the running time of $N$ is bounded from above by a certain polynomial in $(|x|,1/\varepsilon)$, not depending on the choice of the oracle $M$. In this case, we write $F\APreduces G$ and we also say that $F$ is {\em AP-reducible} (or {\em AP-reduced}) to $G$. If $F\APreduces G$ and $G\APreduces F$, then $F$ and $G$ are {\em AP-equivalent}\footnote{This concept was called  ``AP-interreducible'' by Dyer \etalc~\cite{DGGJ03} but we prefer this term, which is originated from ``Turing equivalent'' in computational complexity theory.} 
and we write $F\APequiv G$. The following lemma is straightforward.

\begin{lemma}\label{T-con-to-AP-reduction}
If $\FF\subseteq \GG$, then 
$\sharpcspstar(\FF)\APreduces \sharpcspstar(\GG)$. 
\end{lemma}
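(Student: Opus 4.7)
The plan is to exhibit an AP-reduction that is essentially the identity map on input instances. Given an input $(x,\varepsilon)$ to the desired randomized approximation scheme for $\sharpcspstar(\FF)$, where $x$ encodes a constraint frame $\Omega = (G, X|\FF', \pi)$ with $\FF' \subseteq \FF \cup \UU$, I would have the reduction algorithm $N$ simply forward the pair $(x,\varepsilon)$ verbatim to the oracle $M$, which is a randomized approximation scheme for $\sharpcspstar(\GG)$, and then return the oracle's answer as its own output.

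The key observation that makes this work is that the hypothesis $\FF \subseteq \GG$ immediately yields $\FF \cup \UU \subseteq \GG \cup \UU$, so every finite set $\FF'$ that is permitted to appear in a constraint frame for $\sharpcspstar(\FF)$ is also a legitimate choice $\FF' \subseteq \GG \cup \UU$ for $\sharpcspstar(\GG)$. Hence $\Omega$ is a syntactically valid input instance of $\sharpcspstar(\GG)$, and the definition of $\csp_{\Omega}$, which depends only on $G$, $\pi$, and the pointwise action of the constraints labeling the right-hand nodes, produces exactly the same complex value whether we view $\Omega$ as an instance of $\sharpcspstar(\FF)$ or of $\sharpcspstar(\GG)$.

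It remains only to verify the three clauses in the definition of AP-reduction. Clause (i) holds because $M$ is a randomized approximation scheme for $\sharpcspstar(\GG)$ and, by the previous paragraph, $M(x,\varepsilon)$ is with probability at least $3/4$ a $2^{\varepsilon}$-approximate solution for the same quantity $\csp_{\Omega}$ that $N$ is required to approximate. Clause (ii) is satisfied with $p(|x|,1/\varepsilon) = 1/\varepsilon$, since the sole oracle call $(x,\varepsilon)$ trivially obeys $1/\varepsilon \leq p(|x|,1/\varepsilon)$. Clause (iii) holds because $N$ performs only a constant amount of work beyond the single oracle invocation, independently of $M$.

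There is no substantive obstacle; the only point requiring a moment's care is the bookkeeping one: a constraint frame for $\sharpcspstar(\FF)$ is defined with its labels coming from some finite $\FF' \subseteq \FF \cup \UU$, so one must confirm that the reduction need not rewrite or reinterpret the labeling function $\pi$ when the instance is passed to the oracle. Since $\FF \cup \UU \subseteq \GG \cup \UU$, the very same $\pi$ and the same $\FF'$ serve as a valid right-hand labeling for $\sharpcspstar(\GG)$, and no reformatting is required.
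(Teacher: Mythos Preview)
Your proposal is correct and is exactly the straightforward argument the paper has in mind; the paper itself omits the proof entirely, stating only that the lemma is immediate, and your identity-map reduction is the intended verification.
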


\section{Underlying Relations and Constraint Sets}\label{sec:constraint-set}

A {\em relation} of arity $k$ is a subset of $\{0,1\}^k$. Such a relation can be viewed as a ``function'' mapping Boolean variables to $\{0,1\}$ (by setting $R(x)=0$ and $R(x)=1$ whenever $x\not\in R$ and $x\in R$, respectively,  for every $x\in\{0,1\}^k$) and it can be treated as a Boolean constraint. For instance, logical relations $OR$, $NAND$, $XOR$, and $Implies$ are all expressed as Boolean constraints in the following manner: 
$OR=[0,1,1]$, $NAND=[1,1,0]$, $XOR=[0,1,0]$, and $Implies=(1,1,0,1)$. The negation of $XOR$ is $[1,0,1]$ and it is simply denoted $EQ$ for convenience. 
Notice that $EQ$ coincides with $EQ_2$. 

For each $k$-ary constraint $f$, its {\em underlying relation} 
is the relation $R_f=\{x\in\{0,1\}^k\mid f(x)\neq0\}$, which 
characterizes the non-zero part of $f$. 
A relation $R$ belongs to the set {\em $IMP$}  (slightly different from $IM_{2}$ in \cite{DGJ10}) 
if it is logically equivalent to a conjunction of  a certain ``positive''  number of relations of the form $\Delta_0(x)$, $\Delta_1(x)$, and $Implies(x,y)$. 
It is worth mentioning that $EQ_2\in IMP$ but $EQ_1\not\in IMP$. 
Moreover, the empty relation ``$\setempty$'' also belongs to $IMP$. 

The purpose of this paper is to extend the scope of 
 the approximation complexity of \#CSPs from Boolean constraints of Dyer \etalc~\cite{DGJ10}, stated in Section \ref{sec:introduction}, to   complex-valued constraints. To simplify later descriptions, it is better for us to introduce the following six special sets of constraints, the first of which has been already introduced in Section \ref{sec:introduction}. 
The notation $f\equiv0$ below means that $f(x_1,\ldots,x_k)=0$ for all $k$-tuples $(x_1,\ldots,x_k)$ in $\{0,1\}^k$, 
where $k$ is the arity of $f$. 


\sloppy
\begin{enumerate}
\item Denote by $\UU$ the set of all unary constraints. 
\vs{-2}
\item Let $\NZ$ be the set of all constraints $f$ of arity $k\geq1$ such that $f(x_1,x_2,\ldots,x_k)\neq0$ for all $(x_1,x_2,\ldots,x_k)\in\{0,1\}^k$. We succinctly call such constraints {\em non-zero constraints}. Notice that this case is different from the case where $f\not\equiv0$. Obviously, $\Delta_0,\Delta_1\not\in\NZ$ holds. 
\vs{-2}
\item Let $\DG$ denote the set of all constraints $f$ of arity $k\geq1$ 
such that $f(x_1,x_2,\ldots,x_k) = \prod_{i=1}^{k}g_i(x_i)$ for certain unary constraints $g_1,g_2,\ldots,g_k$. A constraint in $\DG$ is called {\em degenerate}. Obviously, $\DG$ includes $\UU$ as a proper subset. 
\vs{-2}
\item Define $\ED$ to be the set of functions $f$ of arity $k\geq1$ such that $f(x_1,x_2,\ldots,x_k) = \left(\prod_{i=1}^{\ell_1}h_i(x_{j_i})\right)  \left(\prod_{i=1}^{\ell_2} g_i(x_{m_i},x_{n_i})\right)$ with $\ell_1,\ell_2\geq0$, $\ell_1+\ell_2\geq1$, and $1\leq j_i,m_i,n_i\leq k$, where each $h_i$ is a unary constraint and each $g_i$ is either   
the binary equality $EQ$ or the disequality $XOR$. Clearly, $\DG\subseteq \ED$ holds. The name ``$\ED$'' refers to its key components, ``equality'' and ``disequality.'' See \cite{CLX09} 
for its basic property.
\vs{-2}
\item Let $\IM$ be the set of all constraints $f\not\in\NZ$ of arity $k\geq1$ such that  
$f(x_1,x_2,\ldots,x_k) = \left(\prod_{i=1}^{\ell_1}h_i(x_{j_i})\right)  \left(\prod_{i=1}^{\ell_2} Implies(x_{m_i},x_{n_i})\right)$ with  $\ell_0,\ell_1\geq0$, $\ell_1+\ell_2\geq1$, and $1\leq j_i,m_i,n_i\leq k$, where each $h_i$ is a unary constraint. 
\end{enumerate}

We will present three simple properties of the above-mentioned 
sets of constraints. The first property concerns the set $\NZ$ of non-zero constraints.  
Notice that non-zero constraints will play a quite essential role in 
Lemma \ref{affine-PP-induction} and Proposition \ref{IM-XOR-IMP}.  
In what follows, we claim that two sets $\DG$ and $\ED$ coincide with each other, when they are particularly restricted to non-zero constraints. 
 
\begin{lemma}\label{AG-vs-DD}
Let $f$ be any constraint of arity $k\geq1$ in $\NZ$. It holds that  
$f\in \DG$ iff $f\in\ED$. 
\end{lemma}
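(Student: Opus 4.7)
The forward inclusion $\DG \subseteq \ED$ is immediate from the definitions: any degenerate constraint $f(x_1,\ldots,x_k) = \prod_{i=1}^k g_i(x_i)$ already fits the $\ED$ template with $\ell_1 = k$, $\ell_2 = 0$, $h_i = g_i$, and $j_i = i$, and this direction does not even require $f \in \NZ$. So the real work lies in the converse.

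For $\ED \cap \NZ \subseteq \DG$, my plan is to fix an arbitrary $\ED$-decomposition
\[
f(x_1,\ldots,x_k) = \prod_{i=1}^{\ell_1} h_i(x_{j_i}) \cdot \prod_{i=1}^{\ell_2} g_i(x_{m_i}, x_{n_i})
\]
with each $g_i \in \{EQ, XOR\}$, and argue that every binary factor must be trivial. The central observation is that $f$ literally equals the displayed product, so a zero in any single factor on any assignment forces $f$ to vanish at that assignment, contradicting $f \in \NZ$. Three small case checks on a binary factor $g_i(x_{m_i},x_{n_i})$ then suffice: an $XOR$ with $m_i = n_i$ is identically zero; an $XOR$ with $m_i \neq n_i$ vanishes on any assignment sending $x_{m_i}=x_{n_i}=0$; and an $EQ$ with $m_i \neq n_i$ vanishes on any assignment sending the two variables to different values. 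Each situation contradicts $f \in \NZ$, so only factors of the form $EQ(x_j,x_j) \equiv 1$ can survive, and they contribute nothing and can be dropped.

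Once every binary factor has been discarded, $f$ equals a product of unary constraints $\prod_{i=1}^{\ell_1} h_i(x_{j_i})$. I would then regroup by variable, defining $\tilde g_j(t) = \prod_{i:\,j_i = j} h_i(t)$ for each $j \in [k]$ and interpreting the empty product as the trivial unary constraint $[1,1]$. The resulting factorization $f(x_1,\ldots,x_k) = \prod_{j=1}^k \tilde g_j(x_j)$ exhibits $f$ as a member of $\DG$, completing the argument. The whole proof is essentially mechanical; the only point that deserves a moment's care is recognizing that $f \in \NZ$ forbids even a single vanishing factor on even a single assignment, which is exactly what rules out all non-trivial binary $EQ$ or $XOR$ pieces. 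I do not anticipate any deeper obstacle.
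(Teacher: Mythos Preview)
Your proposal is correct and follows essentially the same approach as the paper: both observe that $\DG\subseteq\ED$ is immediate and that, for $f\in\NZ$, no $EQ$ or $XOR$ factor on distinct variables can appear in an $\ED$-decomposition (since any such factor would vanish on some full assignment), leaving only a product of unary constraints. Your version is simply more explicit in the case analysis and in the final regrouping by variable, whereas the paper dispatches the binary factors in a single sentence via the observation $|R_f|=2^k$.
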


\begin{proof}
Let $f$ be any non-zero constraint of arity $k$. 
Note that $f\in \NZ$ iff $|R_f|=2^k$, where $|R_f|$ is the cardinality of the set $R_f$.  
Since $\DG\subseteq \ED$, it is enough to show 
that $f\in\ED$ implies $f\in \DG$. Assume 
that $f$ is in $\ED$. Since $f$ is a product of certain constraints of the forms: $EQ$, $XOR$, and unary constraints. Since $|R_f|=2^k$, $f$ cannot be made of $EQ$ as well as $XOR$ as its ``factors,''   
and thus it should be of the form 
$\prod_{i=1}^{k}U_i(x_i)$, where each $U_i$ is a {\em non-zero} unary constraint. We therefore conclude that $f$ is degenerate and it belongs to  $\DG$. 
\end{proof}


Concerning $\DG$, every constraint $f$ satisfying 
$|R_f|\leq 1$ should belong to $\DG$. This simple fact is shown as follows. 
When $|R_f|=0$, since $f$'s output is always zero,  $f(x_1,\ldots,x_k)$ can be expressed as $\Delta_0(x_1)\Delta_1(x_1)$. Moreover,  when $|R_f|=1$,  assuming that $R_f=\{(a_1,a_2,\ldots,a_k)\}$ for a certain vector $(a_1,a_2,\ldots,a_k)\in\{0,1\}^k$, we set  
$b=f(a_1,a_2,\ldots,a_k)$. Since $f(x_1,x_2,\ldots,x_k)$ equals $b\cdot \prod_{i=1}^{k}\Delta_{a_i}(x_i)$,  $f$ belongs to $\DG$, as requested.  


Several sets in the aforementioned list satisfy the {\em closure property}  under multiplication. For any two constraints $f$ and $g$ of arities $c$ and $d$, respectively, the notation $f\cdot g$ denotes the function defined as follows. For any Boolean vector $(x_1,\ldots,x_k)\in\{0,1\}^k$, let  $(f\cdot g)(x_{m_1},\ldots,x_{m_k}) = f(x_{i_1},\ldots,x_{i_c})g(x_{j_1},\ldots,x_{j_d})$ if $\{m_1,\ldots,m_k\} = \{i_1,\ldots,i_c\}\cup\{j_1,\ldots,j_d\}$, where the order of the indices in   $\{m_1,\ldots,m_k\}$ should be pre-determined from $(i_1,\ldots,i_c)$ and $(j_1,\ldots,j_d)$ before multiplication. For instance, we obtain $(f\cdot g)(x_1,x_2,x_3,x_4)$ from $f(x_1,x_3,x_2)$ and $g(x_2,x_4,x_1)$.   

\begin{lemma}\label{closure-multiplication}
For any two constraints $f$ and $g$ in $\ED$, the constraint $f\cdot g$ is also in $\ED$. A similar result holds for $\DG$, $\NZ$, $\IM$, and $IMP$.
\end{lemma}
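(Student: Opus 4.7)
The proof is essentially a verification that each of the five sets is closed under the product operation as defined, and the strategy is uniform across cases: take the defining product forms of $f$ and $g$, concatenate the factor lists, and check that the resulting expression still fits the definition of the set.

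My plan is to handle the five cases in the following order. For $\ED$, I would write $f(x_1,\ldots,x_c) = \prod_{i=1}^{\ell_1} h_i(x_{j_i}) \cdot \prod_{i=1}^{\ell_2} g_i(x_{m_i},x_{n_i})$ and $g(y_1,\ldots,y_d) = \prod_{i=1}^{\ell_1'} h'_i(y_{j'_i}) \cdot \prod_{i=1}^{\ell_2'} g'_i(y_{m'_i},y_{n'_i})$, where the $g_i, g'_i$ are each either $EQ$ or $XOR$. After merging the variable lists according to the identification prescribed in the definition of $f\cdot g$, the product is manifestly a product of $\ell_1+\ell_1'$ unary factors and $\ell_2+\ell_2'$ equality/disequality factors; since $\ell_1+\ell_2\geq 1$ already, the required positivity of the total count is preserved. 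The case of $\IM$ is analogous: the relevant binary factors are $Implies$ instead of $EQ$/$XOR$, and the product form is preserved. The additional requirement $f\cdot g\notin\NZ$ needed for membership in $\IM$ follows because $f\notin\NZ$ supplies a Boolean assignment making $f$ vanish, and extending that assignment arbitrarily to the remaining variables of $g$ forces the product to vanish.

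For $\DG$, if $f(x_1,\ldots,x_c)=\prod_{i=1}^{c} u_i(x_i)$ and $g(y_1,\ldots,y_d)=\prod_{i=1}^{d} v_i(y_i)$, then after variable identification each variable of $f\cdot g$ receives either a single unary factor from one of $f,g$ or the pointwise product of two unary factors; since the product of two unary constraints is again unary, $f\cdot g$ is degenerate. For $\NZ$, the claim is immediate: if neither $f$ nor $g$ ever vanishes, then for every assignment the value of $f\cdot g$ is a product of two non-zero complex numbers, hence non-zero. For $IMP$, viewing the relations as $\{0,1\}$-valued functions makes the product correspond to the logical conjunction of the two relations after variable identification; concatenating the two positive-length lists of $\Delta_0,\Delta_1,Implies$ clauses yields another such conjunction of positive length, so $f\cdot g\in IMP$.

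The verifications are routine and essentially bookkeeping; I do not anticipate a genuine obstacle. The one place that requires a small amount of care is the variable-identification step in the definition of $f\cdot g$: a variable that appears in the argument lists of both $f$ and $g$ will end up carrying the product of its two (unary) factors, and one must check that this does not spoil the defining form. This is harmless for all five sets, because in each case the unary-factor slot is permitted to hold an arbitrary unary constraint, and the set of unary constraints is closed under pointwise multiplication. Once this observation is made, each of the five closure claims is a direct syntactic check against the definitions given in Section \ref{sec:constraint-set}.
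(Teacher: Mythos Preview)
Your proposal is correct and follows the same approach as the paper: concatenate the factor lists of $f$ and $g$ and observe that the resulting product still matches the defining template of each set. The paper's own proof is a two-line sketch that does only the $\ED$ case explicitly and waves at the rest; your write-up is in fact more careful than the paper's, in that you explicitly verify the side condition $f\cdot g\notin\NZ$ for the $\IM$ case and flag the variable-sharing issue for $\DG$, neither of which the paper mentions.
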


\begin{proof}
Assume that $f,g\in\ED$. Note that $f$ and $g$ are both products of constraints, each of which has one of the following forms: $EQ$, $XOR$, unary constraints.  Clearly, the multiplied constraint $f\cdot g$ is a product of those factors, and hence it is in $\ED$. The other cases are similarly proven. 
\end{proof}

{\em Exponentiation} can be considered as a special case of multiplication. To express an exponentiation, we introduce the following notation: for any number $r\in\real-\{0\}$ and any constraint $f$, let $f^r$ denote the  function defined as $f^r(x_1,\ldots,x_n) = (f(x_1,\ldots,x_n))^r$ for any $k$-tuple $(x_1,\ldots,x_k)\in\{0,1\}^k$. 

\begin{lemma}\label{f-vs-f-power-in-PP}
For any number $m\in\nat^{+}$ and any constraint $f$, $f\in\ED$ iff $f^{m}\in\ED$. A similar result holds for $\DG$, $\NZ$, $\IM$, and $IMP$. 
\end{lemma}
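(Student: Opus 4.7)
My approach is to reduce everything to two simple observations and apply them to each class $\ED, \DG, \NZ, \IM, IMP$ in turn. Observation (a): a complex number $a$ vanishes if and only if $a^m$ vanishes, so for every constraint $f$ and every $m \in \nat^+$ we have $R_f = R_{f^m}$. Observation (b): the binary building blocks used in the definitions of $\ED$ and $\IM$ (namely $EQ$, $XOR$, and $Implies$) take values only in $\{0,1\}$, so they are fixed by the $m$-th power operation; and for any unary constraint $h = (h(0), h(1))$, the pointwise $m$-th power $h^m = (h(0)^m, h(1)^m)$ is again a unary constraint.

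I would first prove the lemma for $\ED$. For the forward direction, if $f = \prod_{i=1}^{\ell_1} h_i(x_{j_i}) \cdot \prod_{i=1}^{\ell_2} g_i(x_{m_i}, x_{n_i})$ is an $\ED$-decomposition, then applying observation (b) coordinate-wise yields
\[
f^m(x_1,\ldots,x_k) = \prod_{i=1}^{\ell_1} h_i^m(x_{j_i}) \cdot \prod_{i=1}^{\ell_2} g_i(x_{m_i}, x_{n_i}),
\]
which is again an $\ED$-decomposition, giving $f^m \in \ED$. For the backward direction, suppose $f^m \in \ED$ and fix a decomposition $f^m = \prod_{i=1}^{\ell_1} \tilde h_i(x_{j_i}) \cdot \prod_{i=1}^{\ell_2} g_i(x_{m_i}, x_{n_i})$. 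By observation (a) the zero-pattern of $f$ agrees with the zero-pattern forced by the $g_i$-factors, so the same binary factors appear in $f$; and on the support the function $f$ coincides (up to a choice of branch) with a product of pointwise $m$-th roots of the $\tilde h_i$. Choosing, once and for all, a fixed $m$-th root $h_i$ of $\tilde h_i$ at each input value produces a candidate $\ED$-decomposition of $f$.

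For $\DG$, the forward direction is immediate since only unary factors are involved and $h^m$ is unary, and the backward direction uses the same $m$-th-root selection but without any binary factor to worry about. For $\NZ$ the equivalence is a direct restatement of observation (a): $f \in \NZ$ iff $R_f = \{0,1\}^k$ iff $R_{f^m} = \{0,1\}^k$ iff $f^m \in \NZ$. The $\IM$ and $IMP$ cases are handled exactly as the $\ED$ case, with $Implies$ taking the role of $EQ$/$XOR$ and $Implies^m = Implies$ by observation (b); for $IMP$ the same argument applies to $\Delta_0, \Delta_1, Implies$ since these are all $\{0,1\}$-valued. The main subtlety that requires care is the $m$-th-root step in the backward direction: I must verify that the product of the chosen roots actually reproduces $f$ (and not merely $f$ up to roots of unity), which I would justify by fixing a canonical branch of $(\cdot)^{1/m}$ value-by-value so that the resulting unary factors reconstruct $f$ on its support.
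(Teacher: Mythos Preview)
Your approach matches the paper's, and it shares the same gap in the backward direction. The forward implications are fine for all five classes, and the $\NZ$ case is entirely correct via your observation~(a). But for $\ED$, $\DG$, and $\IM$ the step where you ``choose $m$-th roots of the unary factors so as to reconstruct $f$'' cannot succeed in general: the discrepancy between $f$ and the candidate product $\prod_i h_i \cdot \prod_i g_i$ is a pointwise $m$-th-root-of-unity function on the support, and such a function need not itself decompose as a product of \emph{unary} constraints.

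Concretely, take $m=4$ and the binary constraint $f = (1,1,1,i)$. Then $f^4 = (1,1,1,1) = [1,1](x_1)\cdot[1,1](x_2)$, so $f^4 \in \DG \subseteq \ED$. However $f \in \NZ$ and $f(0,0)f(1,1) = i \neq 1 = f(0,1)f(1,0)$, so $f \notin \DG$; by Lemma~\ref{AG-vs-DD} this forces $f \notin \ED$ as well. No choice of branches can repair this: any product $h_1(x_1)h_2(x_2)$ of unary $4$-th roots of $[1,1]$ is still rank one, whereas $f$ is not. An analogous arity-$3$ example breaks the backward direction for $\IM$: let $f$ be supported on $\{(x_1,x_2,x_3):x_1\Rightarrow x_2\}$ with all nonzero values equal to $1$ except $f(1,1,1)=i$; then $f^4 = Implies(x_1,x_2)\cdot[1,1](x_3) \in \IM$, but matching the six nonzero values of $f$ against $Implies(x_1,x_2)\,h_1(x_1)h_2(x_2)h_3(x_3)$ forces all $h_j$ to be constant, a contradiction. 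The paper's own proof makes the identical unjustified leap (``Since $f = h_1\cdots h_k$''), so the backward implication as stated appears to be false for $\ED$, $\DG$, and $\IM$; your plan to ``fix a canonical branch value-by-value'' does not close this gap.
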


\begin{proof}
Let $m\geq1$. Since $f^m$ is the $m$-fold function of $f$, by Lemma \ref{closure-multiplication}, $f\in\ED$ implies $f^m\in\ED$. Next, we intend to  show that $f^m\in\ED$ implies $f\in\ED$. 
Let us assume that $f^m\in\ED$. By setting $g=f^m$, it holds that $f(x_1,\dots,x_n) = (g(x_1,\ldots,x_n))^{1/m}$ for any $n$-tuple  $(x_1,\ldots,x_n)\in\{0,1\}^n$. 
Now, assume that $g = g_1\cdot g_2\cdot\,\cdots\,\cdot g_k$, 
where each $g_i$ is one of $EQ$, $XOR$, and unary constraints.  
If $g_i$ is either $EQ$ or $XOR$, then we define $h_i=g_i$. If $g_i$ is a unary constraint, let us define $h_i = (g_i)^{1/m}$, which is also a unary constraint. Obviously, all $h_i$'s are well-defined and also 
belong to $\ED$, because $\ED$ contains all unary constraints. 
Since $f = h_1\cdot h_2\cdot\,\cdots\,\cdot h_k$, by the definition of $\ED$, we conclude that  
$f$ is in $\ED$. 

The second part of the lemma can be similarly proven.
\end{proof}

\section{Typical Counting Problems}\label{sec:upper-bound}

We will discuss the approximation complexity of 
special counting problems that has arisen naturally in 
the past literature. When we use complex numbers  in the subsequent discussion, we always assume our special way of handling those numbers, as discussed in Section \ref{sec:constraint}. 

The {\em counting satisfiability problem}, \#SAT, is a problem of counting the number of truth assignments that make each given propositional formula true. This problem is proven to 
be complete for $\sharpp$ under AP-reduction \cite{DGGJ03}. 
Dyer \etalc~\cite{DGJ10} further showed that  \#SAT  possesses the computational power equivalent to $\sharpcsp(OR)$ under AP-reduction, namely, 
$\sharpcsp(OR) \APequiv \#\mathrm{SAT}$. 

Nevertheless, to deal particularly with complex-weighted counting problems, 
it is desirable to introduce a complex-weighted version of $\#\mathrm{SAT}$. 
In the following straightforward way, we define  
$\#\mathrm{SAT}_{\complex}$, a complex-weighted version of 
$\#\mathrm{SAT}$. Let $\phi$ be any  propositional formula (with three logical connectives, $\neg$ (not), $\vee$ (or), and $\wedge$ (and))  and 
let $V(\phi)$ be the set of all variables appearing in $\phi$. Let 
$\{w_x\}_{x\in V(\phi)}$ be any series of {\em node-weight functions} 
$w_{x}:\{0,1\}\rightarrow\complex-\{0\}$. 
Given such a pair $(\phi,\{w_x\}_{x\in V(\phi)})$,  $\#\mathrm{SAT}_{\complex}$ asks to compute the sum of all weights 
$w(\sigma)$ for every truth assignment $\sigma$ 
satisfying $\phi$, where $w(\sigma)$ denotes the product of all $w_{x}(\sigma(x))$ for any $x\in V(\phi)$. 
If $w_x(\sigma(x))$ always equals $1$ for every pair of $\sigma$ and $x\in V(\phi)$, then we immediately obtain $\#\mathrm{SAT}$.  This indicates that  $\#\mathrm{SAT}_{\complex}$ naturally extends $\#\mathrm{SAT}$.  

We wish to show that $\sharpcspstar(OR)$ is {\em $\#\mathrm{SAT}_{\complex}$-hard} (\ie at least as hard as $\#\mathrm{SAT}_{\complex}$) under AP-reductions.

\begin{lemma}\label{SAT-to-OR}
$\#\mathrm{SAT}_{\complex}\APreduces \sharpcspstar(OR)$. 
\end{lemma}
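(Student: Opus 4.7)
My plan is to realize any input $(\phi,\{w_x\}_{x\in V(\phi)})$ of $\#\mathrm{SAT}_{\complex}$ as a constraint frame of $\sharpcspstar(OR)$ whose $\csp$-value equals $\sum_{\sigma\models\phi} w(\sigma)$. The construction will be a polynomial-time, parsimonious, many-one reduction, which immediately yields the desired AP-reduction. First I would monotonize $\phi$ in two parsimonious steps. A standard Tseitin-style transformation produces an equivalent $3$-CNF $\psi$ over $V(\phi)\cup\{t_1,\dots,t_m\}$ in which each satisfying assignment of $\phi$ extends uniquely (by evaluating subformulas) to a satisfying assignment of $\psi$. Then, for every variable $v$ of $\psi$ that appears negatively, I introduce a dual variable $\bar v$, replace every occurrence of $\neg v$ by $\bar v$, and impose the disequality constraint $XOR(v,\bar v)=[0,1,0]$; after this step every clause is a monotone disjunction $u_1\vee u_2\vee u_3$. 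I assign the trivial unary $(1,1)$ to every Tseitin and dual variable and keep the original $w_x$ on each $x\in V(\phi)$, so the weighted count is preserved throughout.

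The main technical work is to T-construct, from binary $OR$ and unary constraints alone, the two auxiliary constraints needed above, namely $XOR$ and the ternary $OR_3=[0,1,1,1,1,1,1,1]$. The key primitive is a negative-weight trick that manufactures $NAND$: attaching the unary $(-1,1)$ to a fresh auxiliary $z$ wired by $OR(x,z)$ and $OR(y,z)$ sums over $z$ to $1-xy=NAND(x,y)$, and the same trick on three variables with $OR(x,w),OR(y,w),OR(z,w)$ yields $NAND_3=1-xyz$. Placing $OR$ and the $NAND$ gadget simultaneously on the same pair produces $XOR$ because $[0,1,1]\cdot[1,1,0]=[0,1,0]$. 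Finally, $OR_3(u_1,u_2,u_3)$ is T-constructed by introducing duals $\bar u_i$ with $XOR(u_i,\bar u_i)$ and feeding $(\bar u_1,\bar u_2,\bar u_3)$ into the $NAND_3$ gadget: the XORs force $\bar u_i=1-u_i$, so marginalization yields $1-(1-u_1)(1-u_2)(1-u_3)=OR_3(u_1,u_2,u_3)$. Each of these gadgets is a T-construction in the sense of Section \ref{sec:constructibility} and therefore preserves AP-reducibility.

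Putting everything together, each clause of $\psi^+$ becomes an $OR_3$ gadget on the corresponding positive literals, each negation is handled by its $XOR$ gadget, each original variable $x$ carries the unary $w_x$, and all auxiliary (Tseitin, dual, or gadget) variables carry the trivial unary $(1,1)$. Every constraint is then either binary $OR$ or unary, so the instance lies legitimately in $\sharpcspstar(OR)$ and its $\csp$-value reproduces $\sum_{\sigma\models\phi}w(\sigma)$ by the parsimoniousness of each step. The hard part will be the $OR_3$ step: without the flexibility to deploy a complex unary weight like $(-1,1)$, one cannot break monotonicity to obtain $NAND$ or $XOR$ from $OR$ alone, so this negative-valued unary is the essential extra ingredient that makes the complex-weighted reduction work and from which every subsequent gadget cascades.
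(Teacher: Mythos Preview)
Your argument is correct and takes a genuinely different route from the paper. The paper proves the lemma indirectly: it adapts the $\#\mathrm{SAT}\APreduces\#\mathrm{IS}$ reduction of Dyer \etalc~\cite{DGGJ03} to the weighted setting to get $\#\mathrm{SAT}_{\complex}\APreduces\#\mathrm{IS}_{\complex}$, then proves $\#\mathrm{IS}_{\complex}\APequiv\sharpcspplus(NAND)$ (Claim~\ref{IS-equiv-NAND}), and finally passes from $NAND$ to $OR$ by the $0$/$1$ swap. Your proof is instead a direct, self-contained value-preserving many-one reduction: Tseitin to 3-CNF, dualisation to make clauses monotone, and then explicit gadgets that T-construct $NAND$, $NAND_3$, $XOR$, and $OR_3$ from $\{OR\}\cup\UU$. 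The gadget computations are correct (in particular $\sum_{z}OR(x,z)OR(y,z)\,[-1,1](z)=1-xy$ and its ternary analogue), and Lemma~\ref{constructibility} legitimises each step as an AP-reduction. What your approach buys is that it stays entirely inside the paper's own T-constructibility framework and makes transparent exactly where the free complex unaries enter (the single weight $[-1,1]$ generating the needed cancellation); what the paper's route buys is brevity, by delegating the combinatorial SAT-to-IS step to~\cite{DGGJ03}.

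Two cosmetic remarks. First, calling the overall reduction ``parsimonious'' is slightly off: the gadget auxiliaries are marginalised with signed weights rather than fixed bijectively, so the reduction is value-preserving (hence trivially AP), not parsimonious in the witness-bijection sense. Second, Tseitin yields clauses of width at most~$3$; width-$1$ and width-$2$ clauses are handled directly by unary constraints and by $OR$ itself, so no additional gadget is needed there, but it is worth saying this explicitly. Neither point affects correctness.
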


The following proof is based on the proof of \cite[Lemma 6]{DGJ10}, which uses approximation results of \cite{DGGJ03} on the {\em counting independent set problem} $\#\mathrm{IS}$.  A set $S$ of nodes in a graph $G$ is called {\em independent} if, for any pair of nodes in $S$, there is no single edge connecting them.
Dyer \etalc~\cite{DGGJ03} showed that  $\#\mathrm{IS}$ 
is AP-equivalent with $\#\mathrm{SAT}$.  
As a complex analogue of $\#\mathrm{IS}$, we introduce  $\#\mathrm{IS}_{\complex}$. An input     
instance to $\#\mathrm{IS}_{\complex}$ is an undirected graph $G=(V,E)$ and a series $\{w_x\}_{x\in V}$  of node-weight functions with each $w_x$ mapping $\{0,1\}$ to $\complex-\{0\}$.  An output of $\#\mathrm{IS}_{\complex}$ is the sum of all weights $w(S)$ for any independent set $S$ of $G$, where $w(S)$ equals the products of all values $w_{x}(S(x))$ over all nodes $x\in V$, where $S(x)=1$ ($S(x)=0$, resp.) iff $x\in S$ ($x\not\in S$, resp.). 

To describe the proof of Lemma \ref{SAT-to-OR}, we wish to introduce a new notation ``$\sharpcspplus(\FF)$,'' which will appear again in Sections \ref{sec:main-theorem} and \ref{sec:elimination}.  
The notation $\sharpcspplus(\FF)$ expresses the counting problem $\sharpcsp(\FF,\UU\cap\NZ)$. 

\begin{proofof}{Lemma \ref{SAT-to-OR}}
We can modify the construction of an AP-reduction from  $\#\mathrm{SAT}$ to $\#\mathrm{IS}$,  
given in \cite{DGGJ03}, by adding a node-weight function to each variable node. Hence, we instantly obtain $\#\mathrm{SAT}_{\complex}\APreduces \#\mathrm{IS}_{\complex}$. We leave the details of the proof 
to the avid reader. Next, we will claim that $\#\mathrm{IS}_{\complex}$ and 
$\sharpcspplus(NAND)$ are AP-equivalent. 
Because this claim is a concrete example of how to relate \#CSPs to  more popular counting problems, here we include the detailed proof of the claim. 

\begin{claim}\label{IS-equiv-NAND}
$\#\mathrm{IS}_{\complex} \APequiv \sharpcspplus(NAND)$.
\end{claim}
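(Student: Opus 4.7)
The plan is to establish the two AP-reductions by giving direct, value-preserving, polynomial-time transformations in each direction. Both reductions are essentially ``Karp reductions'' that produce a single oracle call with $\delta = \varepsilon$ and then return the oracle's answer, so the AP-reduction conditions are automatic; the entire content of the proof lies in verifying that the constructed instances compute the same complex value.

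For $\#\mathrm{IS}_{\complex} \APreduces \sharpcspplus(NAND)$, I would take an input $(G, \{w_x\}_{x \in V})$ with $G = (V, E)$ and build a constraint frame $\Omega$ as follows: introduce one Boolean variable $x_v$ for each node $v \in V$, attach one copy of the binary constraint $NAND$ to each edge $\{u,v\} \in E$ (on variables $x_u, x_v$), and attach the unary constraint $w_v$ to each variable $x_v$; since $w_v : \{0,1\} \to \complex - \{0\}$, we have $w_v \in \UU \cap \NZ$, so this is a legal instance of $\sharpcspplus(NAND)$. Under the bijection $\sigma \leftrightarrow S_\sigma = \{v : \sigma(x_v)=1\}$, the product $\prod_{\{u,v\} \in E} NAND(\sigma(x_u), \sigma(x_v))$ equals $1$ if $S_\sigma$ is independent and $0$ otherwise (since $NAND = [1,1,0]$), while the product of the unary factors is exactly $w(S_\sigma)$. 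Therefore $\csp_\Omega = \sum_{S \text{ independent}} w(S)$, which is the required value.

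For the converse $\sharpcspplus(NAND) \APreduces \#\mathrm{IS}_{\complex}$, I would take an input constraint frame $\Omega$ on variables $x_1, \ldots, x_n$ whose right-hand side uses only $NAND$ and non-zero unary constraints, and build a graph $G = (V, E)$ with $V = \{x_1, \ldots, x_n\}$ and $E = \{\{x_i,x_j\} : \text{some } NAND\text{-node of } \Omega \text{ is incident to both } x_i \text{ and } x_j\}$ (treated as a set, so repeated $NAND$ constraints collapse to a single edge). For each variable $x_i$, define $w_{x_i}$ to be the pointwise product of all unary constraints attached to $x_i$ in $\Omega$, using the convention $w_{x_i} \equiv 1$ if no unary is attached; because each factor is non-zero at both arguments, $w_{x_i}$ maps $\{0,1\}$ to $\complex - \{0\}$, as required. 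Collapsing repeated $NAND$s is harmless: on any assignment $\sigma$ with $\sigma(x_i) = \sigma(x_j) = 1$, every copy of $NAND(x_i,x_j)$ contributes $0$ and on every other assignment each copy contributes $1$. A direct computation then gives $\csp_\Omega = \sum_{S \text{ independent in } G} \prod_{v \in V} w_v(S(v))$, which is exactly the $\#\mathrm{IS}_{\complex}$ value for the constructed instance.

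The only conceptual wrinkles — and therefore the places to double-check — are (i) ensuring that the unary weights in the second reduction remain in $\complex - \{0\}$, which follows because $\UU \cap \NZ$ is closed under pointwise products and non-emptiness of the factor list is handled by the default $1$; (ii) justifying the collapse of multiple $NAND$ constraints on the same pair, which uses the idempotence of $NAND$'s values in $\{0,1\}$; and (iii) checking that both transformations run in time polynomial in the size of the input frame/graph and produce an output of polynomial size, so the resulting Karp reductions immediately qualify as AP-reductions with a single oracle call. I do not foresee any other obstacle, since the match of sums is literally an unpacking of the definitions of $\csp_\Omega$ and of the weighted independent-set count.
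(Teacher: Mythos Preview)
Your proposal is correct and follows essentially the same approach as the paper: both directions are value-preserving, single-query Karp-style reductions, with the forward direction attaching one $NAND$ per edge and one non-zero unary per node, and the backward direction collapsing repeated $NAND$s, multiplying the unary constraints at each variable, and reading off the independent-set instance. The paper's write-up spends more space on the bipartite bookkeeping of the constraint-frame formalism, but the underlying construction is identical to yours.
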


\begin{proof}
We want to show that $\#\mathrm{IS}_{\complex}$ is AP-reducible to $\sharpcspplus(NAND)$. Let $G=(V,E)$ and $\{w_x\}_{x\in V}$ be any instance pair to $\#\mathrm{IS}_{\complex}$. In the way described below, we will  construct a constraint frame $\Omega=(G',X|\FF',\pi)$ that becomes 
an input instance to $\sharpcspstar(NAND)$, where $G'=(V|V',E')$ is an  undirected bipartite graph whose $V'$ and $E'$ 
($\subseteq V\times V'$) are defined by the following procedure. Choose any edge $(x,y)\in E$, 
prepare three new nodes $v_1,v_2,v_3$ labeled $NAND, w_{x}, w_{y}$, respectively, and place four edges $(x,v_1),(y,v_1),(x,v_2),(y,v_3)$ into $E'$. 
At the same time, place these new nodes into $V'$. In case where variable $x$ ($y$, resp.) has been already used to insert a new node $v_2$ ($v_3$, resp.), we no longer need to add the node $v_2$ ($v_3$, resp.). 
We define $X$ to be the set of all labels of the nodes in $V$ and define $\FF'$ to be $\{w_{x}\}_{x\in V}\cup \{NAND\}$.  A labeling function $\pi$ is naturally induced from 
$G'$, $X$, and $\FF'$ and we omit its formal description.  

Now, we want to use variable assignments to compute $\csp_{\Omega}$. Given any independent set $S$ for $G$, we define its corresponding variable assignment $\sigma_S$ as follows: for each variable node $x\in V$, let $\sigma_S(x) = S(x)$. 
Note that, for every edge $(x,y)$ in $E$, $x,y\in S$ iff $NAND(\sigma_S(x),\sigma_S(y))=0$.  Let $\tilde{V}$ denote a subset of $V'$ whose elements have the label $NAND$. 
Since all unary constraints appearing as node labels in $V'$  are $w_x$'s, 
$w(S)$ coincides with $\prod_{v\in \tilde{V}}\prod_{x,y\in E'(v)}f_{v}(\sigma_{S}(x),\sigma_{S}(y))\cdot  \prod_{x\in V}w_x(\sigma_{S}(x))$, where each label $f_v$ of node $v$ is $NAND$.  
{}Using this equality, it is not difficult to show that $\csp_{\Omega}$ equals the outcome of $\#\mathrm{IS}_{\complex}$ on the instance $(G,\{w_x\}_{x\in V})$. Therefore, $\#\mathrm{IS}_{\complex}$  is AP-reducible to $\sharpcspplus(NAND)$. 

Next, we will construct an AP-reduction from $\sharpcspplus(NAND)$ to $\#\mathrm{IS}_{\complex}$. Given any input instance $\Omega=(G,X|\FF',\pi)$ with $G=(V_1|V_2,E)$ to $\sharpcspplus(NAND)$, 
we first simplify $G$ as follows. Notice that $\FF'$ is a finite subset of $\{NAND\}\cup\UU$.   
If any two distinct nodes $v_1,v_2\in V_2$ labeled $u_1,u_2\in\UU$, respectively, satisfy $E(v_1)=E(v_2)$, then we merge the two nodes into 
one node with a {\em new} label $u'$, where $u'(x) = u_1(x)u_2(x)$.
Similarly, if $v_1,v_2\in V_2$ with the same label $NAND$ satisfy $E(v_1)=E(v_2)$, then we delete the node $v_1$ and all its incident edges. 
By abusing the notation, we denote the obtained graph by $G$. 

{}From the graph $G$, we define another graph $G'=(V_1,E')$ with $E' = \{(x,y)\in V_1\times V_1 \mid \;\;\text{$\exists v\in V_2$ s.t. $v$ has label $NAND$ and $x,y\in E(v)$}\}$. 
Let $x$ be any variable that appears in $G'$. For each node $w$  in $V_1$ with the label $x$, if $w$ is adjacent to a certain node whose label is a unary constraint, say, $u$, then define $w_x$ to be $u$; otherwise, define $w_x(z)=1$ for any $z\in\{0,1\}$.  Let $\tilde{V}$ be the set of all nodes in $V_2$ whose labels are $NAND$. 
Fix a variable assignment $\sigma$ arbitrarily and define 
$S_{\sigma} = \{ x\in V_1 \mid \sigma(x)=1 \}$. It follows that   
$w(S_{\sigma}) =  \prod_{v\in V_2}f_{v}(\sigma(x_{i_1}),\ldots,\sigma(x_{i_k}))$, where each $k$-tuple $(x_{i_1},\ldots,x_{i_k})$ depends on the choice of $f_v$. 
Thus, $\sum_{\sigma}w(S_{\sigma})$ equals $\csp_{\Omega}$.
Moreover, it holds that  
$\prod_{v\in V_2}f_{v}(\sigma(x_{i_1}),\ldots,\sigma(x_{i_k})) = 
\prod_{v\in \tilde{V}}\prod_{x,y\in E(v)} f_{v}(\sigma(x),\sigma(y)) 
\cdot \prod_{x\in V_1}w_x(\sigma(x))$. Hence,  
$\prod_{v\in V_2}f_{v}(\sigma(x_{i_1}),\ldots,\sigma(x_{i_k})) \neq0$ 
iff $S_{\sigma}$ is an independent set. 
These conditions give the desired AP-reduction from $\sharpcspplus(NAND)$ to $\#\mathrm{IS}_{\complex}$. 
This completes the proof of Claim \ref{IS-equiv-NAND}.
\end{proof}

Naturally, $\sharpcspplus(NAND)$ is AP-reducible to $\sharpcspstar(NAND)$. To complete the proof of the lemma, we want to show that  $\sharpcspstar(NAND)\APreduces \sharpcspstar(OR)$. This is easily shown by, roughly speaking, exchanging the roles of $0$ and $1$ in variable assignments.  More precisely, given an instance $\Omega=(G,X|\FF',\pi)$ to $\sharpcspstar(NAND)$, we build another instance $\Omega'$ by replacing any unary constraint $u$ by $\overline{u}$, where $\overline{u}=[b,a]$ if $u=[a,b]$, and by replacing $NAND$ by $OR$. 
It clearly holds that $\csp_{\Omega} = \csp_{\Omega'}$, and thus $\sharpcspstar(NAND)\APreduces \sharpcspstar(OR)$.  
\end{proofof}

We remark that, by carefully checking the above proof, we can AP-reduce $\#\mathrm{SAT}_{\complex}$ to $\sharpcspplus(OR)$ instead of $\sharpcspstar(OR)$. 
For another remark, we need two new notations. The first notation $\sharpcspplus_{\algebraic}(\FF)$ indicates the counting problem 
obtained from $\sharpcspplus(\FF)$ under the restriction that input instances are limited to algebraic constraints. 
When the outcomes of all node-weight functions of $\#\mathrm{SAT}_{\complex}$ are limited to algebraic complex numbers, we briefly write $\#\mathrm{SAT}_{\algebraic}$. Similar to the first remark, we can prove that $\#\mathrm{SAT}_{\algebraic}$ is AP-reducible to $\sharpcspplus_{\algebraic}(OR)$. This fact will be used in Section \ref{sec:main-theorem}.  

\section{T-Constructibility}\label{sec:constructibility}

One of key technical tools of Dyer \etalc~\cite{DGJ09} in manipulating Boolean constraints is a notion of ``implementation,'' which is used to help establish certain AP-reductions among \#CSPs with Boolean constraints. 
In light of our AP-reducibility, we prefer a more ``operational'' or ``mechanical'' approach toward the manipulation of constraints in a rather  systematic fashion.    
Here, we will present our key technical tool, called {\em T-constructibility}, 
of constructing target constraints from a 
given set of presumably simpler constraints by applying repeatedly such mechanical operations, 
while maintaining the AP-reducibility. This key tool 
will be frequently used in Section \ref{sec:elementary-reduction} 
to establish several AP-reductions among \#CSPs with constraints. 

In an exact counting case of, \eg Cai \etalc~\cite{CLX09x,CLX@,CLX09}, numerous ``gadget'' constructions were used to obtain required properties of constraints. Our systematic approach with the T-constructibility naturally supports most gadget constructions and the results obtained by them can be re-proven by appropriate applications of T-constructibility.  
The set $CL_{T}^{*}(\GG)$ of all constraints that are T-constructed from a fixed set $\GG$ of ``basis'' constraints together with arbitrary free unary constraints is certainly an interesting research object in  promoting our understanding of the AP-reducibility. 
An advantage of taking such a systematic approach can be exemplified, for instance, by Lemma \ref{IMP-substitute}, in which we are able to argue the {\em closure property} under AP-reducibility (without the projection operation). This property is a key to the subsequent lemmas and propositions.  
This line of study was lately explored in \cite{BDGJ12}.  

To pursue notational succinctness, we use the following notations  in the rest of this paper. For any index $i\in[k]$ 
and any bit $c\in\{0,1\}$, 
let the notation $f^{x_i=c}$ denote the function $g$ satisfying that $g(x_1,\ldots,x_{i-1},x_{i+1},\ldots,x_k) = f(x_1,\ldots,x_{i-1},c,x_{i+1},\ldots,x_k)$ for any vector $(x_1,\ldots,x_{i-1},x_{i+1},\ldots,x_k)\in\{0,1\}^{k-1}$.  
Similarly, for any two {\em distinct}  
indices $i,j\in[k]$, we denote by $f^{x_i=x_j}$ the function $g$ defined as  $g(x_1,\ldots,x_{i-1},x_{i+1},\ldots,x_k) =
 f(x_1,\ldots,x_{i-1},x_{j},x_{i+1},\ldots,x_k)$. 
Moreover, let $f^{x_i=*}$ be the function $g$ defined as  $g(x_1,\ldots,x_{i-1},x_{i+1},\ldots,x_k) = \sum_{x_i\in\{0,1\}}f(x_1,\ldots,x_{i-1},x_i,x_{i+1},\ldots,x_k)$, where $x_i$ is no longer a free variable.  
By extending these notations naturally, we can write, \eg $f^{x_i=0,x_{m}=*}$ as the shorthand for $(f^{x_i=0})^{x_m=*}$ and $f^{x_i=1,x_m=0}$ for $(f^{x_i=1})^{x_m=0}$.  

We say that a constraint $f$ of arity $k$ is {\em T-constructible} (or {\em T-constructed}) from a constraint set $\GG$ if $f$ can be obtained, initially from constraints in $\GG$, by applying recursively a finite number (possibly zero) of functional operations described below.
\begin{enumerate}\vs{-1}
\item {\sc Permutation:} for two indices $i,j\in[k]$ with $i<j$, 
by exchanging two columns $x_i$ and $x_j$ in  $(x_1,\ldots,x_i,\ldots,x_j,\ldots,x_k)$, transform $g$ into $g'$ that is 
defined as $g'(x_1,\ldots,x_i,\ldots,x_j,\ldots,x_k) = g(x_1,\ldots,x_j,\ldots,x_i,\ldots,x_k)$.
\vs{-2}
\item {\sc Pinning:} for an index $i\in[k]$ and a bit $c\in\{0,1\}$, build $g^{x_i=c}$ from $g$.
\vs{-2}
\item {\sc Projection:} for an index $i\in[k]$, build $g^{x_i=*}$ 
from $g$.
\vs{-2}
\item {\sc Linking:} for two {\em distinct} indices $i,j\in[k]$, build $g^{x_i = x_j}$ from $g$.
\vs{-2}
\item {\sc Expansion:} for an index $i\in[k]$, introduce a new ``free'' variable, say, $y$ and  transform $g$ into $g'$ that is defined by $g'(x_1,\ldots,x_i,y,x_{i+1},\ldots,x_k) =  g(x_1,\ldots,x_{i},x_{i+1},\ldots,x_k)$.
\vs{-2}
\item {\sc Multiplication:} from two constraints $g_1$ and $g_2$ of 
arity $k$ sharing the same input variable series $(x_1,\ldots,x_k)$, 
build  $g_1\cdot g_2$, that is,  $(g_1\cdot g_2)(x_1,\ldots,x_k) = g_1(x_1,\ldots,x_k)g_2(x_1,\ldots,x_k)$. 
\vs{-2}
\item {\sc Normalization:}  for a constant $\lambda\in\complex-\{0\}$, build $\lambda\cdot g$ from $g$, where $\lambda\cdot g$ is defined as $(\lambda \cdot g)(x_1,\ldots,x_k) = \lambda\, g(x_1,\ldots,x_k)$. 
\end{enumerate}\vs{-1}
When $f$ is T-constructible from $\GG$, we write $f\leq_{con}\GG$. In particular, when $\GG$ is a singleton, say, $\{g\}$, we also write $f\leq_{con}g$ instead of $f\leq_{con}\{g\}$ for succinctness.  With this notation $\leq_{con}$, an earlier notation $CL_{T}^{*}(\GG)$ can be formally defined as the set $CL_{T}^{*}(\GG) = \{f\mid f\leq_{con}\GG\cup\UU\}$. 

As is shown below, T-constructibility induces a partial order among  all constraints. The proof of the following lemma is rather straightforward, 
and thus we omit it entirely and leave it to the avid reader. 
 
\begin{lemma}\label{AP-transitivity}
For any three constraints $f$, $g$, and $h$, it holds 
that (i) $f\leq_{con}f$ and (ii)  $f\leq_{con}g$ and $g\leq_{con}h$ imply $f\leq_{con}h$. 
\end{lemma}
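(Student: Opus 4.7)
The plan is to verify the two claimed properties directly from the definition of T-constructibility, since both reduce to straightforward bookkeeping about finite sequences of the seven listed operations.

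For (i), reflexivity is immediate from the parenthetical remark in the definition that the number of operations applied may be zero: starting from $\GG=\{f\}$ and performing no operations leaves $f$ itself, so $f\leq_{con}f$.

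For (ii), transitivity, I would argue as follows. Suppose $f\leq_{con}g$ and $g\leq_{con}h$. By definition there is a finite sequence of operations $O_1,O_2,\ldots,O_m$ drawn from the seven types (Permutation, Pinning, Projection, Linking, Expansion, Multiplication, Normalization) that, starting from $\{g\}$, produces $f$ through some chain of intermediate constraints $g=g_0,g_1,\ldots,g_m=f$. Similarly, there is a finite sequence $O'_1,O'_2,\ldots,O'_n$ that, starting from $\{h\}$, produces $g$ through $h=h_0,h_1,\ldots,h_n=g$. The concatenated sequence $O'_1,\ldots,O'_n,O_1,\ldots,O_m$, applied starting from $\{h\}$, first produces $g$ (together with all intermediate $h_i$'s), and then, because $g$ is now available in the working collection of constraints, the operations $O_1,\ldots,O_m$ can be applied exactly as before to yield $f$. (Having extra constraints around never interferes with an operation, since each $O_i$ only inspects the specific constraint(s) it is applied to.) Thus $f$ is T-constructed from $\{h\}$, giving $f\leq_{con}h$.

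I do not foresee any genuine obstacle; the only point requiring a moment of care is to observe that the second derivation, originally performed from the singleton $\{g\}$, remains valid when performed inside the larger working collection produced by the first derivation, which is immediate from the purely local nature of each of the seven operations.
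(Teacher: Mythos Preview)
Your proposal is correct. The paper itself omits the proof entirely as ``rather straightforward,'' so your argument---reflexivity from the zero-operations clause and transitivity from concatenating the two derivation sequences---is exactly the intended routine verification.
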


The usefulness of T-constructibility comes from the following lemma, 
which indicates the invariance of T-constructibility under AP-reductions.  
For readability, we place the proof of the lemma in Section \ref{sec:elimination}. 

\begin{lemma}\label{constructibility}
If $f\leq_{con}\GG$, then $\sharpcspstar(f,\FF)\APreduces \sharpcspstar(\GG, \FF)$  
for any set $\FF$ of constraints.  
\end{lemma}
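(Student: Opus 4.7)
The plan is to prove the lemma by induction on the number $n$ of operations used in a T-construction of $f$ from $\GG$. For the base case $n=0$, we have $f \in \GG$, so $\sharpcspstar(f,\FF) \APreduces \sharpcspstar(\GG,\FF)$ is immediate from Lemma~\ref{T-con-to-AP-reduction}. For the inductive step, it suffices to establish the following ``one-step'' principle: if $f$ is produced from a constraint $g$ (or from two constraints $g_1, g_2$) by a single application of one of the seven operations, then $\sharpcspstar(f,\FF) \APreduces \sharpcspstar(g,\FF)$ (respectively $\sharpcspstar(f,\FF) \APreduces \sharpcspstar(g_1, g_2,\FF)$). Composing the one-step reduction with the inductive hypothesis applied to the parent constraints and using the transitivity of $\APreduces$ then yields the claim.

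Each of the seven operations admits an explicit, essentially local transformation of an input constraint frame $\Omega$ into a frame $\Omega'$ over the reduced constraint set. Permutation and Linking are textual rewrites at the edges incident to each $f$-node. Pinning $f = g^{x_i=c}$ is simulated by attaching a fresh variable node $y$ to the $i$-th port of a $g$-node and pinning it via $\Delta_c(y)$, a constant unary constraint freely available on the right-hand side of $\sharpcspstar$. Projection $f = g^{x_i=*}$ uses the same construction but without the $\Delta_c$-pin, so the fresh $y$ (appearing in no other constraint) is summed over both values in the definition of $\csp_\Omega$, producing exactly $g^{x_i=*}$. Expansion is handled by attaching the all-ones unary $[1,1]$ to the newly introduced variable. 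Multiplication $f = g_1 \cdot g_2$ replaces each $f$-node with two nodes labeled $g_1$ and $g_2$ wired to the same variables. Finally, Normalization $f = \lambda g$ is handled by substituting $g$ for every occurrence of $f$ and scaling the oracle's output by $\lambda^m$, where $m$ is the number of $f$-nodes in $\Omega$ (a quantity known to the reducer).

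Two subtleties deserve comment. First, for every operation except normalization the new frame satisfies $\csp_{\Omega'} = \csp_\Omega$ exactly, so any $2^{\varepsilon}$-approximation returned by the oracle is directly a $2^{\varepsilon}$-approximation of $\csp_\Omega$. For normalization, the rescaling $w \mapsto \lambda^m w$ leaves both $|w/\csp_\Omega|$ and $\arg(w/\csp_\Omega)$ invariant, hence preserves approximation quality in the sense of Section~\ref{sec:randomized-scheme}; this is precisely why that definition is tailored to ratios of absolute values and arguments rather than additive errors. Second, the main obstacle is threading the induction through the multiplication case, because one would like to eliminate $g_1$ and $g_2$ from the constraint set simultaneously. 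The remedy is a two-stage composition: apply the inductive hypothesis to $g_1$ with enlarged auxiliary set $\{g_2\} \cup \FF$, obtaining $\sharpcspstar(g_1, g_2, \FF) \APreduces \sharpcspstar(\GG, g_2, \FF)$, and then apply the inductive hypothesis to $g_2$ with auxiliary set $\GG \cup \FF$, yielding $\sharpcspstar(g_2, \GG, \FF) \APreduces \sharpcspstar(\GG, \FF)$. Transitivity of $\APreduces$ chains these together and closes the induction.
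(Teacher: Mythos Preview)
Your proof is correct and follows essentially the same inductive strategy as the paper's own argument in Section~\ref{sec:elimination}: establish a one-step AP-reduction for each of the seven operations by an explicit local transformation of the constraint frame, and chain these by transitivity of $\APreduces$. Your treatment is in fact slightly more careful than the paper's in two places---using a \emph{fresh} auxiliary variable for each occurrence of $f$ in the projection case (a single shared variable, which the paper's wording suggests, would not give $\csp_{\Omega'}=\csp_{\Omega}$ when several $f$-nodes are present), and making the two-stage application of the induction hypothesis for multiplication explicit.
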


\section{Expressive Power of Unary Constraints}\label{sec:elementary-reduction} 

In the rest of this paper, we aim at proving our dichotomy theorem (Theorem \ref{dichotomy-theorem}). Its proof, which will appear in Section \ref{sec:main-theorem}, is comprised of several crucial ingredients. 
A starting point of the proof of the dichotomy theorem  
is a tractability lemma of \#CSP$^*$s---Lemma \ref{basic-case-FPC}---which states that $\sharpcspstar(\FF)$ is solvable in polynomial-time if  $\FF\subseteq\ED$.  A free use of arbitrary unary constraint plays an essential role in this section.


Notice that, whenever $\FF\subseteq \ED$,  $\sharpcspstar(\FF)$ coincides with $\sharpcsp(\FF)$ since $\UU\subseteq\ED$. As part of their dichotomy theorem stated in Section \ref{sec:introduction}, Cai \etalc~\cite{CLX09x,CLX09} demonstrated that, for any constraint set $\FF$ included in $\ED$ (denoted $\PP$ in \cite{CLX09x,CLX09}),  $\sharpcsp(\FF)$ can be solved in polynomial time. {}From this tractability result, Lemma \ref{basic-case-FPC} immediately follows. For completeness, nevertheless, we will briefly sketch an outline of the proof of the lemma. 

For the proof, we need to 
consider ``factorization'' of a given constraint $g$.  Let us recall that, 
 when $g$ is in $\ED$,  $g$ can be expressed as a multiplication of the form $g_1\cdot g_2\cdot\,\cdots\,\cdot g_n$, where each $g_i$ is one of $EQ$, $XOR$, and unary constraints. For convenience, we call the list $L=\{g_1,g_2,\ldots,g_n\}$ of all those factors  a {\em factor list} for $g$.  Since $g\leq_{con}L$ holds, it follows by Lemma \ref{constructibility}, that $\sharpcspstar(g,\FF)\APreduces \sharpcspstar(L,\FF)$ for any constraint set $\FF$. 


\begin{lemma}\label{basic-case-FPC}
For any constraint set $\FF$, if $\FF\subseteq \ED$, then $\sharpcspstar(\FF)$ is in $\fp_{\complex}$. 
\end{lemma}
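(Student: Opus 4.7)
The plan is to exploit the explicit product structure of constraints in $\ED$ in order to reduce the computation of $\csp_{\Omega}$ to a product over connected components of a small graph, each component contributing only two easily computable terms.

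First, I would decompose the instance. Given an input constraint frame $\Omega=(G,X|\FF',\pi)$ with $\FF'\subseteq\ED$, replace each constraint node $v$ labeled by some $g\in \FF'$ with the nodes corresponding to its factor list $L=\{g_1,\ldots,g_n\}$ (each $g_i\in\{EQ,XOR\}\cup\UU$), attaching each factor to the appropriate subset of the original variables that it acts upon. Since every $g\in\FF'\subseteq\ED$ admits such a factorization, and since the factorization can clearly be extracted in polynomial time from the description of $g$, the new instance $\Omega'$ has $\csp_{\Omega'}=\csp_{\Omega}$ but uses only the constraints $EQ$, $XOR$, and unary constraints. Next, I would absorb all unary constraints on each variable $x$ into a single combined unary $U_x(b)=\prod_{v\text{ unary on }x}g_v(b)$, so that every variable has exactly one unary attached to it.

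Second, I would analyze the binary $EQ/XOR$ part. Build the auxiliary graph $H$ on the variable set whose edges are labelled $+$ (from each $EQ$ constraint) or $-$ (from each $XOR$ constraint). A satisfying assignment must assign, within each connected component $C$ of $H$, values consistent with the signs along every path; equivalently, choosing any base variable $x_0\in C$ and a value $b\in\{0,1\}$ for it determines the value of every other $x\in C$ as $b\oplus s(x)$, where $s(x)\in\{0,1\}$ is the parity of $-$-edges along any fixed path from $x_0$ to $x$. A standard BFS computes $s$ and simultaneously checks that $H$ restricted to $C$ is "sign-consistent" (no cycle with an odd number of $-$-edges). If some component is inconsistent, then $\csp_{\Omega'}=0$ and we output $0$; otherwise every $EQ/XOR$ factor evaluates to $1$ on every satisfying assignment, so the only remaining contribution comes from the combined unaries.

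Finally, I would evaluate the sum. For each sign-consistent component $C$ with base variable $x_0$, the contribution is
\[
\Phi(C) \;=\; \sum_{b\in\{0,1\}} \prod_{x\in C} U_x(b\oplus s(x)),
\]
which is computable in time linear in $|C|$. The total count is $\csp_{\Omega} = \prod_{C} \Phi(C)$, taken over the connected components of $H$; isolated variables (those carrying no $EQ/XOR$ factor) are handled by the same formula with $C=\{x\}$, giving $U_x(0)+U_x(1)$. Each step—decomposition into factor lists, merging unaries, BFS per component, and evaluation of $\Phi(C)$—runs in polynomial time under our model of primitive arithmetic on complex numbers from Section \ref{sec:basic-definition}, so $\sharpcspstar(\FF)\in\fp_{\complex}$. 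The only subtlety (hardly an obstacle) is to perform the $XOR$ consistency check correctly and to verify that after the decomposition the $EQ/XOR$ constraints indeed contribute a multiplicative factor of $1$ on each surviving assignment; both are immediate from the definitions of $EQ=[1,0,1]$ and $XOR=[0,1,0]$.
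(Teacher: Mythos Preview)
The proposal is correct and follows essentially the same approach as the paper: factor each constraint into $EQ$, $XOR$, and unary pieces, then observe that each connected component of the resulting binary structure has at most two consistent assignments determined by the value at one base variable, so the total can be computed as a product of two-term sums over components. Your signed-graph formulation (with explicit parity labels and a BFS sign-consistency check) is slightly more explicit than the paper's version---which first contracts the $EQ$ edges and then handles $XOR$ components---but the underlying idea and the polynomial-time bound are identical.
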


\begin{proofsketch}
Consider any constraint frame $\Omega=(G,X|\FF',\pi)$ given as an input instance to $\sharpcspstar(\FF)$, where $G=(V_1|V_2,E)$ is an bipartite undirected graph and $\FF'$ is a finite subset of $\FF\cup\UU$. 
Since $\FF\subseteq \ED$ by the premise of the lemma, we assume that $\FF'\subseteq \ED$. 
Thus, it is possible to replace every constraint in $\Omega$ by its ``factors'' so that we can assume that $G$ is composed of nodes whose labels are limited to $EQ$, $XOR$, and unary constraints. 

We next modify the graph $G$ as follows. For each node $v$ labeled $EQ$, we merge into a single node any two nodes in $V_1$ that are adjacent to $v$ and we then delete $v$ as well as its incident edges. 
After this deletion, we assume that there is no node with the label $EQ$. 
Furthermore, if two nodes $v_1$ and $v_2$ both labeled $XOR$ are adjacent to the same nodes in $V_1$, then we delete the node $v_2$ and its incident edges. Hereafter, we assume that no such node pair of $v_1$ and $v_2$ exists.   
To compute $\csp_{\Omega}$, it suffices to consider 
every constraint frame $\Omega'$ obtained from $\Omega$ by restricting its scope within a connected component $G'=(V'_1|V'_2,E')$ consisting only of nodes whose labels are $XOR$ or unary constraints.  

Toward the value $\csp_{\Omega'}$, we first select a special node, say, $v$ in $V'_1$ in the following fashion. If there exists a cycle that contains all nodes in $V'_1$, then we choose any node in $V'_1$ as $v$. Otherwise, we choose a node $v\in V'_1$ that is not adjacent to any two nodes having the label $XOR$. Let $x$ be the ``variable'' label of this 
node $v$. It is enough to focus on a Boolean value of this particular variable $x$ since  Boolean values of the remaining  variables are automatically induced by the choice of the value of $x$. Hence, $\csp_{\Omega'}$ is computed simply by assigning only two values 
($0$ or $1$) to $x$.  
\end{proofsketch}


Henceforth, we will focus our 
attention on the remaining case where $\FF\nsubseteq \ED$.
As a basis to the subsequent analysis, the rest of this section is 
devoted to explore fundamental  
properties of binary constraints $f$ and it shows numerous complexity bounds of $\sharpcspstar(f)$'s. A key to our study is an expressive power 
of free unary constraints.  
We begin with a quick reminder that, since all unary constraints are free to use,  it obviously holds that $\sharpcspstar(\Delta_0,\Delta_1,\FF)\APequiv \sharpcspstar(\FF)$ for any set $\FF$ of constraints.
 
Earlier, in the proof of Lemma \ref{SAT-to-OR}, we have demonstrated the AP-equivalence between $\sharpcsp(OR)$ and $\sharpcsp(NAND)$ by a simple technique of swapping the roles of $0$ and $1$. However, this technique is not sufficient to prove that $\sharpcspstar(OR,\FF)\APequiv \sharpcspstar(NAND,\FF)$ for an arbitrary constraint set $\FF$. A use of 
unary constraint, on the contrary, helps us establish 
this stronger AP-equivalence.

\begin{proposition}\label{NAND-OR}
For any constraint set $\FF$, $\sharpcspstar(OR,\FF) \APequiv \sharpcspstar(NAND,\FF)$. 
\end{proposition}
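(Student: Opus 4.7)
The plan is to apply Lemma~\ref{constructibility} in both directions by establishing the two T-constructibility relations $OR\leq_{con}\{NAND\}\cup\UU$ and $NAND\leq_{con}\{OR\}\cup\UU$. Once these are in hand, Lemma~\ref{constructibility} immediately yields $\sharpcspstar(OR,\FF)\APreduces \sharpcspstar(NAND,\FF)$ and the reverse reduction for every constraint set $\FF$, because unary constraints are already absorbed into the $\sharpcspstar(\cdot)$ notation. This strategy has the pleasant feature that the auxiliary set $\FF$ plays no active role in the gadget construction; it simply rides along through the application of Lemma~\ref{constructibility}.

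For the first T-constructibility I would use the particular unary constraint $u=[1,-1]\in\UU$. Starting from two copies of $NAND$ together with $u$, expansion lifts $NAND(x,z)$, $NAND(y,z)$, and $u(z)$ to the common variable series $(x,y,z)$, after which multiplication produces the ternary function
\[
h(x,y,z) \;=\; NAND(x,z)\cdot NAND(y,z)\cdot u(z).
\]
A direct evaluation using $NAND(x,0)=1$ and $NAND(x,1)=1-x$ gives
\[
h^{z=*}(x,y) \;=\; 1\cdot 1\cdot 1 \;+\; (1-x)(1-y)\cdot (-1) \;=\; 1-(1-x)(1-y),
\]
which equals $OR(x,y)$ on all four Boolean inputs. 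Hence $OR\leq_{con}\{NAND,u\}\subseteq\{NAND\}\cup\UU$.

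The reverse T-constructibility is obtained symmetrically by choosing the unary $u'=[-1,1]\in\UU$ and projecting $z$ out of the ternary function $OR(x,z)\cdot OR(y,z)\cdot u'(z)$; using $OR(x,0)=x$ and $OR(x,1)=1$, the projection evaluates to $-xy+1=NAND(x,y)$, so $NAND\leq_{con}\{OR\}\cup\UU$. Combining both T-constructibility relations with Lemma~\ref{constructibility} delivers the claimed AP-equivalence.

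There is no serious obstacle in this argument; it is essentially a two-line gadget verification. The only real insight required is to pick the particular unary weight $\pm[1,-1]$ so that the projection produces exactly the cancellation needed to toggle between $OR$ and $NAND$. This choice is made possible precisely by the free availability of arbitrary complex-valued unary constraints built into the $\sharpcspstar$ framework, and it illustrates why the Boolean-valued bit-swap trick used in the proof of Lemma~\ref{SAT-to-OR} is not strong enough once $\FF$ is present but extending to $\UU$ is: over $\complex$, a single $\pm 1$ weight on an auxiliary projected variable is all it takes.
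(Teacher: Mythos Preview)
Your proof is correct and takes essentially the same approach as the paper: both construct $OR$ from $NAND$ (and vice versa) by projecting out an auxiliary variable weighted by the unary $[1,-1]$ (or $[-1,1]$), then invoke Lemma~\ref{constructibility}. The paper writes the gadget as $\sum_{x_3} NAND(x_1,x_3)\,NAND(x_3,x_2)\,u(x_3)$ with $u=[1,-1]$ (modulo an evident typo) and only spells out one direction, but your two-sided version with the explicit evaluations is the same construction.
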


\begin{proof}
We will show only one direction of $\sharpcspstar(OR,\FF)\APreduces \sharpcspstar(NAND,\FF)$, since the opposite direction is similarly proven. 
For brevity, let $f=NAND$ and set $u=[1,-1]$. 
Now, we claim that $OR\leq_{con}\{f,u\}$. 
For this purpose, let us define $g(x_1,x_2) = 
\sum_{x_3\in\{0,1\}}f(x_1,x_3)f(x_3,x_1)u(x_3)$. It is not difficult 
to show that $g$ equals $OR=[0,1,1]$. Hence, $OR$ is T-constructed from 
$\{f,u\}$, as requested. 
{}From this T-constructibility, by Lemma \ref{constructibility}, we obtain an AP-reduction from  
$\sharpcspstar(OR,\FF)$ to $\sharpcspstar(f,u,\FF)$. 
The last term obviously equals 
$\sharpcspstar(f,\FF)$ because $u$ is a unary constraint.   
Therefore, we conclude that $\sharpcspstar(OR,\FF)\APreduces \sharpcspstar(f,\FF)$. 
\end{proof}

Now, let us consider other binary constraints. 
Of them, our next target is constraints having the forms:  
$(0,a,b,1)$ or $(1,a,b,0)$ with $ab\neq0$. 

\begin{lemma}\label{0-b-c-case}
Let $a,b\in\complex$ with $ab\neq0$ and let $f$ be any constraint of the form:  either $(0,a,b,1)$ or $(1,a,b,0)$. 
For any constraint set $\FF$, the following statement holds:  
$\sharpcspstar(OR,\FF) \APreduces \sharpcspstar(f,\FF)$. 
\end{lemma}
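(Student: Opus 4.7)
The plan is to mimic the strategy of Proposition \ref{NAND-OR}: exhibit a T-construction of $OR$ (or $NAND$) from $f$ together with suitable unary constraints, and then invoke Lemma \ref{constructibility} plus the free availability of all unary constraints in $\sharpcspstar$. Since $ab\neq 0$, the scalars $1/a$ and $1/b$ are well-defined complex numbers, so the unary constraints we need are legitimately in $\UU$.

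First, I would handle the case $f(x_1,x_2)=(0,a,b,1)$. Note that the underlying relation of $f$ matches that of $OR$, so I only need to rescale the entries on inputs $(0,1)$ and $(1,0)$ by unary constraints attached to each argument. Set $u_1=[1/a,1]$ and $u_2=[1/b,1]$. Using the \emph{Expansion} operation, I extend the unary $u_1(x_1)$ to a binary constraint $u_1'(x_1,x_2)=u_1(x_1)$, and similarly obtain $u_2'(x_1,x_2)=u_2(x_2)$. Then \emph{Multiplication} produces
\[
h(x_1,x_2) \;=\; u_1'(x_1,x_2)\cdot u_2'(x_1,x_2)\cdot f(x_1,x_2).
\]
A direct check of the four values confirms $h(0,0)=0$, $h(0,1)=(1/a)\cdot 1\cdot a=1$, $h(1,0)=1\cdot(1/b)\cdot b=1$, and $h(1,1)=1$, so $h=OR$. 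Therefore $OR\leq_{con}\{f,u_1,u_2\}$, and Lemma \ref{constructibility} gives $\sharpcspstar(OR,\FF)\APreduces \sharpcspstar(f,u_1,u_2,\FF)$; since $u_1,u_2\in\UU$ are free, the right-hand side equals $\sharpcspstar(f,\FF)$.

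The second case $f(x_1,x_2)=(1,a,b,0)$ is handled analogously, but now the support of $f$ matches that of $NAND$. Choosing $u_1=[1,1/b]$ and $u_2=[1,1/a]$, the same Expansion--Multiplication recipe yields $h(0,0)=1$, $h(0,1)=1$, $h(1,0)=1$, and $h(1,1)=0$, i.e.\ $h=NAND$. Hence $NAND\leq_{con}\{f,u_1,u_2\}$, and Lemma \ref{constructibility} furnishes $\sharpcspstar(NAND,\FF)\APreduces \sharpcspstar(f,\FF)$. Composing with the equivalence $\sharpcspstar(OR,\FF)\APequiv\sharpcspstar(NAND,\FF)$ from Proposition \ref{NAND-OR} completes the reduction in this case as well.

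There is essentially no obstacle here: the proof is a transparent rescaling gadget, and the only thing to verify is that the two-by-two system determining $u_1,u_2$ is solvable, which holds precisely because $a,b\neq 0$. The conceptual point worth highlighting is that this is exactly where the ``free unary constraints'' assumption pays off—without it, rescaling individual arguments of $f$ would not be available as a gadget, and the reduction would have to take a detour through less elementary constructions.
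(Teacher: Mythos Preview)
Your proof is correct and follows the same overall strategy as the paper: T-construct $OR$ (respectively $NAND$) from $f$ together with free unary constraints, then invoke Lemma~\ref{constructibility} and Proposition~\ref{NAND-OR}. The execution differs in one small but pleasant way. The paper first \emph{symmetrizes}, forming $g(x_1,x_2)=f(x_1,x_2)f(x_2,x_1)u(x_1)u(x_2)$ with $u=[1,ab]$ to obtain $[0,a^2b^2,a^2b^2]$ and then normalizes; you instead observe that the support of $f$ already coincides with that of $OR$ (or $NAND$), so a single multiplication by $u_1(x_1)u_2(x_2)$ with suitably chosen $u_1,u_2$ lands exactly on the target without any symmetrization or normalization step. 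Your route is a touch more direct; the paper's symmetrization is a habit that pays off when the underlying relation is genuinely asymmetric, but here it is not needed.
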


\begin{proof}
Consider the case where $f=(0,a,b,1)$. Let $u=[1,ab]$ for brevity.  
We want to claim that $OR$ is T-constructed from the 
constraint set $\{f,u\}$. To show this claim, define $g(x_1,x_2) = f(x_1,x_2)f(x_2,x_1)u(x_1)u(x_2)$. A simple calculation leads us to the conclusion that  $g=[0,a^2b^2,a^2b^2]$. By normalizing $g$ appropriately, 
we immediately obtain another constraint 
$g'=[0,1,1]$, which clearly equals $OR$.  By the definition of $g$, 
it thus follows that $OR\leq_{con}\{f,u\}$. Lemma \ref{constructibility} then implies that $\sharpcspstar(OR,\FF)$ is AP-reducible to $\sharpcspstar(f,u,\FF)$. Since $u$ is unary, the last term coincides with  $\sharpcspstar(f,\FF)$, yielding the desired consequence of the lemma.  

For the case of $f=(1,a,b,0)$, a similar argument shows that 
$\sharpcspstar(NAND,\FF) \APreduces \sharpcspstar(f,\FF)$. 
By Proposition  \ref{NAND-OR}, 
it is possible to replace $NAND$ by $OR$, and therefore 
the desired consequence follows.  
\end{proof}

We will examine other binary constraints of the form $(0,a,b,0)$ with $ab\neq0$. 

\begin{lemma}\label{OR-XOR-bounds}
Let $a,b\in\complex$ with $ab\neq 0$.  
For any set $\FF$ of constraints,  
$\sharpcspstar(XOR,\FF)\APreduces \sharpcspstar((0,a,b,0),\FF)$ holds.
\end{lemma}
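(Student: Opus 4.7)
The plan is to show directly that $XOR \leq_{con} (0,a,b,0)$ by a short T-construction, and then invoke Lemma \ref{constructibility}. Since T-constructibility was designed precisely to translate such syntactic manipulations into AP-reductions, no additional reduction machinery is needed once the T-construction is in place.

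Concretely, let $f = (0,a,b,0)$ and consider the following three steps. First, apply the Permutation operation to $f$ to obtain $f'(x_1,x_2) = f(x_2,x_1) = (0,b,a,0)$. Second, apply Multiplication to the pair $f, f'$ on the common variable series $(x_1,x_2)$ to obtain
\[
(f \cdot f')(x_1,x_2) = f(x_1,x_2)\, f(x_2,x_1),
\]
whose four values are $(0, ab, ba, 0) = (0, ab, ab, 0)$, i.e.\ the symmetric constraint $ab \cdot [0,1,0]$. Third, since $ab \neq 0$, the scalar $1/(ab)$ is a non-zero complex number, so we may apply the Normalization operation with $\lambda = 1/(ab)$ and obtain the constraint $[0,1,0] = XOR$. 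This chain of operations shows $XOR \leq_{con} f$.

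Applying Lemma \ref{constructibility} with $\GG = \{f\}$ (and carrying along the auxiliary set $\FF$) immediately yields $\sharpcspstar(XOR,\FF) \APreduces \sharpcspstar(f,\FF)$, which is the conclusion of the lemma. The only potential subtlety is the non-vanishing of $ab$, needed twice: once to guarantee that $f \cdot f'$ is a non-zero multiple of $XOR$ (rather than identically zero), and once to justify the Normalization step; both are taken care of by the hypothesis $ab \neq 0$, so there is no real obstacle to overcome.
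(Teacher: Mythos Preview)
Your proof is correct and follows essentially the same approach as the paper: symmetrize $f$ by forming $g(x_1,x_2)=f(x_1,x_2)f(x_2,x_1)=[0,ab,0]$ and then normalize by $1/(ab)$ to obtain $XOR$, concluding via Lemma~\ref{constructibility}. If anything, you are more explicit than the paper in naming the individual T-construction operations (Permutation, Multiplication, Normalization) that realize this symmetrization.
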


\begin{proof}
Let $f=(0,a,b,0)$ with $ab\neq0$. Now, we ``symmetrize'' $f$ by setting $g(x_1,x_2)= f(x_1,x_2)f(x_2,x_1)$, which yields the equation   $g=[0,ab,0]$. We normalize $g$ and then obtain $[0,1,0]$, 
which is exactly $XOR$. We thus conclude that $XOR\leq_{con} f$, 
implying that 
$\sharpcspstar(XOR,\FF)$ is AP-reducible to $\sharpcspstar(f,\FF)$ 
by Lemma \ref{constructibility}.
\end{proof}

Next, we move our interest to the special relation 
$Implies$. Unlike the case of unweighted Boolean \#CSPs \cite{DGJ09}, where it remains open whether $\sharpcsp(Implies)$ is AP-equivalent to $\sharpcsp(OR)$, a heavy use of non-zero unary constraints leads to a surprising AP-equivalence between $\sharpcspstar(Implies,\FF)$ and $\sharpcspstar(OR,\FF)$ for any constraint set $\FF$.  

\begin{proposition}\label{implies-vs-OR}
For any constraint set $\FF$, it holds that $\sharpcspstar(Implies,\FF) \APequiv \sharpcspstar(OR,\FF)$. 
\end{proposition}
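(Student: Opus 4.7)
The plan is to establish both directions of the AP-equivalence via T-constructibility and then invoke Lemma \ref{constructibility}; since $\sharpcspstar$ admits arbitrary unary constraints for free, it suffices to prove $OR \leq_{con} \{Implies\}\cup\UU$ and $Implies \leq_{con} \{OR\}\cup\UU$.

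For the first direction, I would pick the non-zero unary $u=[-1,1]$ and form
\[
g(x,y) \;=\; \sum_{z\in\{0,1\}} Implies(x,z)\,Implies(y,z)\,u(z),
\]
which is obtainable from $Implies$ and $u$ via expansion (to the common variable series $(x,y,z)$), multiplication, and projection on $z$. A direct case analysis gives $g(0,0) = u(0)+u(1)=0$ while $g(0,1)=g(1,0)=g(1,1)=u(1)=1$, so $g=OR$ as desired.

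For the reverse direction I would proceed in three small steps. First, by the construction used in the proof of Proposition \ref{NAND-OR}, applied with the roles of $OR$ and $NAND$ swapped and the same witness $[1,-1]$, one obtains $NAND \leq_{con}\{OR\}\cup\UU$. Second, the multiplication operation gives $(OR\cdot NAND)(x,y) = OR(x,y)\,NAND(x,y) = [0,1,0] = XOR$, hence $XOR \leq_{con}\{OR\}\cup\UU$. Third, setting
\[
h(x,y) \;=\; \sum_{z\in\{0,1\}} XOR(x,z)\,OR(z,y),
\]
a short case analysis confirms $h(1,0)=0$ while $h$ equals $1$ on the remaining three inputs, so $h=Implies$. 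Transitivity of $\leq_{con}$ (Lemma \ref{AP-transitivity}) then delivers $Implies \leq_{con}\{OR\}\cup\UU$.

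Combining both T-constructions with Lemma \ref{constructibility} yields the AP-equivalence $\sharpcspstar(Implies,\FF) \APequiv \sharpcspstar(OR,\FF)$. The main subtlety is picking the sign-flipped unary $[-1,1]$ in the first construction: it is precisely what relocates the unique zero of $Implies$ (at $(1,0)$) to the $(0,0)$ entry needed for $OR$ after the projection over $z$. Everything else is routine unwinding of T-constructibility operations.
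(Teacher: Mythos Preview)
Your argument is correct and considerably more economical than the paper's. For $OR\leq_{con}\{Implies\}\cup\UU$, your single projection with $u=[-1,1]$ does the job in one stroke; the paper instead builds $EQ_3$ from $Implies$, then an auxiliary symmetric ternary $h_3=[2,1,1,1]$, pins to $h_2=[2,1,1]$ via $\Delta_0$, and finally reaches $OR$ through two further projections using the unary witnesses $[1,-8]$ and $[49,24]$. For the reverse direction your route $OR\Rightarrow NAND\Rightarrow XOR\Rightarrow Implies$ (the last step via $h(x,y)=\sum_z XOR(x,z)\,OR(z,y)$) is again cleaner than the paper's chain, which passes through $g=[1,1,2]$ and three specially chosen unaries $[1,-1/2]$, $[2,-2/3]$, $[1,-1/8]$ to reach $(1,1,0,1)$. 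What your approach buys is transparency: each step has an evident Boolean meaning (e.g.\ $XOR=OR\cdot NAND$, and summing $OR(z,y)$ against $XOR(x,z)$ just substitutes $z=\bar x$), whereas the paper's witnesses look ad hoc. One tiny remark: when you swap the roles of $OR$ and $NAND$ in Proposition~\ref{NAND-OR} with the literal witness $[1,-1]$, the resulting binary is $-NAND$ rather than $NAND$, so a normalization step is needed (or equivalently use $[-1,1]$); this is harmless but worth stating.
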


This proposition directly follows from the lemma stated below together with Lemma \ref{constructibility}, which translates T-constructibility into AP-reducibility. 

\begin{lemma}
\begin{enumerate}
\item There exists a finite set $\GG\subseteq\UU$ such that 
$Implies\leq_{con}\GG\cup \{OR\}$. 
\item There exists a finite set $\GG\subseteq \UU$ such that 
$OR \leq_{con} \GG\cup \{Implies\}$.   
\end{enumerate}
\end{lemma}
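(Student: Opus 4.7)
The plan is to prove each direction by an explicit T-construction and then invoke Lemma~\ref{constructibility}. Throughout, I identify a binary constraint $f$ with the $2\times 2$ matrix whose $(i,j)$-entry is $f(i,j)$, and a unary $u$ with the diagonal matrix $\mathrm{diag}(u(0),u(1))$; then the composite of \textsc{Expansion}, \textsc{Multiplication}, and \textsc{Projection} over a shared interior variable $z$ applied to a chain $g_1(x,z)\,u(z)\,g_2(z,y)$ is precisely the matrix product $g_1\cdot\mathrm{diag}(u)\cdot g_2$. \textsc{Permutation} lets me reorder variables, and \textsc{Normalization} absorbs overall scalars.

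For Part~(2) I set $u=[-1,1]$ and $\GG=\{u\}$, and form
\[
g(x_1,x_2)\;=\;\sum_{z\in\{0,1\}}Implies(x_1,z)\cdot u(z)\cdot Implies(x_2,z),
\]
T-constructed by expanding one copy of $Implies$ to $(x_1,x_2,z)$, expanding and permuting the other to express $Implies(x_2,z)$, multiplying in $u(z)$, and projecting over $z$. Direct evaluation gives $g(0,0)=1\cdot(-1)\cdot 1+1\cdot 1\cdot 1=0$ and $g(0,1)=g(1,0)=g(1,1)=1$, so $g=[0,1,1]=OR$, establishing $OR\leq_{con}\GG\cup\{Implies\}$.

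Part~(1) is the harder direction because $OR$ is symmetric while $Implies$ is not. No gadget of the form $v(x_1)w(x_2)\sum_z OR(x_1,z)\,u(z)\,OR(z,x_2)$ suffices: its matrix is $V\cdot(OR\cdot\mathrm{diag}(u)\cdot OR)\cdot W$, and the interior block $\bigl(\begin{smallmatrix}u(1) & u(1)\\ u(1) & u(0)+u(1)\end{smallmatrix}\bigr)$ has its only possible zero on the diagonal, never at position $(1,0)$; outer scalings cannot convert a nonzero entry into zero. I therefore use a depth-two chain
\[
g(x_1,x_2)\;=\;v(x_1)\,w(x_2)\sum_{z_1,z_2}OR(x_1,z_1)\,u_1(z_1)\,OR(z_1,z_2)\,u_2(z_2)\,OR(z_2,x_2)
\]
with $u_1=[-1,1]$, $u_2=[1,1]$, $v=[1,2]$, and $w=[2,1]$. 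The cancellation $u_1(0)+u_1(1)=0$ forces $OR\cdot\mathrm{diag}(u_1)\cdot OR=\bigl(\begin{smallmatrix}1 & 1\\ 1 & 0\end{smallmatrix}\bigr)$, one further multiplication by $\mathrm{diag}(u_2)\cdot OR=OR$ on the right moves the zero to position $(1,0)$, producing $\bigl(\begin{smallmatrix}1 & 2\\ 0 & 1\end{smallmatrix}\bigr)$, and the outer dressings $V=\mathrm{diag}(1,2)$ and $W=\mathrm{diag}(2,1)$ yield $\bigl(\begin{smallmatrix}2 & 2\\ 0 & 2\end{smallmatrix}\bigr)=2\cdot Implies$. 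A final \textsc{Normalization} by $\lambda=1/2$ completes the construction with $\GG=\{u_1,u_2,v,w\}\subseteq\UU$, giving $Implies\leq_{con}\GG\cup\{OR\}$.

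The main obstacle is conceptual: since \textsc{Multiplication} and outer scaling are entry-wise, the asymmetric zero of $Implies$ at $(1,0)$ must arise through an additive cancellation inside the \textsc{Projection} sum, and this cancellation must then survive one further multiplication to land in the correct position. This rules out any single-interior-variable gadget and forces the depth-two structure above. Once one notices that $u_1=[-1,1]$ produces the crucial cancellation at the $(1,1)$-entry of $OR\cdot\mathrm{diag}(u_1)\cdot OR$, the remaining task of choosing $u_2, v, w$ so that the three surviving entries become equal (up to the final $\lambda$) reduces to a short linear computation.
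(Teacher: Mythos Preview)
Your proof is correct, and both constructions are considerably more economical than the paper's. In Part~(2) the paper first builds $EQ_3$ from a cycle of three $Implies$'s, then uses it together with three more copies of $Implies$ to obtain $h_3=[2,1,1,1]$, pins with $\Delta_0$ to get $h_2=[2,1,1]$, and finally runs a further gadget involving two rather exotic unaries $[1,-8]$ and $[49,24]$ to reach $OR$. Your single projection $\sum_z Implies(x_1,z)\,[-1,1](z)\,Implies(x_2,z)$ achieves the same in one step with one unary. In Part~(1) the paper again proceeds in several stages: first $g=[1,1,2]$ from a bare $OR$-chain, then a triangle gadget $\sum_z g(x,z)g(z,y)g(y,z)u_1(z)$ with $u_1=[1,-1/2]$ to create the off-diagonal zero, followed by two more unary dressings. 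Your matrix-chain viewpoint makes the mechanism transparent: the cancellation $u_1(0)+u_1(1)=0$ plants a zero at the $(1,1)$ entry of $OR\cdot\mathrm{diag}(u_1)\cdot OR$, one further right-multiplication by $OR$ shifts it to $(1,0)$, and diagonal scalings equalise the remaining entries. (Incidentally, your $u_2=[1,1]$ is the identity and can be dropped, so $\GG=\{[-1,1],[1,2],[2,1]\}$ already suffices.) What your approach buys is a reusable template---binary gadgets as $2\times2$ matrix products with diagonal insertions---that both shortens the argument and explains \emph{why} depth two is needed for the asymmetric direction; the paper's ad~hoc constructions obscure this.
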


\begin{proof}
For ease of the description that follows, we set $f= Implies$.

(1) Here, we intend to claim the T-constructibility of $f$ from the set $\{OR,u_1,u_2,u_3\}$, where  $u_1=[1,-1/2]$, $u_2=[2,-2/3]$, and $u_3=[1,-1/8]$. 
We will prove this claim by building a series of 
{\em T-constructible} constraints.  
First, we define $g(x,y) = \sum_{z\in\{0,1\}} OR(x,z) OR(z,y)$. This implies that $g=[1,1,2]$ and $g\leq_{con}OR$. Next, let $h(x,y)$ be $\sum_{z\in\{0,1\}} g(x,z)g(z,y)g(y,z) u_1(z)$, which equals  $(1/2,-1,0,-3)$. 
Clearly, it holds that $h\leq_{con}\{g,u_1\}$. 
Moreover, let $h'(x,y) = h(x,y) u_2(x)$, implying $h'=(1,-2,0,-2)$. 
Finally, we set 
$p(x,y) = \sum_{z\in\{0,1\}} h'(x,z) h'(z,y) h'(y,z) u_3(z)$. 
A simple calculation shows that $p=(1,1,0,1)$. 
Since $p$ is T-constructible from $\{h',u_3\}$, we then obtain $f\leq_{con}\{OR,u_1,u_2,u_3\}$, as requested. 

(2) We will show that $OR$ is T-constructed from the set $\{f,\Delta_0,u_1,u_2\}$, where $u_1=[1,-8]$ and $u_2=[49,24]$. 
To prove this claim, we introduce the following two useful constraints: $h_2=[2,1,1]$ and $h_3=[2,1,1,1]$. In what follows, we will prove 
(i) $OR \leq_{con} \{h_2,u_1,u_2\}$ and (ii) $h_2\leq_{con}\{f,\Delta_0\}$. {}From Statements (i) and (ii), it immediately follows by Lemma \ref{AP-transitivity} that $OR\leq_{con} \{f, \Delta_0, u_1, u_2\}$, as requested. 

(i) We start with defining 
$g(x,y) = \sum_{z\in\{0,1\}} h_2(x,z) h_2(z,y) h_2(y,z) u_1(z)$.    
It is easy to check that $g=(0,-6,-4,-7)$. With this $g$, we 
define $s(x,y) = g(x,y)g(y,x)u_2(x)u_2(y)$, which equals 
$(0,a,a,a)$, where $a = 28224$. 
By normalizing $s$ properly, we immediately obtain the constraint $OR$. 
Therefore, it holds 
that $OR\leq_{con} \{h_2,u_1,u_2\}$. 

(ii) We note that $EQ_3$ is T-constructed from $f$ because  
$EQ_3(x,y,z)$ equals $f(x,y) f(y,z) f(z,x)$. 
Using $EQ_3$, we define
\[
p(x_1,y_1,z_1) = \sum_{x_2,y_2,z_2\in\{0,1\}} EQ_3(x_2,y_2,z_2) f(x_1,x_2) f(y_1,y_2) f(z_1,z_2).
\]
This definition implies that  
$p = [2,1,1,1]$, and thus $p$ equals $h_3$. This means that  
$h_3 \leq_{con} \{EQ_3,Implies\}$. 
Since $h_2 = h_3^{x_1=0}$, $h_2$ is T-constructed from $\{h_3,\Delta_0\}$. 
{}From all the obtained results, we easily conclude that $h_2 \leq_{con}  \{f,\Delta_0\}$. 
\end{proof}

Next, we target constraints of the forms $(1,a,0,b)$ and $(1,0,a,b)$ with $ab\neq0$. 

\begin{lemma}\label{arity-2-implies-reduction}
Let $f=(1,a,0,b)$ with $a,b\in \complex$. If $ab\neq 0$, then  $\sharpcspstar(OR,\FF)\APreduces \sharpcspstar(f,\FF)$  holds 
for any constraint set $\FF$. 
By permutation as well as normalization, $(1,0,a,b)$ also yields the same consequence. 
\end{lemma}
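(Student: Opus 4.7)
The plan is to reduce this lemma to the already-proved AP-equivalence between $\sharpcspstar(Implies,\FF)$ and $\sharpcspstar(OR,\FF)$ from Proposition \ref{implies-vs-OR}, by showing that $Implies$ is T-constructible from $f$ together with two carefully chosen unary constraints. The key observation is that the underlying relation of $f=(1,a,0,b)$ is exactly the support of $Implies$, namely $\{(0,0),(0,1),(1,1)\}$, since $ab\neq 0$ guarantees that the three listed output values are nonzero while the $(1,0)$-entry is forced to $0$. Hence we only need to rescale the three nonzero entries to equal $1$.

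First I would search for unary constraints $u_1,u_2\in\UU$ of the form $u_1=[1,q]$ and $u_2=[1,s]$ such that the multiplication $g(x_1,x_2)=f(x_1,x_2)u_1(x_1)u_2(x_2)$ yields $Implies(x_1,x_2)=(1,1,0,1)$. Writing out the four entries of $g$ gives $g(0,0)=1$, $g(0,1)=as$, $g(1,0)=0$, $g(1,1)=bqs$. The resulting system $as=1$ and $bqs=1$ has the unique solution $s=1/a$ and $q=a/b$, both well defined because $ab\neq 0$. Thus with $u_1=[1,a/b]$ and $u_2=[1,1/a]$, the Multiplication operation of T-constructibility produces $Implies$ from $\{f,u_1,u_2\}$, that is, $Implies\leq_{con}\{f,u_1,u_2\}$.

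Next, invoking Lemma \ref{constructibility}, we obtain the AP-reduction $\sharpcspstar(Implies,\FF)\APreduces \sharpcspstar(f,u_1,u_2,\FF)$, and because $u_1,u_2$ are unary and free in $\sharpcspstar$, the right-hand side equals $\sharpcspstar(f,\FF)$. Combining this with Proposition \ref{implies-vs-OR}, which asserts $\sharpcspstar(OR,\FF)\APequiv \sharpcspstar(Implies,\FF)$, yields $\sharpcspstar(OR,\FF)\APreduces \sharpcspstar(f,\FF)$, as desired.

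Finally, for $f'=(1,0,a,b)$, observe that $f'(x_1,x_2)=f(x_2,x_1)$, so $f'$ is obtained from $f$ by the Permutation operation, giving $f\leq_{con}f'$ (and vice versa). Applying the argument above to $f'$ in place of $f$ (the entries of $u_1$ and $u_2$ simply swap roles) delivers the same AP-reduction. There is no real obstacle here; the only subtle point is ensuring that the needed unary weights are actually well defined, which is precisely what the hypothesis $ab\neq 0$ supplies.
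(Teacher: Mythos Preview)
Your proof is correct and in fact more direct than the paper's. Both arguments aim to show $Implies\leq_{con}\{f\}\cup\UU$ and then invoke Proposition~\ref{implies-vs-OR} and Lemma~\ref{constructibility}. The paper, however, proceeds in two stages: it first multiplies by $u=[1,a/b]$ to obtain $g=(1,a,0,a)$, and then builds $Implies$ via a projection-type gadget $h(x_1,x_2)=\sum_{x_3}g(x_1,x_3)g(x_3,x_2)g(x_2,x_3)v(x_3)$ with $v=[1,1/a^3]$. Your approach skips the projection entirely: since the underlying relation of $f$ already equals $R_{Implies}$, you simply rescale the three nonzero entries by multiplying with $u_1(x_1)u_2(x_2)$, solving a $2\times 2$ system for $q,s$. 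This is more elementary (it uses only Expansion and Multiplication, never Projection) and makes the role of the hypothesis $ab\neq 0$ completely transparent. The paper's detour through a summation gadget is unnecessary here, though it is the kind of construction that becomes unavoidable in later lemmas where the target support differs from that of the given constraint.
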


\begin{proof}
Let $f=(1,a,0,b)$ with $ab\neq0$. 
In this proof, we use two unary constraints: 
$u=[1,a/b]$ and $v=[1,1/a^3]$. With a help of Proposition 
\ref{implies-vs-OR}, our goal is now set to show the T-constructibility of  $Implies$ from $\{f,u,v\}$. Firstly, by defining  $g(x_1,x_2) = f(x_1,x_2)u(x_1)$,  we obtain $g=(1,a,0,a)$. This implies that $g\leq_{con}\{f,u\}$. Secondly, we define $h(x_1,x_2) = \sum_{x_3\in\{0,1\}} g(x_1,x_3)g(x_3,x_2)g(x_2,x_3)v(x_3)$. A simple calculation shows that $h=(1,1,0,1)$.  This concludes that $Implies$ is T-constructed from $\{g,v\}$. By combining those two results, we obtain $Implies\leq_{con}\{f,u,v\}$ by Lemma \ref{AP-transitivity}. The desired result then follows immediately because $u$ and $v$ are unary constraints. 
\end{proof}

Now, we consider the case of constraints $f$ having the form $(1,x,y,z)$ 
and demonstrate the hardness of $\sharpcspstar(f)$. For complex-valued constraints $f$, this case is quite special because, if they are Boolean, they  all become $[1,1,1]$ and fall into $\DG$. 

\begin{proposition}\label{1-x-y-z-case}
Let $x,y,z\in\complex$. If both $xyz\neq0$ and $xy\neq z$ hold, 
then $\sharpcspstar(OR,\FF) \APreduces 
\sharpcspstar((1,x,y,z),\FF)$ 
for any set $\FF$ of constraints.
\end{proposition}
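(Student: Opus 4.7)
The plan is to use free unaries to convert $f=(1,x,y,z)$ into the symmetric form $g=[1,1,w]$ where $w=z/(xy)$, and then to build from $g$ a binary constraint that falls under the hypotheses of Lemma \ref{0-b-c-case} via an explicit T-construction from Section \ref{sec:constructibility}. Because the naive gadget degenerates precisely when $w=-1$, the argument bifurcates into this ``Hadamard'' subcase and everything else.

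First I would multiply $f$ by the unaries $u_1=[1,1/y]$ and $u_2=[1,1/x]$ to obtain $g(x_1,x_2)=f(x_1,x_2)u_1(x_1)u_2(x_2)=(1,1,1,z/(xy))$. The hypotheses $xyz\neq 0$ and $xy\neq z$ say exactly that $w:=z/(xy)$ lies in $\complex\setminus\{0,1\}$, and $g\leq_{con}\{f\}\cup\UU$ by construction. By Lemmas \ref{AP-transitivity} and \ref{constructibility}, it then suffices to establish $\sharpcspstar(OR,\FF)\APreduces \sharpcspstar(g,\FF)$.

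In the main case $w\neq-1$, I would form the gadget
\[
 h(x_1,x_2)\;=\;\sum_{x_3\in\{0,1\}} g(x_1,x_3)\,g(x_2,x_3)\,u(x_3)
\]
with the free unary $u=[1,-1]$; this is T-constructed from $g$ via expansion, permutation, multiplication, and projection. A direct calculation gives $h=(1-w)(0,1,1,1+w)$, so after normalization by $1/(1-w^2)$ (which uses $w\neq\pm 1$) one arrives at a constraint of the form $(0,a,b,1)$ with $a=b=1/(1+w)\neq 0$. Lemma \ref{0-b-c-case} then yields $\sharpcspstar(OR,\FF)\APreduces \sharpcspstar(h,\FF)$, and the chain of T-constructibilities composes to $\sharpcspstar(OR,\FF)\APreduces \sharpcspstar(f,\FF)$.

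The remaining Hadamard case $w=-1$ gives $g=[1,1,-1]$; here the gadget above collapses to $(0,1,1,0)$, a scaling of $XOR$, which is too weak. To sidestep this, I would instead use the chain gadget $h(x_1,x_2)=\sum_{x_3} g(x_1,x_3)g(x_3,x_2)u(x_3)$ with the asymmetric unary $u=[1,2]$, obtaining $h=(3,-1,-1,3)$; multiplying by the unaries $[1,-3]$ at each position and normalizing then yields $g'=[1,1,9]$. Since the new parameter $w'=9$ avoids both $\{0,1\}$ and $-1$, the previous case applies to $g'$, and $g'\leq_{con}\{g\}\cup\UU\leq_{con}\{f\}\cup\UU$ completes the reduction. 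The main obstacle is precisely this Hadamard value $w=-1$: essentially every symmetric one-step gadget from $[1,1,-1]$ produces a constraint lying in $\ED$, so the key move is to break the symmetry with an asymmetric unary like $[1,2]$ inside a chain gadget, thereby shifting $w$ away from $-1$ before invoking the main construction.
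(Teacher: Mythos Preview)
Your proof is correct and, while it shares the same case split as the paper (the generic case versus the exceptional case $z=-xy$, i.e.\ $w=-1$), it takes a cleaner route in each branch. The paper first treats $xy\neq\pm z$ in Lemma~\ref{1-x-y-z-implies} by building, directly from $f=(1,x,y,z)$, a triangle gadget with a carefully chosen unary $[1,-1/(xz)]$ to produce a constraint of shape $(\ast,\ast,0,\ast)$, then invokes Lemma~\ref{arity-2-implies-reduction}, which in turn passes through $Implies$ and the rather involved Proposition~\ref{implies-vs-OR}. Your preliminary symmetrization to $g=[1,1,w]$ lets you use a much simpler two-copy gadget with $u=[1,-1]$ to reach the shape $(0,a,b,1)$, after which Lemma~\ref{0-b-c-case} applies immediately without any detour through $Implies$. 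For the degenerate case $w=-1$, both arguments ``perturb and retry'': the paper's Lemma~\ref{1-x-y-xy-OR} uses a chain gadget with $[1,2/(xy)]$ to land on $(1,-x/3,-y/3,xy)$ and then appeals to Lemma~\ref{1-x-y-z-implies}, whereas you use a chain gadget with $[1,2]$ on the already symmetrized $[1,1,-1]$ to reach $[1,1,9]$ and then rerun your own main case. The net effect is that your argument is self-contained modulo the elementary Lemma~\ref{0-b-c-case}, avoiding the heavier machinery of Proposition~\ref{implies-vs-OR}.
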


When $xy=z$, on the contrary, the constraint $f=(1,x,y,z)$ becomes degenerate, since $f(x_1,x_2)$ equals $[1,y](x_1)\cdot [1,x](x_2)$.  Proposition \ref{1-x-y-z-case} is a direct consequence of two lemmas---Lemmas \ref{1-x-y-z-implies} and \ref{1-x-y-xy-OR}---each of which handles a different case.   
Let us begin with the case where $xyz\neq0$ and $xy\neq\pm{z}$. 
 
\begin{lemma}\label{1-x-y-z-implies}
Let $x,y,z\in\complex$. If $xyz\neq0$ and $xy\neq \pm{z}$, 
then $\sharpcspstar(OR,\FF) \APreduces 
\sharpcspstar((1,x,y,z),\FF)$ 
for any set $\FF$ of constraints.
\end{lemma}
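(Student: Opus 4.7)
The plan is to mimic the gadget-plus-normalization recipe that proved Lemma \ref{0-b-c-case} and Lemma \ref{arity-2-implies-reduction}. Write $f = (1,x,y,z)$. The strategy is to T-construct from $f$, together with a single carefully chosen unary constraint, a binary constraint $g$ of the shape $(0,a,b,c)$ with $abc \neq 0$, so that after one normalization step we land in the hypothesis of Lemma \ref{0-b-c-case}; combining this with Lemma \ref{constructibility} and the transitivity of $\leq_{con}$ (Lemma \ref{AP-transitivity}) then delivers $\sharpcspstar(OR,\FF)\APreduces\sharpcspstar(f,\FF)$.

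Concretely, I would take the unary constraint $u=[-xy,1]$, which is well-defined because $xy\neq 0$, and form
\[
g(x_1,x_2) \;=\; \sum_{x_3\in\{0,1\}} f(x_1,x_3)\,f(x_3,x_2)\,u(x_3).
\]
This $g$ is T-constructed from $\{f,u\}$ using \textsc{Expansion}, \textsc{Permutation}, \textsc{Multiplication}, and a final \textsc{Projection} on $x_3$, exactly in the style of the summation gadgets appearing in the proofs of Proposition \ref{NAND-OR} and Lemma \ref{arity-2-implies-reduction}.

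A short direct calculation then gives
\[
g \;=\; \bigl(\,0,\; x(z-xy),\; y(z-xy),\; (z-xy)(z+xy)\,\bigr).
\]
The entry $g(0,0)$ vanishes by the very choice of $u$ (the coefficient $-xy$ was selected precisely to cancel the contribution of the $x_3=1$ branch), while the hypotheses $xyz\neq 0$ and $xy\neq \pm z$ guarantee that the three remaining entries are all nonzero: $x,y\neq 0$ together with $z\neq xy$ handles the middle two entries, and $z\neq -xy$ handles the last one. Normalizing by the nonzero constant $\lambda = 1/((z-xy)(z+xy))$ produces a constraint $g' = (0,a,b,1)$ with $ab\neq 0$, so by Lemma \ref{0-b-c-case} we have $\sharpcspstar(OR,\FF)\APreduces\sharpcspstar(g',\FF)$. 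Since $g'\leq_{con}\{f,u\}$ by construction, Lemma \ref{constructibility} yields $\sharpcspstar(g',\FF)\APreduces\sharpcspstar(f,u,\FF)=\sharpcspstar(f,\FF)$ (the last equality because $u$ is unary and hence free), and composing the two AP-reductions finishes the argument.

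The only genuinely creative step is the discovery of the unary weight $u=[-xy,1]$; everything else is routine bookkeeping inside the T-constructibility calculus. The two excluded equalities $xy=\pm z$ in the statement are exactly the degeneracies under which this specific gadget collapses (namely $g\equiv 0$ when $xy=z$, and $g(1,1)=0$ when $xy=-z$), confirming that the hypotheses match what the construction actually needs. The remaining case $xy=-z$ is not covered here and will be absorbed into the companion Lemma \ref{1-x-y-xy-OR} that completes the proof of Proposition \ref{1-x-y-z-case}.
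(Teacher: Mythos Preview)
Your argument is correct. The calculation of $g$ checks out entry by entry, the hypotheses $xyz\neq 0$ and $xy\neq\pm z$ are precisely what make the three nonzero entries of $g$ nonzero, and the appeal to Lemma~\ref{0-b-c-case} after normalization is legitimate.

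Your route differs from the paper's, though in the same spirit. The paper uses the three-copy gadget $g(x_1,x_2)=\sum_{x_3} f(x_1,x_3)f(x_3,x_2)f(x_2,x_3)u(x_3)$ with $u=[1,-1/(xz)]$, which kills the $(1,0)$ entry and produces a constraint of shape $(*,*,0,*)$; it then invokes Lemma~\ref{arity-2-implies-reduction}. You instead use a two-copy gadget with $u=[-xy,1]$ to kill the $(0,0)$ entry, landing on shape $(0,*,*,*)$ and invoking Lemma~\ref{0-b-c-case}. Your construction is marginally simpler (one fewer factor of $f$) and more symmetric in $x$ and $y$; the paper's version routes through $Implies$ via Lemma~\ref{arity-2-implies-reduction}, which is itself proved by another summation gadget, so your path to $OR$ is in fact one gadget shorter. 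Both constructions degenerate at exactly the same two exceptional values $xy=\pm z$, so neither buys any extra generality.
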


\begin{proof}
Let $f=(1,x,y,z)$ with $xyz\neq0$.  
Assuming that $xy\neq \pm{z}$, we first want to show that 
 $\sharpcspstar(Implies,\FF)$ is AP-reducible to $\sharpcspstar(f,\FF)$. 
With an unknown variable $a$, set $u=[1,a]$.  Now, we define 
$g(x_1,x_2) = \sum_{x_3\in\{0,1\}}f(x_1,x_3)f(x_3,x_2)f(x_2,x_3)u(x_3)$. 
A simple calculation provides an equation $g = (1+ax^2y,x(y+az^2),y(1+axz),xy^2+az^3)$. 
By setting $a$ to be $-1/xz$,  
we obtain $g=(1-xy/z,x(y-z/x),0,xy^2-z^2/x)$, which implies  
$g \leq_{con} \{f,u\}$. It thus follows that    
$\sharpcspstar(g,\FF)\APreduces \sharpcspstar(f,\FF)$ by Lemma \ref{constructibility}. 
Note that three entries in $g$ are non-zero, since $xy\neq z$ and $x^2y^2\neq z^2$. 
Apply Lemma \ref{arity-2-implies-reduction} to a normalized $g$. 
As an immediate consequence, we obtain an AP-reduction from  
$\sharpcspstar(OR,\FF)$ to $\sharpcspstar(g,\FF)$. 
The final result is obtained by combining the two AP-reductions. 
\end{proof}

Finally, we consider the remaining case where $xy=-z$; that is, constraints of the form $(1,x,y,-xy)$, which is excluded in the previous lemma. 

\begin{lemma}\label{1-x-y-xy-OR}
Let $x,y\in\complex$. If $xy\neq0$, then $\sharpcspstar(OR,\FF)\APreduces \sharpcspstar((1,x,y,-xy),\FF)$ for any constraint set $\FF$. 
\end{lemma}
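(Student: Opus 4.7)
The plan is to reduce this case to Lemma \ref{1-x-y-z-implies}, whose hypothesis $xy \neq \pm z$ is exactly what $f = (1,x,y,-xy)$ violates (with $z = -xy$). I will perturb $f$ by a single unary constraint inserted on an internal edge so as to break the forbidden equality, then chain the resulting AP-reductions through Lemma \ref{constructibility}. This matches the pattern of Lemma \ref{arity-2-implies-reduction} and Lemma \ref{1-x-y-z-implies} themselves: a short projection gadget followed by an appeal to an already-established hardness lemma.

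Concretely, set $s = xy$ and let $u = [1,\, 2/s]$, which is a well-defined nonzero unary constraint since $xy \neq 0$. Using permutation, expansion, multiplication, and projection, T-construct
\[
g(x_1, x_2) \;=\; \sum_{x_3 \in \{0,1\}} f(x_1, x_3)\, f(x_3, x_2)\, u(x_3).
\]
A direct evaluation at the four Boolean inputs yields $g = (3,\, -x,\, -y,\, 3xy)$. After normalization by $1/3$ this becomes $g' = (1,\, -x/3,\, -y/3,\, xy)$, whose four entries are all nonzero. Writing $g'$ in the form $(1, x', y', z')$ with $z' = xy$ and $x'y' = xy/9$, one checks that $x'y'z' = (xy)^2/9 \neq 0$ and that $x'y' \neq \pm z'$ (equality would force $1/9 = \pm 1$). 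Hence $g'$ satisfies the hypotheses of Lemma \ref{1-x-y-z-implies}.

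Since $g' \leq_{con} \{f, u\}$ by the construction above, Lemma \ref{constructibility} delivers $\sharpcspstar(g', \FF) \APreduces \sharpcspstar(f, u, \FF)$, and because $u$ is unary the right-hand side collapses to $\sharpcspstar(f, \FF)$. Lemma \ref{1-x-y-z-implies} applied to $g'$ provides $\sharpcspstar(OR, \FF) \APreduces \sharpcspstar(g', \FF)$, and composing the two AP-reductions gives the desired conclusion.

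The main subtlety is identifying a workable unary weight: for a generic $u = [\alpha, \beta]$ the same projection produces $(\alpha + \beta s,\, x(\alpha - \beta s),\, y(\alpha - \beta s),\, s(\alpha + \beta s))$, and to invoke Lemma \ref{1-x-y-z-implies} after normalization one needs $\alpha \pm \beta s \neq 0$ together with $((\alpha - \beta s)/(\alpha + \beta s))^2 \notin \{1,-1\}$. The choice $(\alpha, \beta) = (1, 2/s)$ makes this ratio equal to $1/9$ and clears all four inequalities simultaneously, uniformly in $x$ and $y$; any other choice with $(\beta s)^2 \notin \{1, -1\}$ and $\beta s \neq 0$ would work equally well.
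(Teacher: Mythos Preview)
Your proof is correct and follows essentially the same approach as the paper: the same projection gadget $g(x_1,x_2)=\sum_{x_3} f(x_1,x_3)f(x_3,x_2)u(x_3)$ with the same unary choice $u=[1,2/(xy)]$, yielding $g=(3,-x,-y,3xy)$, normalized to $(1,-x/3,-y/3,xy)$ and then handed to Lemma~\ref{1-x-y-z-implies}. Your added discussion of the general unary weight and why this particular choice clears all the required inequalities is a nice touch beyond what the paper records.
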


\begin{proof}
Our proof strategy is to reduce this case to Lemma \ref{1-x-y-z-implies}. Let $f = (1,x,y,-xy)$ and assume that $xy\neq0$. Define $u=[1,a]$ and   consider the constraint $g$ defined by $g(x_1,x_2) = \sum_{x_3\in\{0,1\}} f(x_1,x_3)f(x_3,x_2)u(x_3)$. This $g$ satisfies  $g=(1+axy,x(1-axy),y(1-axy),xy(1+axy))$. If we choose $a=2/xy$, 
then we have $g=(3,-x,-y,3xy)$, which equals $3\cdot(1,-x/3,-y/3,xy)$. 
For simplicity, set $x'=-x/3$, $y'=-y/3$, and $z'=xy$. 
Note that $x'$, $y'$, and $z'$ are all non-zero. Now, we set $h=(1,x',y',z')$ that is obtained by normalizing $g$. 
Since $x'y'\neq \pm{z'}$, we can apply Lemma \ref{1-x-y-z-implies} 
to this $h$ and the desired consequence then follows. 
\end{proof}

\section{Useful Properties of Specific Constraints}\label{sec:support-reduction}

We have shown in the previous section numerous complexity bounds of $\sharpcspstar(f)$'s  when $f$'s are of arity $2$. Our next step is to 
show similar bounds of $\sharpcspstar(f)$'s for constraints $f$ of higher arities. To achieve our goal, we first explore fundamental properties of constraints related to $\ED$, $\IM$, and $\NZ$ so that those properties will contribute to proving the desired hardness results in Section \ref{sec:IM2-support}.  


Underlying relations of constraints $f$ play a 
distinguishing role in our analysis of the behaviors of the counting problems $\sharpcspstar(f)$. In particular, basic properties of relations in $IMP$ become a crucial part of the proof of our dichotomy theorem.  
Let us recall that a relation $R$ in $IMP$ is expressed as a 
product of the constant constraints as well as $Implies$. 
To handle relations in $IMP$, it is convenient to introduce a notion of  ``imp support.''   
A constraint $f$ is said to have {\em imp support} if $R_f$ is in $IMP$. 
It is not difficult to show that all constraints in $\IM$ have imp support.  The converse also holds for any binary constraint. 

\begin{lemma}\label{binary-IM-IMP}
For any binary constraint $f$, it holds that $f\in\IM$ iff $R_f\in IMP$. 
\end{lemma}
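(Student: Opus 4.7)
The plan is to prove the two directions separately, with the forward direction working for any arity and the converse requiring an explicit case analysis that relies on the binary assumption.

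For the forward direction ($f \in \IM \Rightarrow R_f \in IMP$), I would unpack the definition of $\IM$ and write $f = \bigl(\prod_{i} h_i(x_{j_i})\bigr)\bigl(\prod_i Implies(x_{m_i},x_{n_i})\bigr)$. The key observation is that $R_f$ is precisely the set of inputs on which every factor is non-zero, so $R_f$ is obtained by intersecting the supports of the factors. A unary factor $h_i$ has support either $\setempty$, $\{0\}$, $\{1\}$, or $\{0,1\}$, corresponding respectively to $R_f = \setempty$ (which lies in $IMP$ by stipulation), $\Delta_0(x_{j_i})$, $\Delta_1(x_{j_i})$, or a trivial factor contributing nothing. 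Each $Implies$ factor contributes the $Implies$ relation. Because $f \notin \NZ$, not every factor can contribute trivially (otherwise $f$ would be nowhere zero), so the resulting conjunction has a positive number of non-trivial factors, placing $R_f$ in $IMP$.

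For the converse, I would run a short enumeration over $R_f \subseteq \{0,1\}^2$. Since any $IMP$ conjunction on two variables properly restricts, $R_f \neq \{0,1\}^2$, so $|R_f| \in \{0,1,2,3\}$. In each case, I exhibit an explicit factorization: (i) $|R_f|=0$ gives $f\equiv 0$, realized as $[0,0](x_1)\cdot [1,1](x_2)$; (ii) $|R_f|=1$ with support $\{(a_1,a_2)\}$ and value $c$ gives $f = c\,\Delta_{a_1}(x_1)\Delta_{a_2}(x_2)$; (iii) $|R_f|=2$ splits into two subcases, either $R_f = \Delta_c(x_i)$ for some $i\in\{1,2\}$ and $c\in\{0,1\}$, handled by a product of a unary on the constrained variable and an arbitrary unary on the free variable, or $R_f = EQ_2$, handled by $f = h(x_1)\cdot[1,1](x_2)\cdot Implies(x_1,x_2)\cdot Implies(x_2,x_1)$ with $h$ chosen to match the two non-zero values; and (iv) $|R_f|=3$ reduces to $R_f = Implies(x_1,x_2)$ or $Implies(x_2,x_1)$ (the only size-three relations in $IMP$, since $OR$ and $NAND$ are not), where I solve for unary $h_1, h_2$ satisfying $h_1(0)h_2(0)=f(0,0)$, $h_1(0)h_2(1)=f(0,1)$, $h_1(1)h_2(1)=f(1,1)$; this three-equation system in four unknowns is solvable (say, by normalizing $h_2(0)=1$) precisely because the three values of $f$ on $R_f$ are non-zero.

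The individual verifications are routine, and the main subtlety is simply ruling out the remaining binary relations ($XOR$, $OR$, $NAND$, and $\{0,1\}^2$) from $IMP$. This is also where the restriction to binary $f$ is essential for the converse: for arity $k\geq 3$, a relation in $IMP$ can carry values that are not expressible as a product of unaries and $Implies$-factors (for example, a non-degenerate weighting of $\{(x_1,x_2,x_3) : x_2=0\}$), whereas for binary $f$ the support is small enough that a product factorization always exists.
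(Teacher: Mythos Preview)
Your proof is correct and follows essentially the same strategy as the paper: both directions are handled by unpacking the definitions, and the converse is proved by exhibiting explicit factorizations of $f$ into unaries and copies of $Implies$. The paper's presentation is terser---it observes (implicitly) that $R_f\in IMP$ for a binary $f$ forces $f(0,1)f(1,0)=0$, and then treats only the two forms $(x,y,0,z)$ with $y\neq 0$ and $(x,0,0,z)$ (the symmetric case $(x,0,y,z)$ with $y\neq 0$ being handled by permutation)---whereas you enumerate by $|R_f|$ and rule out $XOR$, $OR$, $NAND$, $\{0,1\}^2$ from $IMP$ explicitly; the resulting factorizations are the same.
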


\begin{proof}
Since the underlying relation of any constraint $f$ in $\IM$ belongs to $IMP$, it suffices to show that if $R_f\in IMP$ then $f\in\IM$. Assume that $R_f$ is in $IMP$. Depending on the form of $R_f$, we consider two cases separately. 

(i) Consider the case where $f$ has the form $(x,y,0,z)$ with $x,y,z\in\complex$ and $y\neq0$.  It is easy to check that $f(x_1,x_2)$ always equals $[y,z](x_1)[x/y,1](x_2) Implies(x_1,x_2)$. Thus, $f$ should belong to $\IM$. 

(ii) Next, we consider the case where $f$ has the form $(x,0,0,z)$ with $x,z\in\complex$.  Obviously, $f(x_1,x_2)$ always coincides with $[x,z](x_1) Implies(x_1,x_2) Implies(x_2,x_1)$.  This shows that $f$ is in $\IM$. 
\end{proof}

The first useful property is a {\em closure property} under a certain restricted case of T-constructibility. In what follows, we will show that the T-constructibility without the projection operation preserves 
the membership to $\ED$ and the property of imp support; in other words, the set $\ED$ as well as the set of all constraints that have imp support is closed under T-constructibility with no projection operation.  

\begin{lemma}\label{IMP-substitute}
Let $f$ be any constraint and let $\GG$ be any constraint set. 
Assume that $f$ is T-constructible from $\GG$ using 
no projection operation. 
\begin{enumerate}\vs{-1}
\item If all constraints in $\GG$ have imp support,  
then $f$ also has imp support.
\vs{-2}
\item If all constraints in $\GG$ are in $\ED$, then $f$ is also in $\ED$. 
\end{enumerate}
\end{lemma}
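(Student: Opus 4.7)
The plan is to prove both parts simultaneously by induction on the length of the T-construction sequence deriving $f$ from $\GG$. The base case (zero operations, so $f\in\GG$) is immediate from the hypothesis on $\GG$. For the inductive step, $f$ is obtained from one or two previously T-constructed constraints by exactly one of the six permitted operations---permutation, pinning, linking, expansion, multiplication, or normalization---and one verifies in turn that each of these preserves the relevant property. The key structural point is that each of these six operations acts \emph{locally} on the atomic representation of the underlying relation (for part (1)) or on the factorization (for part (2)), whereas projection does not.

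For part (1), maintain a defining conjunction $\bigwedge_i C_i$ of $\Delta_0$, $\Delta_1$, and $Implies$ atoms witnessing $R_g\in IMP$. Permutation and normalization leave $R_g$ unchanged. Expansion by a fresh free variable $y$ yields $R_f=R_g\times\{0,1\}$, described by the same conjunction. For linking $x_i\mapsto x_j$, substituting $x_j$ for $x_i$ in each $C_i$ yields atoms of the same three forms (any resulting $Implies(x_j,x_j)$ being tautological and removable). For pinning $x_i=c$, each atom containing $x_i$ either becomes tautologically true (remove), becomes false (whence $R_f=\emptyset$, witnessed in $IMP$ by $\Delta_0(x_l)\wedge\Delta_1(x_l)$ on any surviving variable $x_l$), or collapses into a surviving atom of the form $\Delta_c(x_l)$ or $\Delta_{1-c}(x_l)$. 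Multiplication corresponds to the intersection $R_{g_1\cdot g_2}=R_{g_1}\cap R_{g_2}$, witnessed by the union of the two defining conjunctions.

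For part (2), maintain an $\ED$-factorization $g=\prod_i h_i(x_{j_i})\prod_m g_m(x_{a_m},x_{b_m})$ with each $g_m\in\{EQ,XOR\}$. Permutation and normalization (the scalar absorbed into any unary factor) preserve the form. Expansion is handled by appending the trivial unary $[1,1](y)$. Linking $x_i\mapsto x_j$ substitutes $x_j$ for $x_i$ in every factor; a resulting $EQ(x_j,x_j)=1$ is dropped, while $XOR(x_j,x_j)=0$ forces $f$ to be identically zero, which still lies in $\ED$ as $\Delta_0(x_l)\Delta_1(x_l)\prod_{m\neq l}[1,1](x_m)$. Pinning $x_i=c$ collapses $h(x_i)$ into the scalar $h(c)$ (absorbed by normalization), $EQ(x_i,x_l)$ into $\Delta_c(x_l)$, and $XOR(x_i,x_l)$ into $\Delta_{1-c}(x_l)$, leaving a valid $\ED$-product. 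Closure of $\ED$ under multiplication is already Lemma~\ref{closure-multiplication}.

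The principal subtlety is the bookkeeping at the boundary cases: the positive-count requirement in the definition of $IMP$ and the $\ell_1+\ell_2\ge 1$ requirement in the definition of $\ED$ both demand that the witnessing representation remain nonempty, which is maintained by padding with a tautological atom or a trivial unary factor on any surviving variable whenever substitutions would otherwise leave the representation empty. Projection is excluded from the hypothesis precisely because, unlike the six operations above, the summation $\sum_{x_i}g$ does not act locally on atoms of an $IMP$ witness or on the factors of an $\ED$ factorization; in particular, complex-weighted cancellations in the sum can destroy membership in either class outright, and thus the inductive description of the representation cannot be carried across a projection step.
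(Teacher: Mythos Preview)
Your proof is correct and follows essentially the same approach as the paper: induction on the T-construction, maintaining an explicit $IMP$ conjunction (respectively, an $\ED$ factorization) and updating it locally under each non-projection operation. You are somewhat more thorough than the paper about boundary cases---the paper dismisses permutation, normalization, and expansion as ``trivial'' and does not explicitly address the possibility that linking produces $Implies(x_j,x_j)$ or $XOR(x_j,x_j)$, nor the positive-count requirement for the witnessing list---but the substance and structure are identical.
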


In Section \ref{sec:elementary-reduction}, we have defined the concept of ``factor list'' for a given constraint in $\ED$. Similarly, for a relation $R$ in $IMP$, we can define its ``factor list'' using its factors of the forms, $\Delta_0$, $\Delta_1$, and $Implies$. 

\begin{proofof}{Lemma \ref{IMP-substitute}}
Let $f$ be any $k$-ary constraint and let $\GG$ be any constraint set. 
Assume that $f$ is T-constructed from $g$ 
(or $\{g_1,g_2\}$ in the case of the multiplication operation) in $\GG$ 
by a single application of one of the operations described in Section \ref{sec:constructibility} except for the projection operation. 
The proof proceeds by induction on the number of operations that are applied to T-construct $f$ from $\GG$. Clearly, the basis case (\ie $f\in\GG$) is trivial. 

(1) Assume that $g$ has imp support and let $L$ be a factor list for $R_g$.   
We aim at proving that $R_f$ is in $IMP$ by modifying this factor 
list $L$ step by step. Because the cases for the operations of 
normalizing, permutation, and expansion are trivial, we will concentrate on the remaining operations. For ease of notational complication, we are focused on specific indices in the following argument. 

[{\sc Pinning}] Let us consider the case $f= g^{x_i=0}$. To keep our proof clean, we set $i=1$ without loss of generality. Notice that $R_f = R_{g}^{x_1=0}$. Now, we need to eliminate all occurrences of $x_1$ from $L$.  For any index $j\in[k]$, if there is a factor $Implies(x_1,x_j)$ in $L$, then we delete it from the list. 
If a factor $Implies(x_j,x_1)$ exists in $L$, then we replace it by $\Delta_0(x_j)$. If $L$ contains a factor $\Delta_0(x_1)$, then we simply delete it from $L$. Finally, if there exists a factor $\Delta_{1}(x_1)$ in $L$, then we choose any variable, say, $x_2$ appearing in $f$ and define $L'$ to be $\{\Delta_0(x_2),\Delta_1(x_2)\}$ since $f\equiv0$ and $k\geq1$.  Clearly, the obtained list, say, $L'$ lacks any entry of $x_1$. Since $L'$ preserves all the factors associated with the remaining variables, $L'$ should be a factor list for $R_f$. Therefore,  $f$ has imp support. In a similar manner, we can handle the case of $g^{x_1=1}$.

[{\sc Linking}] Let $f= g^{x_i=x_j}$. For simplicity, we set $i=1$ 
and $j=2$. In the factor list $L$, we replace all occurrences of $x_1$ by $x_2$. 
For instance, if $L$ has a factor of the form $Implies(x_1,x_3)$, then we replace it with $Implies(x_2,x_3)$. The newly obtained list becomes a factor list for $R_f$, and thus $R_f$ belongs to $IMP$ since $R_g$ is in $IMP$.  

[{\sc Multiplication}] Finally, assume that $f=g_1\cdot g_2$. We denote by $L_1$ and $L_2$ two factor lists for $R_{g_1}$ and $R_{g_2}$, respectively. 
We combine these two lists into the union $L_1\cup L_2$, which becomes a factor list for $R_f$. Therefore, $f$ has imp support.   

(2) The proof for $\ED$ is in essence similar to (1); in particular, the multiplication and the linking operations are treated almost identically. Here, we note only a major difference. In the case of the pinning operation, say, $f=g^{x_1=0}$, if there exists a factor of the form $EQ(x_1,x_j)$ ($XOR(x_1,x_j)$, resp.) in a factor list $L$ for $R_g$, then we replace it by $\Delta_0(x_j)$ ($\Delta_1(x_j)$, resp.). This manipulation eliminates the variable $x_1$ from the list $L$, and thus the resulting list becomes a factor list for $R_f$.   
\end{proofof}

For any constraint $f$ having imp support, by its definition, 
its underlying relation $R_f$ can be factorized as  
$R_f = g_1\cdot g_2\cdot\,\cdots\,\cdot g_{m}$, where 
each factor $g_i$ is one of the following forms:  
$\Delta_0(x)$, $\Delta_1(x)$, and $Implies(x,y)$ 
($x$ and $y$ may be the same). 
The factor list $L =\{g_1,g_2,\ldots,g_{m}\}$ for $R_f$    
is said to be {\em imp-distinctive}\footnote{This notion is called ``normalized'' in \cite{DGJR09}; however, we have already used the term ``normalization'' in a different context.} if (i) no single variable appears 
both in $\Delta_{c}$ and $Implies$ in $L$, where $c\in\{0,1\}$, and 
(ii) no factor of the form $Implies(x,x)$ belongs to $L$. 
In Lemma \ref{distinctive-list}, we will show that such an imp-distinctive list always 
exists for an arbitrary constraint $f$ with imp support although 
such a list may not be {\em unique} in general.   

\begin{lemma}\label{distinctive-list}
For each constraint $f$ having imp support, there exists an  imp-distinctive 
factor list for $R_f$. 
\end{lemma}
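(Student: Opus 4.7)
The plan is to start with an arbitrary factor list $L$ for $R_f$ (which exists because $f$ has imp support) and then repeatedly rewrite $L$ so as to kill the two forbidden patterns while preserving the relation it defines. Concretely, I would fix a potential function and argue termination.

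First I would remove every factor of the form $Implies(x,x)$, using the fact that this relation evaluates to $1$ on every input; deleting such factors does not change $R_f$. This disposes of clause (ii) once and for all, and it cannot be re-created by later steps in the proof. Next I would iterate the following four local rewrites whenever a variable $x$ appears simultaneously in some $\Delta_{c}$ factor and in some $Implies$ factor:
\begin{itemize}
\item if $\Delta_0(x)$ and $Implies(x,y)$ both appear, delete the $Implies(x,y)$ factor (since $x=0$ forces $x\to y$);
\item if $\Delta_0(x)$ and $Implies(y,x)$ both appear, replace $Implies(y,x)$ by $\Delta_0(y)$ (since $y\to 0$ forces $y=0$);
\item if $\Delta_1(x)$ and $Implies(x,y)$ both appear, replace $Implies(x,y)$ by $\Delta_1(y)$ (since $1\to y$ forces $y=1$);
\item if $\Delta_1(x)$ and $Implies(y,x)$ both appear, delete the $Implies(y,x)$ factor (since $y\to 1$ holds identically).
\end{itemize}
Each rule is an equality of Boolean relations in the presence of the adjacent $\Delta_c(x)$ factor, so the product of the resulting list still equals $R_f$.

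The key observation for termination is that each of the four rewrites strictly decreases the number of $Implies$ factors in the list by one. A newly inserted $\Delta_c(y)$ factor might then collide with some other $Implies$ factor involving $y$, but any such collision is resolved by further applications of the same rewrites, each of which again reduces the $Implies$-count. Hence the procedure must halt, and at that point no variable appears in both a $\Delta_c$ factor and an $Implies$ factor, so clause (i) is satisfied; clause (ii) is preserved throughout because none of the rewrites introduces a reflexive $Implies(x,x)$.

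The only subtle point, which I expect to be the main obstacle to state cleanly, is the contradictory case in which the rewrites eventually force some variable $x$ to satisfy both $\Delta_0(x)$ and $\Delta_1(x)$. In that situation $R_f$ is the empty relation, so $f\equiv 0$; since $\setempty\in IMP$ (as noted in Section \ref{sec:constraint-set}), I would simply output the list $\{\Delta_0(z),\Delta_1(z)\}$ on any single variable $z$ of $f$, which is vacuously imp-distinctive and defines the empty relation. This takes care of the degenerate case and completes the construction.
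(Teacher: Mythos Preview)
Your proposal is correct and follows essentially the same approach as the paper: start from an arbitrary factor list, delete all reflexive $Implies(x,x)$, and then repeatedly apply the same four local rewrites until no $\Delta_c$/$Implies$ collision remains. Your explicit potential argument (the $Implies$-count strictly decreases) is a cleaner justification of termination than the paper gives; on the other hand, your special treatment of the contradictory case $\{\Delta_0(x),\Delta_1(x)\}$ is unnecessary, since imp-distinctiveness only forbids a variable from occurring in both a $\Delta_c$ factor and an $Implies$ factor, not in two $\Delta_c$ factors of opposite sign.
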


\begin{proof}
This proof is similar to the proof of \cite[Lemma 4]{DGJR09}. Let $f$ be any constraint that has imp support. 
Let $L$ be any factor list for $R_f$, composed of  relations of the forms $\Delta_0(x)$, $\Delta_{1}(x)$, and $Implies(x,y)$, where $x$ and $y$ are appropriate variables.    
Since this list $L$ may not be imp-distinctive in general, we need to 
run the following five processes repeatedly to make $L$ imp-distinctive.  

(i) {}From the factor list $L$, delete all factors of the form 
$Implies(x,x)$.  After this process, we assume that $L$ contains no such factor. 
(ii) If $\{\Delta_{0}(x),Implies(x,y)\}\subseteq L$, then delete $Implies(x,y)$. (iii) If $\{\Delta_{0}(y),Implies(x,y)\}\subseteq L$, then replace  $Implies(x,y)$ in $L$ by $\Delta_{0}(x)$. (iv) If $\{\Delta_{1}(x),Implies(x,y)\}\subseteq L$, then replace the factor  $Implies(x,y)$ in $L$ by $\Delta_{1}(y)$. (v) If $\{\Delta_{1}(y),Implies(x,y)\}\subseteq L$, then delete  $Implies(x,y)$ from $L$.   

Assume that no process is further applicable to the obtained factor list, say, $L'$. We want to make a claim that $L'$ is indeed imp-distinctive. 
Toward a contradiction, let us assume otherwise. Suppose that there exists a variable $x$ appearing in both $\Delta_c$ (where $c\in\{0,1\}$) and $Implies$ in $L'$. When $c=0$, we can further apply either Process (ii) or Process  (iii)  to $L'$. This is a contradiction against the definition of $L'$. The case of $c=1$ is similar. Next, suppose that a variable $x$ appears in $Implies(x,x)$ in $L'$. In this case, Process (i) can be applied to $L'$, a contradiction. Therefore, it follows that $L'$ is imp-distinctive. 
\end{proof}


We have utilized a certain form of ``factorization'' of constraints.   
In fact, most constraints $f$ can be expressed as products of a finite number of certain types of ``factors,'' which are usually ``simpler'' than the original constraints. 
Here, we look for particular factorization that is obtained by factors of the following forms: $\Delta_0(x)$, $\Delta_1(x)$, and $EQ(x,y)$. 
After dividing $f$ by those factors, the remaining portion of the constraint can be described by a notion of ``simple form.'' To explain this notion, we need to introduce new terminology.   
For each constraint $f$ of arity $k$, its {\em representing Boolean matrix} $M_{f}$ is composed of rows indexed by all instances $a=(a_1,a_2,\ldots,a_k)$ in $R_f$ (in the standard lexicographical order) and columns  indexed by numbers in $[k]$, and 
each $(a,i)$-entry of $M_{f}$ is a Boolean value $a_i$. 
We say that a constraint is in {\em simple form} 
if its representing Boolean matrix does not contain all-$0$ columns, all-$1$ columns, or any pair of identical columns. 
Clearly, any all-$0$ constraint $f$ (\ie $f\equiv0$) cannot be in simple form. 

As shown in Lemma \ref{simple-form}, it is always possible to factorize any given constraint $f$ into two factors, at least one of which must be in simple form. For the proof of this lemma, we will deal with a representing Boolean matrix $M_f$ of the constraint $f$ and we will execute a {\em sweeping procedure} that  eliminates, one by one, unwanted columns of $M_f$ 
until the remaining matrix becomes a simple form. 
The lemma will become useful in the proof of Proposition 
\ref{no-affine-and-IM2}. Recall that $EQ_1$ represents $[1,1]$. 

\begin{lemma}\label{simple-form}
Let $f$ be any constraint of arity $k\geq1$. If $f$ never belongs to $(IMP\cap\ED)\cup\{EQ_1\}$ after any normalization, then there exist two indices $m$ and $m'$ with $1\leq m\leq m'\leq k$, an arity-$m'$ relation $R$ in $(IMP\cap \ED)\cup \{EQ_1\}$, and a constraint $g$ of arity $k-m$ such that (after properly permuting variable indices) $f(x_1,\ldots,x_k) = R(x_1,\ldots,x_{m'})g(x_m,\ldots,x_k)$, $m\neq k$, $g\leq_{con}f$, and $g$ is in simple form. Moreover, $f$ has imp support iff $g$ has imp support, and $f\in\ED$ iff  $g\in\ED$. 
\end{lemma}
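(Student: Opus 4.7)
The plan is to apply an iterative ``sweeping procedure'' to the representing Boolean matrix $M_f$, in the spirit of the manipulation used in Lemma \ref{distinctive-list}. At each round, inspect the current matrix: if it has an all-$0$ column at position $i$, factor out $\Delta_0(x_i)$ by pinning $x_i=0$; if it has an all-$1$ column, factor out $\Delta_1(x_i)$ by pinning $x_i=1$; and if two identical columns appear at positions $i<j$, factor out $EQ(x_i,x_j)$ by the linking $x_j=x_i$. Each operation strictly reduces the arity, so the loop terminates in at most $k$ rounds. Let $g$ denote the terminal constraint and let $R$ be the product of all factored-out relations (padded by a trivial $EQ_1$ if no factor was produced). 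By the exit condition, $M_g$ has no all-$0$ column, no all-$1$ column, and no pair of identical columns, so $g$ is in simple form; and $R$ is either a nontrivial product of $\Delta_0$'s, $\Delta_1$'s, and $EQ$'s (hence in $IMP\cap\ED$) or equals $EQ_1$.

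Each sweeping step is an exact factorization $g_t=\phi_t\cdot g_{t+1}$, since an all-$c$ column at $i$ forces $g_t$ to vanish outside $x_i=c$, and duplicate columns at $i,j$ force vanishing outside $x_i=x_j$. Composing these gives $f=R\cdot g$. After permuting variables so that those pinned or wholly merged away form the prefix $x_1,\ldots,x_{m-1}$, the surviving sides of $EQ$-merges form the shared segment $x_m,\ldots,x_{m'}$, and the untouched variables form the suffix $x_{m'+1},\ldots,x_k$, the decomposition takes the asserted shape $R(x_1,\ldots,x_{m'})\,g(x_m,\ldots,x_k)$. Because sweeping uses only pinning and linking (never projection), it is a valid T-construction and $g\leq_{con}f$. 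The hypothesis on $f$ excludes the degenerate terminal cases: if $g$ had arity $0$, then $f$ would be a scalar multiple of the relation $R\in IMP\cap\ED$, violating the hypothesis; and if $g$ were a scalar multiple of $EQ_1$, then $f$ would be (after normalization) a member of $(IMP\cap\ED)\cup\{EQ_1\}$, again contradicting the hypothesis. This forces $m\neq k$ and secures the required nondegeneracy of $g$.

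The iff claims then split naturally. Since sweeping uses no projection, Lemma \ref{IMP-substitute} delivers the forward implications $f\in\ED\Rightarrow g\in\ED$ and the analogous statement for imp support. Conversely, if $g\in\ED$ then $f=R\cdot g\in\ED$ by Lemma \ref{closure-multiplication}, since every factor of $R$ already lies in $\ED$; and if $g$ has imp support then $R_f=R\cdot R_g\in IMP$ by the same closure lemma applied to $IMP$. The main obstacle I anticipate is the careful bookkeeping of variable indices throughout the sweep so that the final decomposition satisfies $1\leq m\leq m'\leq k$ exactly, particularly in edge cases such as when no $EQ$-merge occurs (so the shared segment is empty and an indexing convention---or a padding by $EQ_1$---must be adopted to preserve $m\leq m'$); the hypothesis itself does the essential work of preventing sweeping from terminating at a purely scalar or $EQ_1$ outcome.
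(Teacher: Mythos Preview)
Your proposal is correct and follows essentially the same approach as the paper's own proof: both run an identical sweeping procedure on $M_f$ that factors out $\Delta_0$, $\Delta_1$, and $EQ$ one column at a time, argue $m\neq k$ by the contradiction that $g$ would become a scalar and hence $f$ a normalization of some $R\in(IMP\cap\ED)\cup\{EQ_1\}$, and then derive the two iff claims from Lemma~\ref{IMP-substitute} (forward) and Lemma~\ref{closure-multiplication} (reverse). Your anticipated difficulty with the index bookkeeping is real---the paper's own statement and proof are slightly loose on this point (e.g., the arity count $k-m$ versus the displayed variable range $x_m,\ldots,x_k$)---but it does not affect the mathematical content.
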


\begin{proof}
Let $f$ be any constraint of arity $k\geq1$. Assume that $f$ cannot equal $c\cdot R'$ for a certain constant $c\in\complex$ and a certain relation $R'$ in $(IMP\cap\ED)\cup\{EQ_1\}$.  
To generate a constraint of simple form, we run the following 
algorithm, called a {\em sweeping procedure}. The algorithm uses two parameters $g$ and $R$, and it updates them at each step   
until $g$ becomes the desired simple form. 
Initially, we set $g$ to be $f$ and set $R$ to be $EQ_1$ over a single variable, say, $x_1$. Suppose below that, after an appropriate re-ordering of variable indices,  $g$ 
and $R$ have the forms $g(x_{e},x_{e+1},\ldots,x_k)$ and $R(x_1,x_2,\ldots,x_d)$ for two numbers $d$ and $e$ satisfying $1\leq e\leq d\leq k$. 
Let $M_g$ denote the representing Boolean matrix of $g$.     

(i) Assume that there exists an all-$0$ column indexed, say, $i$ in $M_g$.   
We then delete this column $i$ from $M_g$. When this situation happens, 
$g(x_e,\ldots,x_k)$ must be factorized into $\Delta_0(x_i)  g^{x_i=0}(x_e,\ldots,x_{i-1},x_{i+1},\ldots,x_k)$. After the deletion of the column $i$, we update $g$ to be $g^{x_i=0}$ 
and set $R$ to be $\Delta_0\cdot R$ that is defined as  
$(\Delta_0\cdot R)(x_1,\ldots,x_d,x_i) = 
\Delta_0(x_i) R(x_1,\ldots,x_d)$ if $d<i$, and $(\Delta_0\cdot R)(x_1,\ldots,x_d) = \Delta_0(x_i) R(x_1,\ldots,x_d)$ if $i\leq d$.  
(ii) If an all-$1$ column, say, $i$ exists in $M_g$, then we delete the column $i$.  Since $g(x_e,\ldots,x_k) = \Delta_1(x_i)  g^{x_i=1}(x_e,\ldots,x_{i-1},x_{i+1},\ldots,x_k)$, we can update $g$ 
and $R$ to $g^{x_i=1}$ and $\Delta_1\cdot R$, 
respectively.  
(iii) Assuming that there are no all-$0$ and all-$1$ columns, 
if there is a pair of identical columns, say, $i$ and $j$ ($i<j$), then we delete the column $i$. 
Note that  $g(x_e,\ldots,x_k)$ equals $EQ(x_i,x_j)  g^{x_i=x_j}(x_e,\ldots,x_{i-1},x_{i+1},\ldots,x_{j},\ldots,x_k)$. After the deletion, we update $g$ and $R$ respectively to $g^{x_i=x_j}$ and 
$EQ\cdot R$, where  
$(EQ\cdot R)(x_1,\ldots,x_d,x_i,x_j)= EQ(x_i,x_j)R(x_1,\ldots,x_d)$ 
if $d<i$,   $(EQ\cdot R)(x_1,\ldots,x_d,x_j)= EQ(x_i,x_j)R(x_1,\ldots,x_{i-1},x_j,x_{i+1},\ldots,x_d)$ if $i\leq d<j$, 
and $(EQ\cdot R)(x_1,\ldots,x_d)= EQ(x_i,x_j)R(x_1,\ldots,x_d)$ if $j\leq d$. 

After an execution of the above sweeping procedure, we obtain a relation $R$ and a constraint $g$ satisfying the equation $f(x_1,\ldots,x_k) = R(x_1,\ldots,x_{m'}) g(x_m,\ldots,x_k)$ (after an appropriate permutation of variable indices). 
In particular, when $m=1$, none of the cases (i)--(iii) occurs, and thus $R$ equals $EQ_1$ and $g$ coincides with $f$. When $m\neq1$, the procedure guarantees that 
$R$ is a product of some of $\Delta_0$, $\Delta_1$, and $EQ$. In either case, $R$ belongs to $(IMP\cap\ED)\cup\{EQ_1\}$. 
Now, we will show that $m\neq k$. If $m=k$, then $g$ has no input variable. 
Thus, $g$ becomes a constant, say, $c$ in $\complex$, which implies that $f$ equals $c\cdot R$. Obviously, $f$ belongs to $(IMP\cap\ED)\cup\{EQ_1\}$ after appropriate normalization. 
This is a clear contradiction against our assumption, and therefore we conclude that $m\neq k$.  
Moreover, $g$ should be in simple form. 
The procedure clearly ensures that $g\leq_{con}f$. 

The second part of the lemma is shown as follows. 
Assume that $f$ has imp support. By the behavior of the sweeping procedure, 
$g$ should be  T-constructible from $f$ {\em without} the projection operation. 
Lemma \ref{IMP-substitute}(1) then ensures that $g$ has  
imp support as well. 
Finally, we will show that if $g$ has imp support then $f$ has imp support. As a starting point, assume that $g$ has imp support. 
Notice that, since $f=R\cdot g$, $R_f$ coincides with $R\cdot R_g$. Since both $R_g$ and $R$ are in $IMP$, Lemma \ref{closure-multiplication} shows that  $R\cdot R_g$ belongs to $IMP$. Therefore, $f$ has imp support.  In a similar fashion, it is easy to show, using Lemma \ref{IMP-substitute}(2), that $f\in\ED$ iff $g\in\ED$.  
\end{proof}

Non-zero constraints require a special attention for weighted \#CSPs because their underlying relations are all equal to $\{0,1\}^k$ (where $k$ is the arities of the constraints) and they cannot be dealt with simply by a  factorization technique.  
However, we can show that every non-degenerate constraint in $\NZ$  T-constructs a quite useful binary constraint residing in $\NZ$. 

\begin{lemma}\label{affine-PP-induction}
Let $f\in \NZ$ be any constraint of arity $k\geq 2$ 
and let $\FF$ be any set of constraints.  
If $f\not\in\DG$, then there exists a constraint 
$h=(1,x,y,z)$ for which $xyz\neq 0$, $z\neq xy$, 
$h\leq_{con}f$. In particular, $h$ (before normalizing) has the form $f^{x_3=c_3,\ldots,x_k=c_k}$ for certain constants $(c_3,\ldots,c_k)\in\{0,1\}^{k-2}$, after an appropriate permutation of variable indices.  
\end{lemma}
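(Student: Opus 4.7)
My plan is to show that an appropriate sequence of variable permutations, pinnings of $k-2$ variables, and normalization transforms $f$ into a binary constraint of the required form. Concretely, for any permutation of the variables and any tuple $(c_3,\ldots,c_k) \in \{0,1\}^{k-2}$, set $g = f^{x_3=c_3,\ldots,x_k=c_k}$ and write $g = (a,b,c,d)$. Because $f \in \NZ$, all four entries of $g$ are non-zero, so normalizing by $1/a$ produces $h = (1,b/a,c/a,d/a)$, and the condition $xyz\neq 0$ is automatic. Since pinning, permutation, and normalization are all admissible T-construction operations, $h \leq_{con} f$ follows at once. Moreover, the remark following Proposition \ref{1-x-y-z-case} shows that $z = xy$ is exactly the condition $ad = bc$, i.e., that $h$ is degenerate as a binary constraint. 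The task therefore reduces to the following \emph{key claim}: if $f \in \NZ$ has arity $k \geq 2$ and $f \notin \DG$, then there exist two indices $i \neq j$ and a pinning of the remaining $k-2$ variables such that the resulting $2\times 2$ matrix satisfies $ad \neq bc$.

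I would prove the key claim by contrapositive, by induction on $k$. The base case $k = 2$ is immediate: the only binary restriction is $f$ itself, and $ad = bc$ with all entries non-zero forces $f$ to be rank one, hence $f \in \DG$. For the inductive step with $k \geq 3$, suppose every binary restriction of $f$ (obtained by pinning any $k-2$ variables) satisfies $ad = bc$. Define $f_c = f^{x_k = c}$ for $c \in \{0,1\}$. Each $f_c$ is in $\NZ$, has arity $k-1$, and every one of its binary restrictions is also a binary restriction of $f$ (simply append $x_k = c$ to the pinning list). By the induction hypothesis, both $f_0$ and $f_1$ lie in $\DG$.

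To conclude $f \in \DG$, I would study the everywhere-defined, non-vanishing ratio $R(x_1,\ldots,x_{k-1}) := f_1(x_1,\ldots,x_{k-1}) / f_0(x_1,\ldots,x_{k-1})$. For each fixed $i \in [k-1]$ and each choice of values $(c_\ell)_{\ell \neq i,k}$ for the remaining variables, the hypothesis applied to the pair $(x_i, x_k)$ yields a $2 \times 2$ slice satisfying $ad = bc$. Because all entries are non-zero, this equation is equivalent to the statement that $R$ is unchanged when $x_i$ is toggled from $0$ to $1$, with the other variables frozen at the chosen $c_\ell$'s. Hence $R$ is separately independent of each variable; swapping one coordinate at a time, any two assignments can be connected, so $R$ is a single constant $\lambda \in \complex$. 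Consequently $f(x_1,\ldots,x_k) = f_0(x_1,\ldots,x_{k-1}) \cdot [1,\lambda](x_k)$, a product of a degenerate constraint and a unary, and so $f \in \DG$ by Lemma \ref{closure-multiplication}, contradicting $f \notin \DG$.

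The main obstacle I anticipate is the final patching step that upgrades the coordinate-wise independence of $R$ to global constancy. The non-vanishing assumption $f \in \NZ$ is essential here: it makes $R$ well-defined at every point of $\{0,1\}^{k-1}$, which is what allows the one-coordinate-at-a-time swap argument to traverse the entire Boolean cube. Without it, the pairwise rank conditions would yield only local information that could not be assembled into a single $\lambda$.
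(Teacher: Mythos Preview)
Your proposal is correct and follows essentially the same approach as the paper: both arguments hinge on the observation that if the ratio $f^{x_i=1}/f^{x_i=0}$ is constant for every $i$ (equivalently, every $2\times2$ slice has $ad=bc$), then one can peel off unary factors to exhibit $f\in\DG$, so $f\notin\DG$ forces some non-degenerate binary pinning. The only difference is organizational---you package the ``peeling'' step as an induction on $k$ via the contrapositive, whereas the paper carries it out directly by iterating $f = u_1(x_1)\cdots u_i(x_i)\cdot f^{x_1=0,\ldots,x_i=0}$ until all variables are exhausted.
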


\begin{proof}
This proof is part of the proof of \cite[Lemma 4.4]{CLX09} meant 
for exact counting of weighted \#CSPs with complex-valued constraints. 
A similar argument for non-negative constraints is found in the proof of 
\cite[Lemma 14]{DGJ09}. Since the proof of this lemma is not difficult, for completeness,  we include the proof.  

Assume that $f$ is a non-zero constraint of arity $k\geq2$.  
For each index $i\in[k]$, using the assumption $f\in\NZ$, we define a constraint $g_i$ as $g_i(x_1,\ldots,x_{i-1},x_{i+1},\ldots,x_k) = f^{x_i=1}(x_1,\ldots,x_{i-1},x_{i+1},\ldots,x_k)/f^{x_i=0}(x_1,\ldots,x_{i-1},x_{i+1},\ldots,x_k)$. Let us show by contradiction the existence of an index $i\in[k]$ such that $g_i$ is not a constant function.  
Suppose otherwise. For each index $i\in[k]$, we define $u_i =[1,b_i]$, where 
$b_i\in\complex$ is a unique constant satisfying  $g_i(x_1,\ldots,x_{i-1},x_{i+1},\ldots,x_k)=b_i$ for any vector $(x_1,\ldots,x_{i-1},x_{i+1},\ldots,x_k)\in\{0,1\}^{k-1}$.  Such $b_i$ actually exists because $g_i$ is a constant function. 
Since $f^{x_1=1}(x_2,\ldots,x_k) = b_1 f^{x_1=0}(x_2,\ldots,x_k)$, 
it follows that $f(x_1,\ldots,x_k) = u_1(x_1) f^{x_1=0}(x_2,\ldots,x_k)$. By a similar argument, we also have $f(x_1,\ldots,x_k) = u_1(x_1)u_2(x_2)f^{x_1=0,x_2=0}(x_3,\ldots,x_k)$. After repeating this argument, in the end, we obtain $f(x_1,\ldots,x_k) = f^{x_1=0,\ldots,x_k=0}()\prod_{i=1}^{k}u_i(x_i)$, where  $f^{x_1=0,\ldots,x_k=0}()$ is a certain complex number. This indicates that $f$ is degenerate, and thus it belongs to $\DG$, a contradiction. 
Therefore, for a certain index $i\in[k]$, $g_i$ is not a constant function. 
Set $i=1$ for simplicity. 

Let us choose a sequence $(a_3,\ldots,a_k)\in\{0,1\}^{k-2}$ for which  $g_1(0,a_3,\ldots,a_k)\neq g_1(1,a_3,\ldots,a_k)$. Define $h = f^{x_3=a_3,\ldots,x_k=a_k}$, which must have the form $(w,x,y,z)$. {}From  $h\in\NZ$, $xyzw\neq0$ follows immediately. By normalizing $h$  appropriately, we can assume that $h$ takes the form of $(1,x,y,z)$ with $xyz\neq0$. Moreover, from  $h(1,0)/h(0,0)\neq h(1,1)/h(0,1)$, we  obtain the inequality $xy\neq z$. 
\end{proof}

\section{Imp Support and the Hardness of \#CSPs}\label{sec:IM2-support}

Based on various properties given in Sections \ref{sec:constructibility}--\ref{sec:support-reduction}, we will present, in Propositions  \ref{no-affine-and-IM2} and \ref{IM-XOR-IMP}, two hardness results on the approximation complexity of  
certain counting problems $\sharpcspstar(f)$. 
We will show these results by 
building appropriate AP-reductions from $\sharpcspstar(OR)$, indicating  
 the $\#\mathrm{SAT}_{\complex}$-hardness of the  $\sharpcspstar(f)$'s by Lemma \ref{SAT-to-OR}. Moreover, those results look ``complementary;'' that is, Proposition \ref{no-affine-and-IM2} deals with constraints having imp support whereas Proposition \ref{IM-XOR-IMP} targets constraints lacking imp support. They will become a core of the proof of our main theorem in Section \ref{sec:main-theorem}. 


Our first focal point is to discuss constraints that have imp support. Particularly, we are interested in the case where 
the constraints are not in $\ED$. 

\begin{proposition}\label{no-affine-and-IM2}
Let $f$ be any constraint having imp support and 
let $\FF$ be any constraint set. 
If $f\not\in\ED$, then 
$\sharpcspstar(OR,\FF)\APreduces \sharpcspstar(f,\FF)$. 
\end{proposition}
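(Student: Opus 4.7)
The plan is to exploit the structure of $R_f\in IMP$ together with $f\notin\ED$ to extract, via T-constructibility, a binary constraint of the form $(a,b,0,c)$ with $abc\neq0$, and then invoke Lemma~\ref{arity-2-implies-reduction}. First, I would apply Lemma~\ref{simple-form} to write $f=R\cdot g$ with $R\in(IMP\cap\ED)\cup\{EQ_1\}$ and $g$ in simple form; the lemma's last sentence guarantees that $g$ inherits both ``has imp support'' and ``is not in $\ED$,'' and the sweeping procedure in its proof uses only pinning and linking, so $g\leq_{con}f$. By Lemma~\ref{constructibility} it then suffices to AP-reduce $\sharpcspstar(OR,\FF)$ to $\sharpcspstar(g,\FF)$. (The arity $k\geq2$ follows because imp support together with $g\notin\ED$ rules out arity one: every arity-$1$ constraint with imp support is a scalar multiple of $\Delta_0$ or $\Delta_1$, hence already lies in $\ED$.)

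Next I would analyze an imp-distinctive factor list $L$ for $R_g$ furnished by Lemma~\ref{distinctive-list}. Simple form immediately excludes every $\Delta_c(x_i)$ factor from $L$ (such a factor would force column $i$ of $M_g$ to be constant) and every bidirectional pair $\{Implies(x_i,x_j), Implies(x_j,x_i)\}$ (which would force columns $i$ and $j$ to be identical); the same simple-form argument rules out longer $Implies$-cycles, since such a cycle collapses several columns into one. Hence $L$ consists solely of $Implies$-factors whose directed edges form a DAG on the variable set, and $L$ is non-empty because $R_g=\{0,1\}^k\notin IMP$. Pick a covering pair $x_i\prec x_j$ in the induced partial order, meaning a pair with no variable strictly between them; such a pair exists in any non-empty finite DAG.

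The heart of the argument is to pin the remaining $k-2$ variables so that the induced binary constraint realizes exactly $Implies(x_i,x_j)$. After a permutation placing $\{i,j\}=\{1,2\}$, for each $\ell\geq3$ set $c_\ell=0$ if $x_\ell\prec x_1$ or $x_\ell\prec x_2$, set $c_\ell=1$ if $x_1\prec x_\ell$ or $x_2\prec x_\ell$, and $c_\ell=0$ otherwise. The ``below'' and ``above'' categories are disjoint: any overlap would create either a direct cycle in the DAG, or, using the factor $Implies(x_1,x_2)$, a longer cycle through $x_1$ and $x_2$, or a variable strictly between $x_1$ and $x_2$, all forbidden by acyclicity or the covering-edge choice. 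A parallel case analysis through each remaining $Implies$-edge shows that every factor involving at least one pinned variable becomes identically true under the substitution, so $h(x_1,x_2):=g^{x_3=c_3,\ldots,x_k=c_k}$ has underlying relation $\{00,01,11\}$ and hence the form $(a,b,0,c)$ with $abc\neq0$. Carrying out this case analysis cleanly---checking that every conceivable $Implies$-edge among the three categories either vanishes under pinning or is prohibited by acyclicity and simple form---is the step I expect to be the main obstacle, and it is precisely where the choice of a covering pair (rather than an arbitrary pair $x_i\prec x_j$) becomes essential.

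Finally, after permuting the two free arguments if the covering edge is oriented the other way, the normalized $h$ matches the hypothesis of Lemma~\ref{arity-2-implies-reduction} (in the form $(1,a,0,b)$), yielding $\sharpcspstar(OR,\FF)\APreduces\sharpcspstar(h,\FF)$. Since $h\leq_{con}g$ by a sequence of pinning and permutation operations, Lemma~\ref{constructibility} gives $\sharpcspstar(h,\FF)\APreduces\sharpcspstar(g,\FF)$, and chaining with the initial reduction of $\sharpcspstar(g,\FF)$ from $\sharpcspstar(f,\FF)$ delivers the desired $\sharpcspstar(OR,\FF)\APreduces\sharpcspstar(f,\FF)$.
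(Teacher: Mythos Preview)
Your proof is correct and takes a genuinely more direct route than the paper's.

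The paper argues by induction on the arity $k$. After invoking Lemma~\ref{simple-form} exactly as you do, the paper does not pin down to arity~$2$ in one shot; instead it appeals to Corollary~\ref{support-IM-not-affine} (backed by the auxiliary Lemma~\ref{good-factor-list} on ``good factor lists'') to produce some $h\in COMP_1(g)\cup COMP_2(g)$ of strictly smaller arity that again has imp support and lies outside $\ED$, and then invokes the induction hypothesis. Only at the base case $k=2$ does the paper perform a separate case analysis (via Corollary~\ref{aryty-2-IM-abcd}) to land on a constraint of the form $(1,a,0,b)$ and apply Lemma~\ref{arity-2-implies-reduction}. Your argument replaces this entire inductive descent by a single explicit pinning: once $L$ is seen to be a non-empty DAG of $Implies$-edges, choosing a covering edge $(x_1,x_2)$ and assigning $c_\ell\in\{0,1\}$ according to whether $x_\ell$ sits below or above $\{x_1,x_2\}$ in the induced partial order forces every other $Implies$-factor to evaluate to~$1$, so the resulting binary constraint automatically has underlying relation $\{00,01,11\}$. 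This bypasses Lemma~\ref{good-factor-list}, Corollary~\ref{support-IM-not-affine}, and the $k=2$ case split altogether. The trade-off is that the paper's inductive step is more modular (each piece is short), whereas your approach requires the somewhat longer case check that every remaining $Implies$-edge is killed by the pinning; you have correctly identified this as the main verification, and it does go through because ``above'' and ``below'' are disjoint precisely thanks to acyclicity and the covering-edge choice.
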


The proof of this proposition requires five claims concerning 
the relation $Implies$.  
We begin with a useful result, shown in \cite{DGJ10},  on  
relations residing outside of $IMP\cup\NZ$.
For its description, we introduce two additional notations. For any two vectors $a=(a_1,\ldots,a_k)$ and $b=(b_1,\ldots,b_k)$ in $\{0,1\}^k$, the notation $a\wedge b$ denotes the vector 
$(a_1\wedge b_1,\ldots,a_k\wedge b_k)$ and $a\vee b$ denotes $(a_1\vee b_1,\ldots,a_k\vee b_k)$, where $a_i\wedge b_i = \min\{a_i,b_i\}$ and $a_i\vee b_i = \max\{a_i,b_i\}$. 

\begin{lemma}\label{DGJ10-IMP-property}{\rm \cite[Corollary 18]{DGJ10}}\hs{1}
For any nonempty relation $R\not\in IMP\cup\NZ$, there are two distinct 
instances $a$ and $b$ in $R$ such that either 
$a\wedge b \not\in R$ or $a\vee b \not\in R$ holds.
\end{lemma}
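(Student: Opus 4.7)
The plan is to prove the contrapositive: if $R$ is a nonempty relation closed under bitwise $\wedge$ and bitwise $\vee$, then $R \in IMP \cup \NZ$. The trivial half is $R = \{0,1\}^{k}$, which lies in $\NZ$ by definition, so from now on I assume $R \subsetneq \{0,1\}^{k}$ and aim to place $R$ in $IMP$.

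I would introduce the canonical ``$IMP$-envelope'' $R'$, defined as the set of tuples satisfying the conjunction of every atomic relation of the form $\Delta_{0}(x_{i})$, $\Delta_{1}(x_{i})$, or $Implies(x_{i},x_{j})$ that is forced by $R$ (that is, satisfied by every tuple of $R$). By construction $R \subseteq R'$, and once I verify that at least one atomic relation is forced, this $R'$ belongs to $IMP$. Nonemptiness of the conjunction will follow for free: if no atomic relation were forced, then $R' = \{0,1\}^{k}$, and the equality $R = R'$ (established next) would contradict $R \subsetneq \{0,1\}^{k}$.

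The heart of the argument is the reverse inclusion $R' \subseteq R$. Pick any $c \in R'$ and split the coordinate set $[k]$ into $S^{+} = \{i : c_{i} = 1\}$ and $S^{-} = \{j : c_{j} = 0\}$. For every pair $(i,j) \in S^{+} \times S^{-}$, the implication $Implies(x_{i},x_{j})$ is violated by $c$ itself, so it cannot be among the forced constraints; hence some witness $a^{(i,j)} \in R$ satisfies $a^{(i,j)}_{i} = 1$ and $a^{(i,j)}_{j} = 0$. Then $v^{(j)} = \bigvee_{i \in S^{+}} a^{(i,j)}$ lies in $R$ by $\vee$-closure, has $v^{(j)}_{i} = 1$ for every $i \in S^{+}$, and has $v^{(j)}_{j} = 0$. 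Finally $w = \bigwedge_{j \in S^{-}} v^{(j)}$ lies in $R$ by $\wedge$-closure and equals $c$ coordinatewise, so $c \in R$. The degenerate cases where $S^{+}$ or $S^{-}$ is empty are handled by the parallel unary argument: if $S^{-}$ is empty then $c = 1^{k}$, and since no $\Delta_{0}(x_{i})$ is forced there exists, for each $i$, some $a^{(i)} \in R$ with $a^{(i)}_{i} = 1$, so $\bigvee_{i} a^{(i)} = 1^{k} = c \in R$; the dual argument handles $S^{+}$ empty.

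The main obstacle I anticipate is the bookkeeping of these degenerate cases together with keeping the ``boundary'' relations straight: the empty relation lies in $IMP$, the full relation $\{0,1\}^{k}$ lies in $\NZ$ but not in $IMP$, and the forced-constraint set must be nonempty precisely when $R \subsetneq \{0,1\}^{k}$. Apart from this, the argument is purely lattice-theoretic: closure under $\wedge$ and $\vee$ is used in exactly two places, to form the joins $v^{(j)}$ and then the meet $w$, which together realize each tuple of $R'$ as a $\wedge$-$\vee$ combination of witness tuples of $R$.
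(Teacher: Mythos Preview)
The paper does not give its own proof of this lemma; it simply quotes the result as \cite[Corollary~18]{DGJ10}. So there is nothing to compare against within the paper itself.

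Your argument is correct. The contrapositive reformulation is clean, and the core step --- taking, for each target tuple $c\in R'$, witnesses $a^{(i,j)}\in R$ with $a^{(i,j)}_i=1$, $a^{(i,j)}_j=0$, then forming $v^{(j)}=\bigvee_{i\in S^+}a^{(i,j)}$ and $w=\bigwedge_{j\in S^-}v^{(j)}$ --- is exactly the standard lattice-theoretic proof that a sublattice of $\{0,1\}^k$ is cut out by unary fixings and binary implications. Your treatment of the boundary issues (the ``positive number of conjuncts'' clause in the definition of $IMP$, and the degenerate cases $S^+=\emptyset$ or $S^-=\emptyset$) is also handled correctly: the equality $R=R'$ forces at least one atomic constraint once $R\subsetneq\{0,1\}^k$, and the unary witnesses suffice when one of $S^{\pm}$ is empty. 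One cosmetic remark: you silently use that closure under binary $\wedge,\vee$ extends to finite iterated $\wedge,\vee$, which is immediate but worth one word if you write this out formally.
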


Secondly, we present a simple characterization of binary constraints not in $\IM\cup\NZ$. 

\begin{lemma}\label{arity-2-IM-support}
For any constraint $f$ of arity $2$ with 
$f\not\equiv0$, $f\not\in \IM\cup\NZ$  
iff $f$ is of the form $(a,b,c,d)$ with $ad=0$ and $bc\neq 0$. 
\end{lemma}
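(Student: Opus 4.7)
The plan is to reduce the claim to a finite enumeration of binary relations. Writing $f=(a,b,c,d)$, we have $f\in\NZ$ exactly when $abcd\neq 0$, so $f\not\in\NZ$ is equivalent to at least one of $a,b,c,d$ being zero. By Lemma \ref{binary-IM-IMP}, for binary $f$ we have $f\in\IM$ iff $R_f\in IMP$. Since $f\not\equiv 0$ forces $R_f$ nonempty, the statement $f\not\in\IM\cup\NZ$ together with $f\not\equiv 0$ is equivalent to: $R_f$ is a proper, nonempty subset of $\{0,1\}^2$ that does not lie in $IMP$.

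I would then enumerate the proper nonempty subsets of $\{0,1\}^2$ and check membership in $IMP$. Every singleton $\{(a_1,a_2)\}$ lies in $IMP$ via $\Delta_{a_1}(x)\wedge\Delta_{a_2}(y)$. Among two-element subsets, $\{00,01\},\{00,10\},\{01,11\},\{10,11\}$ each lie in $IMP$ via a single $\Delta_c$, and $\{00,11\}$ equals $Implies(x,y)\wedge Implies(y,x)$, leaving only $\{01,10\}$ outside. Among three-element subsets, $\{00,01,11\}=Implies(x,y)$ and $\{00,10,11\}=Implies(y,x)$ are in $IMP$, leaving $\{00,01,10\}$ and $\{01,10,11\}$ outside.

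The main step is to justify that $\{01,10\}$, $\{00,01,10\}$, and $\{01,10,11\}$ really are not in $IMP$. The cleanest argument is that each of $\Delta_0,\Delta_1,Implies$ is closed under componentwise $\wedge$ (min) and $\vee$ (max), so every $IMP$ relation inherits both closures; then $\{01,10\}$ and $\{01,10,11\}$ contain $01,10$ but miss $01\wedge 10=00$, while $\{00,01,10\}$ contains $01,10$ but misses $01\vee 10=11$. Alternatively, this is a direct consequence of Lemma \ref{DGJ10-IMP-property}.

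Finally I would translate the three non-$IMP$ relations back into constraint form. $R_f=\{01,10\}$ gives $(0,b,c,0)$ with $bc\neq 0$; $R_f=\{00,01,10\}$ gives $(a,b,c,0)$ with $abc\neq 0$; and $R_f=\{01,10,11\}$ gives $(0,b,c,d)$ with $bcd\neq 0$. All three satisfy $ad=0$ and $bc\neq 0$, proving the forward direction. Conversely, if $ad=0$ and $bc\neq 0$, then $\{01,10\}\subseteq R_f$ and $R_f$ misses at least one of $00,11$, so $R_f$ is one of the three non-$IMP$ relations above; together with $ad=0$ giving $f\not\in\NZ$, this yields $f\not\in\IM\cup\NZ$. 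The only real substance lies in the $IMP$-nonmembership check, and even that is a two-line closure argument; the rest is bookkeeping.
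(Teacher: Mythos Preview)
Your proof is correct and organized more symmetrically than the paper's. The paper splits into two halves: the Only-If direction applies Lemma \ref{DGJ10-IMP-property} and a case analysis on the two witnessing elements to pin $R_f$ down to one of $OR$, $NAND$, $XOR$; the If direction invokes Lemma \ref{distinctive-list} to enumerate imp-distinctive factor lists and count the $13$ binary $IMP$ relations, verifying none matches. You instead do one pass over the $14$ proper nonempty subsets of $\{0,1\}^2$, check $IMP$ membership directly for the eleven easy ones, and rule out $\{01,10\}$, $\{00,01,10\}$, $\{01,10,11\}$ by the self-contained observation that every conjunction of $\Delta_0,\Delta_1,Implies$ is closed under componentwise $\wedge$ and $\vee$. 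This avoids Lemma \ref{distinctive-list} entirely and handles both directions at once; what you lose is the connection to the cited [DGJ10] structural lemma, which the paper reuses elsewhere.

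One small correction: your sentence ``Alternatively, this is a direct consequence of Lemma \ref{DGJ10-IMP-property}'' points the wrong way. That lemma asserts that a nonempty $R\not\in IMP\cup\NZ$ must fail closure under $\wedge$ or $\vee$; you need the converse, namely that failing closure forces $R\not\in IMP$. Your direct closure argument already proves exactly this, so the proof stands, but the reference to Lemma \ref{DGJ10-IMP-property} as an alternative should be dropped.
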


\begin{proof}
Let $f$ be any binary constraint satisfying that $f\not\equiv0$. Since $f$ is binary, by Lemma \ref{binary-IM-IMP}, it holds that $f\not\in\IM$ iff $R_f\not\in IMP$. 

(Only If--part) Assume that $f$ does not belong to $\IM\cup\NZ$.  
Since $R_f\not\in IMP$, Lemma \ref{DGJ10-IMP-property} guarantees the existence of two distinct elements $a'=(a_1,a_2)$ and $b'=(b_1,b_2)$ in $R_f$ satisfying 
that either $a'\wedge b'\not\in R_f$ or $a'\vee b'\not\in R_f$. Let us consider the first case where $a_1=b_1=0$. Since $a'\neq b'$, we obtain  $a_2\neq b_2$ and thus 
$\{a'\wedge b',a'\vee b'\} = \{a',b'\} \subseteq R_f$, a contradiction. The 
second case $a_1=b_1=1$ is similar. Consider the third case 
$a_1\neq b_1$. Without loss of generality, we set $a_1=0$ and $b_1=1$. 
When  $a'=(0,0)$, we obtain both $a' = a'\wedge b'$ 
and $b' = a'\vee b'$, 
leading to a contradiction. When  $a'=(0,1)$,  $b'$ should equal $(1,0)$ because, otherwise,  $a' = a'\wedge b'$ 
and $b' = a'\vee b'$  follow. Since either $a'\wedge b'\not\in R_f$ or $a'\vee b'\not\in R_f$, $R_f$'s outcome should be one of the following three forms: $(0,1,1,1)$, $(1,1,1,0)$, and $(0,1,1,0)$. In other words, $R_f$ (seen as a function) equals $OR$, $NAND$, or $XOR$. {}From this consequence, the lemma immediately follows. 

(If--part) Let $f=(a,b,c,d)$ with $a,b,c,d\in\complex$ and  
assume that $ad=0$ and $bc\neq0$. This instantly implies $f\not\in\NZ$. 
Next, we wish to show that $f\not\in\IM$. Toward a contradiction, 
assume that  $f$ is in $\IM$, implying $R_f\in IMP$.  
Lemma \ref{distinctive-list} then yields 
an  imp-distinctive factor  list for $R_f$. Such a list should be a subset of $\{Implies(x_1,x_2),Implies(x_2,x_1),\Delta_0(x_j),\Delta_1(x_j) \mid j=1,2\}$. 
Let us consider all possible imp-distinctive lists for $R_f$. By checking them carefully, we can find that all the lists define only $13$ binary relations, excluding $OR$, $NAND$, and $XOR$.  In addition, it is not difficult to show that, for any binary relation $R\not\in\{OR,NAND,XOR\}$, if $R=R_f$ then either $ad\neq0$ or $bc=0$ holds. This clearly contradicts our assumption on $f$. Therefore, we reach a conclusion that $f\not\in\IM$.  
\end{proof}

As an immediate corollary of Lemma \ref{arity-2-IM-support}, we obtain a 
characterization of binary constraints in $\IM$. Recall that $\IM\cap\NZ=\setempty$. 

\begin{corollary}\label{aryty-2-IM-abcd}
For any constraint $f=(a,b,c,d)$ with $a,b,c,d\in\complex$ 
and $f\not\equiv0$, $f\in\IM$ iff $bc=0$. 
\end{corollary}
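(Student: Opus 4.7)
The plan is to derive Corollary \ref{aryty-2-IM-abcd} as an essentially immediate logical consequence of Lemma \ref{arity-2-IM-support}, together with the disjointness $\IM\cap\NZ=\setempty$ noted earlier. So there is no significant obstacle; the work is just unwinding the biconditional of the lemma under the extra information that $bc=0$ or $f\in\IM$.

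For the ``only if'' direction, I would assume $f\in\IM$ and argue by contradiction that $bc=0$. Suppose $bc\neq 0$, so $b\neq 0$ and $c\neq 0$. Since $\IM\cap\NZ=\setempty$, $f\in\IM$ forces $f\notin\NZ$, so at least one of the four entries $a,b,c,d$ vanishes. Because $b$ and $c$ are already non-zero, the vanishing entry must be $a$ or $d$, giving $ad=0$. Then the condition ``$ad=0$ and $bc\neq 0$'' of Lemma \ref{arity-2-IM-support} holds, so $f\notin\IM\cup\NZ$, contradicting $f\in\IM$. Hence $bc=0$.

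For the ``if'' direction, I would assume $bc=0$ (with $f\not\equiv 0$) and use the contrapositive of Lemma \ref{arity-2-IM-support}: if it is not the case that $ad=0$ and $bc\neq 0$, then $f\in\IM\cup\NZ$. Since $bc=0$, this hypothesis is satisfied trivially, so $f\in\IM\cup\NZ$. Finally, $bc=0$ means some entry (either $b$ or $c$) of $f$ is zero, which rules out $f\in\NZ$; therefore $f\in\IM$, as required. Combining the two directions yields the stated equivalence.
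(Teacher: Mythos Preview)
Your proof is correct and follows essentially the same approach as the paper: both directions are derived from Lemma \ref{arity-2-IM-support} together with the disjointness $\IM\cap\NZ=\setempty$ (equivalently, $f\notin\NZ$ iff $abcd=0$). The only cosmetic difference is that the paper handles the ``only if'' direction by a direct case split on whether $ad\neq 0$ rather than by contradiction, but the logical content is identical.
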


\begin{proof}
Let $f=(a,b,c,d)$ with $a,b,c,d\in\complex$ and $f\not\equiv0$.  
Lemma \ref{arity-2-IM-support} states that $f\in \IM\cup\NZ$ 
iff either $ad\neq 0$ or $bc=0$ holds. Obviously, $f\in\IM$ implies $f\not\in\NZ$. 
Moreover, it holds that $f\not\in \NZ$ iff $abcd=0$. Because if $ad\neq 0$ and $abcd=0$ then at least one of $b$ and $c$ should be $0$, the corollary follows immediately.  
\end{proof}

The third claim is more technical. To explain it, we need to introduce a directed graph $G_{f,L}$ induced from a factor list $L$ for $R_f$. The graph $G_{f,L}$ consists of nodes whose names are variables $x_{i_1},x_{i_2},\ldots,x_{i_k}$ appearing in $R_f(x_{i_1},x_{i_2},\ldots,x_{i_k})$ and of edges $(x,y)$ whenever a factor $Implies(x,y)$ is in $L$. We call this $G_{f,L}$ an {\em imp graph} of $R_f$ and $L$.  
We say that a factor list $L$ for $R_f$ is {\em good} if  (i) $L$ consists only of $Implies$'s, (ii) every node in $G_{f,L}$ is adjacent to at least one node in $G_{f,L}$, and (iii) there is no cycle in $G_{f,L}$. Note that, whenever $f$ has a good factor list,  Condition (iii) prohibits $f$ from belonging to $\ED$.   

The notation $COMP_1(f)$ for a constraint $f$ of arity $k$  
means the set $\{f^{x_i=c}\mid i\in[k], c\in\{0,1\}\}$. Furthermore, let $COMP_2(f)=\{f^{x_i=c,x_j=d}\mid i,j\in[k], i\neq j, c,d\in\{0,1\}\}$. 
Every constraint in $COMP_{1}(f)\cup COMP_{2}(f)$ is obviously 
T-constructible from $f$ by applications of the pinning operation.  

\begin{lemma}\label{good-factor-list}
For any constraint $f$ of arity $k\geq3$, if $R_f$ has a good factor list, then there exists a constraint $h\in COMP_1(f)\cup COMP_2(f)$ such that $R_h$ has a good factor list. 
\end{lemma}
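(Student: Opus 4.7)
The plan is to analyze the structure of the DAG $G_{f,L}$ guaranteed by the hypothesis and to exhibit a pinning of one or two variables of $f$ whose underlying relation inherits a good factor list.  The key preliminary observation is the effect of pinning on $Implies$ factors: if $x_i$ is a source of $G_{f,L}$ and we pin $x_i = 0$, then each factor $Implies(x_i, x_j)$ becomes $Implies(0, x_j) \equiv 1$ and drops out, while (since $x_i$ is a source) no factor of the form $Implies(\cdot, x_i)$ exists to leave behind a $\Delta_0$-residue.  Symmetrically, pinning a sink $x_j$ to $1$ removes $x_j$ and all its in-edges cleanly.  In either case, the induced sublist $L'$ is still a pure-$Implies$ factor list for $R_h$ and remains acyclic, so the only way $L'$ can fail to be good is for some surviving vertex to become isolated.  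The whole task reduces to choosing the pin so that this does not occur.

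Calling a degree-one vertex of $G_{f,L}$ a \emph{leaf}, I would split into three exhaustive cases.  \textbf{Case A:} $G_{f,L}$ has no leaf, so every vertex has degree $\geq 2$.  Pick any source $x_i$ and set $h = f^{x_i = 0} \in COMP_1(f)$.  Each out-neighbor $x_j$ of $x_i$ loses only the single edge $x_i \to x_j$, and since it originally had degree $\geq 2$ it retains at least one surviving incident edge; vertices outside $\{x_i\} \cup N^{+}(x_i)$ are untouched, so $L'$ is good.  \textbf{Case B:} there is a leaf $x_j$ whose unique neighbor $x_i$ is incident to some other edge.  Set $h = f^{x_j = 1}$ if $x_j$ is a sink and $h = f^{x_j = 0}$ if $x_j$ is a source; in both subcases $h \in COMP_1(f)$, pinning removes exactly the edge at $x_j$, and $x_i$ retains its other edge while every other vertex is unaffected.  \textbf{Case C:} every leaf sits in an isolated edge, i.e., its unique neighbor is itself a leaf.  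Pick such a connected component $x_i \to x_j$ and set $h = f^{x_i = 0,\, x_j = 1} \in COMP_2(f)$.  Both endpoints vanish, no other factor of $L$ is touched, and since $k \geq 3$ at least one further vertex $x_{\ell}$ remains, which by goodness of $L$ is incident to an edge in a different component and therefore survives in $L'$.

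The main obstacle is purely structural bookkeeping: checking that Cases A, B, C exhaust all good DAGs $G_{f,L}$ and that in each case the residual list $L'$ really inherits conditions (i)--(iii).  The crucial dichotomy is ``either some leaf (or, when no leaf exists, any source) can be pinned alone because its affected neighbors each retain another edge, or else a leaf together with its unique partner forms an isolated edge that can be eliminated by a single two-variable pin''; the hypothesis $k \geq 3$ is invoked only in Case C, to guarantee that after deleting an isolated edge at least one further vertex survives to witness goodness of the new factor list.
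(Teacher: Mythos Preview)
Your proof is correct and follows essentially the same strategy as the paper: both argue by case analysis on the degree structure of the DAG $G_{f,L}$, pinning a source to $0$ or a sink to $1$ to delete a single vertex cleanly, and double-pinning when an isolated edge must be removed.  Your case split (no leaf / a leaf with a high-degree neighbor / every leaf in an isolated edge) is a mild re-organization of the paper's split (degree-$1$ sink exists / no degree-$1$ sink), and in Case~C you pin $(0,1)$ where the paper pins $(1,1)$, but these are cosmetic differences and the underlying argument is the same.
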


\begin{proof}
Let $R_f$ be the underlying relation of an arity-$k$ constraint $f$ defined on $k$ Boolean variables $\{x_1,\ldots,x_k\}$. 
Let $L$ be any good factor list for $R_f$ and let $G_{f,L}=(V,E)$ be an imp graph of $R_f$ and $L$ with $V=\{x_1,\ldots,x_k\}$. There are two cases to handle differently. 

(1) Suppose that there exists an index $i\in[n]$ for which  (i) 
$(x_i,x_j)\not\in E$ holds for all indices $j\in[k]-\{i\}$ and (ii)  the incident set $E(x_i)$ of the node $x_i$ is a singleton. By the property of the imp graph, a certain index $j\in[k]-\{i\}$ must satisfy that $(x_j,x_i)\in E$. Since $|E(x_i)|=1$, this node $x_j$ should be unique. Now, we are focused on this particular node $x_j$. 

(a) Assume that $|E(x_j)|>1$.  For the desired $h$ stated in the lemma, we set $h=f^{x_i=1}$, which belongs to $COMP_1(f)$. 
Next, we want to show that $R_h$ has a good factor list. Let us define $L'=L-\{Implies(x_j,x_i)\}$. It is easy to show that $L'$ is a factor list for $R_h$.  With this list $L'$, define $G_h$ to be an imp graph of $R_h$ and $L'$. Note that $G_{h,L'}$ contains no node named $x_i$. 
Obviously, every node in $G_{h,L'}$ is adjacent to at least one node in $G_{h,L'}$.  
Moreover, there is no cycle in $G_{h,L'}$ because any cycle in $G_{h,L'}$  becomes a cycle in $G_{f,L}$.  Therefore, $L'$ is a good factor list for $R_h$. 

(b) Assume that $|E(x_j)|=1$. This means $E(x_j)=\{x_i\}$, and the graph $H=(\{x_i,x_j\},\{(x_j,x_i)\})$ forms a connected component of $G_{f,L}$.  
Here, we set $h=f^{x_i=1,x_j=1}$ so that $h$ belongs to $COMP_2(f)$. We define $L'=L-\{Implies(x_j,x_i)\}$, which becomes a factor list for $R_h$. 
Note that $L'$ cannot be empty because, otherwise, $L$ consists only of $Implies(x_j,x_i)$ and thus $k=2$ follows, a contradiction.  
Now, we claim that $L'$ is good. Let $G_{h,L'}$ be an imp graph of $R_h$, which has neither the node $x_i$ nor the node $x_j$. Note that every node 
in $G_{h,L'}$ is adjacent to at least one node because deleting the  
subgraph $H$ does not affect the adjacency property of the other nodes 
in $G_{f,L}$.  Thus, $L'$ is a good factor list for $R_h$.  

(2) Assume that Case (1) does not happen. Choose a variable $x_i$ so that $(x_j,x_i)\not\in E$ for any $j\in [k]-\{i\}$.  Such a variable should exist because there is no cycle in $G_{f,L}$. The desired $h$ is now defined as 
$h= f^{x_{1}=0}$, which clearly falls into $COMP_1(f)$.  
Let $L' = L-\{Implies(x_i,x_j)\mid j\in[k]-\{i\}\}$.  
This $L'$ becomes a factor list for $R_h$. If any node $x_j$ with $j\neq i$ is deleted from $G_{f,L}$, then $|E(x_j)|=1$ follows and this $x_j$ satisfies Case (1). This is a contradiction; hence,  an imp graph of $R_h$ and $L'$ lacks only the node $x_i$. This ensures that the properties of $L$ are naturally inherited to $L'$; therefore, $L'$ is good. 
\end{proof}

The notion of good factor list is closely related to that of simple form. Exploring this relationship, we can prove the following corollary, in which we decrease the arity of a given constraint while maintaining the imp-support  property and the non-membership property to $\ED$. 

\begin{corollary}\label{support-IM-not-affine}
Let $f$ be any constraint of arity $k\geq3$. 
Assume that $f$ is in simple form. 
If $f$ has imp support and $f\not\in\ED$, then there exists a constraint $h$ of arity less than $k$ 
such that  $h$ has imp support, $h\not\in\ED$, and $h\leq_{con}f$.
\end{corollary}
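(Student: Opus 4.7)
My plan is to invoke Lemma~\ref{distinctive-list} to fix an imp-distinctive factor list $L$ for $R_f$, to use the simple-form hypothesis to force $L$ into a very restricted shape, and then to reduce arity either directly (by pinning away isolated variables) or by appealing to Lemma~\ref{good-factor-list}. More concretely, the presence of a factor $\Delta_c(x_i)$ in $L$ would make column $i$ of $M_f$ constantly $c$, and a cycle in $G_{f,L}$ would force the variables along the cycle to be identified in $R_f$ and hence to index identical columns of $M_f$; both are ruled out by simple form. Combined with the fact that an imp-distinctive list contains no self-loops $Implies(x,x)$, this will show that $L$ consists only of factors of the form $Implies(x_l,x_m)$ with $l\neq m$ and that $G_{f,L}$ is acyclic. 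Moreover, since membership in $IMP$ demands a positive number of factors and $f$ has imp support with $f\not\equiv 0$, the list $L$ is necessarily nonempty, so at least one $Implies(x_l,x_m)$ occurs.

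Next I would partition the variables of $f$ as $V \cup Z$, where $V$ is the set of variables occurring in $L$ and $Z$ is the complementary set of isolated variables. Each $x_j\in Z$ is completely free in $R_f$ because $L$ alone determines $R_f$. I form $f'$ from $f$ by pinning every variable in $Z$ to $0$; this is a composition of pinning operations, so $f'\leq_{con} f$, $f'$ has imp support by Lemma~\ref{IMP-substitute}(1), the arity of $f'$ is $|V|\geq 2$, and $L$ itself (viewed on $V$) remains a factor list for $R_{f'}$. If $Z\neq\emptyset$, then $f'$ has arity strictly less than $k$, and $L$ is now a good factor list for $R_{f'}$ (only $Implies$ factors, no isolated nodes, no cycles); the remark following the definition of good factor list then gives $f'\not\in\ED$, so I set $h := f'$. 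If $Z=\emptyset$, then $L$ is already a good factor list for $R_f$, so Lemma~\ref{good-factor-list} yields a constraint $h\in COMP_1(f)\cup COMP_2(f)$ whose underlying relation also admits a good factor list; this $h$ is obtained from $f$ by one or two pinnings, so $h\leq_{con} f$ and $h$ has arity less than $k$, it has imp support by Lemma~\ref{IMP-substitute}(1), and it lies outside $\ED$ by the same remark. In either case $h$ satisfies every requirement of the corollary.

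The main obstacle I anticipate is the first step: carefully translating the hypothesis that $f$ is in simple form into the structural statements about $L$ that the rest of the argument leans on (namely the absence of $\Delta_c$ factors, the acyclicity of $G_{f,L}$, and the persistence of $L$ as an imp-distinctive factor list for $R_{f'}$ after pinning the isolated variables). This is essentially bookkeeping with the definitions of $M_f$, $G_{f,L}$, and imp-distinctiveness, but it is where the combinatorial content of simple form is actually used; once it is in hand, Lemma~\ref{good-factor-list} and the good-factor-list remark finish the argument mechanically.
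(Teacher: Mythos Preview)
Your proposal is correct and follows essentially the same line as the paper's own proof: use the simple-form hypothesis to strip the imp-distinctive list $L$ down to genuine $Implies$ factors with an acyclic $G_{f,L}$, then either pin away an isolated variable or invoke Lemma~\ref{good-factor-list}. The only difference is in the isolated-variable case, where the paper pins a single such $x_i$ and argues that $f$ does not depend on $x_i$ (so that $f^{x_i=0}\notin\ED$ follows directly from $f\notin\ED$), while you pin all of $Z$ at once and observe that $L$ becomes a good factor list for $R_{f'}$; your variant is slightly cleaner, since the paper's assertion that ``the value of $x_i$ does not affect the computation of $f$'' is, strictly speaking, only a statement about $R_f$ rather than about the complex values of $f$.
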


\begin{proof}
Let $k\geq3$ and let $f$ be any arity-$k$ constraint having imp support. 
Assume that $f$  is in simple form but it is not in $\ED$.  
Since $f$ has imp support,  by Lemma \ref{IMP-substitute}(1), 
every constraint $h$ that is T-constructed from $f$ by applications of the pinning operation has imp support. Since $R_f\in IMP$,  
we choose an imp-distinctive factor list $L$ for $R_f$. 
Note that every factor in $L$ is of the form $Implies$ because if $L$ contains a factor $\Delta_c$, where $c\in\{0,1\}$, then $M_f$ must contain an all-$c$ column, a contradiction against the simple-form property of $f$.  

To appeal to Lemma \ref{good-factor-list}, we need to show that $L$ is a good factor list for $R_f$. Let $G_{f,L}$ denote an imp graph of $R_f$ and $L$. 
Firstly, we deal with a situation where there exists a variable that appears in no factor in $L$. We choose such a variable, say, $x_i$ and define $h= f^{x_i=0}$, which clearly belongs to $COMP_1(f)$. Moreover, the value of $x_i$ does not affect the computation of $f$; thus, it follows that  $f^{x_i=0}(x_1,\ldots,x_{i-1},x_{i+1},\ldots,x_k) = f^{x_i=1}(x_1,\ldots,x_{i-1},x_{i+1},\ldots,x_k)$.   
Therefore, we obtain $f(x_1,\ldots,x_k) = h(x_1,\ldots,x_{i-1},x_{i+1},\ldots,x_k)$, which  implies that $h$ has imp support. 
Since $f\not\in\ED$, we conclude that $h\not\in \ED$. 
Hereafter, we assume that every variable appears in 
at least one factor in $L$.  

Secondly, we will show that $G_{f,L}$ has no cycle. Suppose otherwise; namely, for a certain series $(x_{i_1},x_{i_2},\ldots,x_{i_m})$  of variables, the set  
$\{Implies(x_{i_j},x_{i_{j+1}}), Implies(x_{i_m},x_{i_1}) \mid j\in[m-1]\}$ is included in $L$. Clearly, $M_f$ includes two identical columns $i_1$ and $i_m$,   
and thus $f$ cannot be in simple form, a contradiction. Therefore, 
no cycle exists in $G_{f,L}$.  
Overall, we conclude that $L$ is a good factor list for $R_f$. 
Lemma \ref{good-factor-list} then gives a constraint $h$ such that $h$ is T-constructed from $f$ by one or more applications of the pinning operation  and $R_h$ has a good factor list. Therefore, $h$ should have imp support. Moreover, the definition of good factor list implies that $h$ should not belong to $\ED$. The use of the pinning operation guarantees that 
the arity of $h$ should be less than $k$. 
\end{proof}


Finally, we will give the proof of Proposition \ref{no-affine-and-IM2}.

\begin{proofof}{Proposition \ref{no-affine-and-IM2}}
Let $f$ be any constraint of arity $k\geq1$ and let $\FF$ be any constraint set.   
We will show by induction on $k$ that if $f$ has imp support but it is   
not in $\ED$ then  
$\sharpcspstar(OR,\FF)$ is AP-reduced to $\sharpcspstar(f,\FF)$. 
Let us assume that $f$ has  imp support and $f\not\in\ED$. Note that $f\not\equiv0$ because, otherwise, $f$ belongs to $\ED$. 

[Basis Case: $k=1$] In this case, the proposition is trivially true, because all unary functions are already in $\ED$.  

[Next Case: $k=2$] Assume that $f=(a,b,c,d)$ with 
$a,b,c,d\in\complex$.  Since $f$ is a binary constraint,   the imp-support property of $f$ makes $f$ belong to $\IM$.  Since $f\not\equiv0$, 
Corollary \ref{aryty-2-IM-abcd} yields  $bc=0$. 
Now, we examine the following three possible cases. 

(1) The first case is that $b=0$ but $c\neq0$. Let us examine all four possible values of $f$. Write $u$ for the constraint $[c,d]$. (i) If 
$f=(0,0,c,0)$, then $f$ is clearly in $\ED$. (ii) Let $f=(0,0,c,d)$ with $d\neq0$. The value $f(x_1,x_2)$ actually equals $\Delta_{1}(x_1)u(x_2)$, 
and thus $f$ belongs to $\ED$. (iii) If $f=(a,0,c,0)$ with $a\neq0$, then 
$f$ has the form $u(x_1)\Delta_0(x_2)$, implying $f\in\ED$. These three cases immediately lead to a contradiction against the assumption $f\not\in\ED$. 
(iv) The remaining case is that $f=(a,0,c,d)$ with $ad\neq0$. 
By normalizing $f$ appropriately, we may assume that $f$ has the form $(1,0,c,d)$. Now,  
we apply Lemma \ref{arity-2-implies-reduction} and then obtain the desired AP-reduction from $\sharpcspstar(OR,\FF)$ to $\sharpcspstar(f,\FF)$. 

(2) The second case where $b\neq0$ and $c=0$ is symmetric to Case (1) and is omitted. 

(3) Let us consider the third case where $b=c=0$.  There are four possible choices for $f$: (i') $f=(a,0,0,0)$ with $a\neq0$, (ii') $f=(0,0,0,d)$ with $d\neq0$, (iii') $f=(a,0,0,d)$ with $ad\neq0$, and (iv') $f=(0,0,0,0)$. In all those four cases, clearly $f$ belongs to $\ED$, a contradiction. This completes the case of $k=2$.  

[Induction Case: $k\geq3$] 
As the induction hypothesis, we assume that the proposition is true 
for any constraint of arity less than $k$. 

(1) Assume that $f$ falls into $(IMP\cap\ED)\cup\{EQ_1\}$ after appropriate normalization; in other words, $f$ equals $c\cdot R$, where $c\in\complex$  and $R\in (IMP\cap\ED)\cup\{EQ_1\}$. There are two cases,  $R\in IMP\cap\ED$ or $R=EQ_1$, to consider. In either case, however, $f$ belongs to $\ED$. This contradicts our assumption. 

(2) Assume that Case (1) does not occur. 
Lemma \ref{simple-form} then provides a relation $R$ in 
$(IMP\cap \ED)\cup\{EQ_1\}$   
and a constraint $g$ in simple form 
that satisfy $g\leq_{con}f$ and $f= R\cdot g$. Moreover, the second part of Lemma \ref{simple-form} implies that $g$ has imp support and $g$ does not belong to $\ED$.  
Firstly,  we consider the case where $R\neq EQ_1$. Since $f\neq g$, an execution 
of the sweeping procedure given in the proof of Lemma \ref{simple-form} makes the arity of $g$ smaller than that of $f$.  
The induction hypothesis therefore implies that $\sharpcspstar(OR,\FF)\APreduces \sharpcspstar(g,\FF)$. 
Since $g\leq_{con}f$, by Lemmas \ref{AP-transitivity} and \ref{constructibility}, we obtain $\sharpcspstar(OR,\FF)\APreduces \sharpcspstar(f,\FF)$.  
Secondly, we consider the case of $R= EQ_1$. This case implies  
$f=g$,  and thus $f$ should be in simple form. Appealing to   
Corollary \ref{support-IM-not-affine}, we obtain a 
constraint $h$ of arity smaller than $k$ satisfying that $h\leq_{con}f$, $h\not\in\ED$, and $h$ has imp support.  Our induction hypothesis 
then ensures  that $\sharpcspstar(OR,\FF)\APreduces \sharpcspstar(h,\FF)$ holds. Moreover, since $h\leq_{con}f$, 
$\sharpcspstar(h,\FF)$ is AP-reduced to $\sharpcspstar(f,\FF)$ by Lemma \ref{constructibility}. The desired conclusion of the proposition follows by combining those two  AP-reductions. 
\end{proofof}


In Proposition \ref{no-affine-and-IM2}, 
we have discussed constraints with imp support. Our second focal point is to  discuss constraints that {\em lack} imp support, provided that they are chosen from the outside of $\ED\cup\NZ$.   

\begin{proposition}\label{IM-XOR-IMP}
Let $f$ be any constraint not in $\ED\cup\NZ$. If $f$ has no imp support, 
then $\sharpcspstar(OR,\FF)\APreduces \sharpcspstar(f,\FF)$ for any constraint set $\FF$. 
\end{proposition}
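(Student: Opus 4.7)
The plan is to induct on the arity $k$ of $f$ and, for each $k$, extract a suitable binary constraint via pinning, linking, and—in the hardest subcase—weighted projection, all guided by Lemma~\ref{DGJ10-IMP-property}.

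For the base case $k=2$: Lemma~\ref{binary-IM-IMP} converts the ``no imp support'' hypothesis into $f\not\in\IM$, and together with $f\not\in\NZ$ and $f\not\equiv 0$ (the identically-zero function lies in $\ED$, so $f\not\in\ED$ forces $f\not\equiv 0$), Lemma~\ref{arity-2-IM-support} forces $f=(a,b,c,d)$ with $ad=0$ and $bc\neq 0$. The case $a=d=0$ would exhibit $(0,b,c,0)=[b,c](x_1)\cdot[1,1](x_2)\cdot XOR(x_1,x_2)$ with $bc\neq 0$, putting $f\in\ED$ and contradicting the hypothesis; hence exactly one of $a,d$ vanishes, $f$ normalizes to $(0,x,y,1)$ or $(1,x,y,0)$ with $xy\neq 0$, and Lemma~\ref{0-b-c-case} combined with Lemma~\ref{constructibility} delivers the AP-reduction.

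For the induction step $k\geq 3$: since $R_f$ is non-empty, is a proper subset of $\{0,1\}^k$, and lies outside $IMP$, Lemma~\ref{DGJ10-IMP-property} supplies distinct $a,b\in R_f$ with $a\wedge b\not\in R_f$ or $a\vee b\not\in R_f$. Partition $[k]$ into blocks $S_{00},S_{01},S_{10},S_{11}$ according to $(a_i,b_i)$, pin every coordinate in $S_{00}$ to $0$ and every coordinate in $S_{11}$ to $1$, and link all of $S_{01}$ to a single fresh variable $y_1$ and all of $S_{10}$ to a fresh variable $y_2$. This T-constructs a binary constraint $h\leq_{con}f$ with $h(0,1)=f(a)\neq 0$, $h(1,0)=f(b)\neq 0$, $h(0,0)=f(a\wedge b)$, $h(1,1)=f(a\vee b)$, and at least one of the two corner entries equal to $0$. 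If exactly one corner vanishes, the normalized $h$ is of the form $(0,x,y,1)$ or $(1,x,y,0)$ with $xy\neq 0$, and Lemma~\ref{0-b-c-case} finishes.

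The main obstacle is the remaining XOR-like subcase $h=(0,p,q,0)$ with $pq\neq 0$, because this $h$ itself lies in $\ED$ and is useless in isolation. To resolve it, I would first feed $f$ into Lemma~\ref{simple-form}: if the simple-form factor $g\leq_{con}f$ has arity strictly smaller than $k$, the second part of that lemma preserves $g\not\in\ED$ and the absence of imp support, so the induction hypothesis together with Lemma~\ref{constructibility} finishes the case. Otherwise $f$ is already in simple form, and I would replace the rigid linking operation by a flexible \emph{weighted projection}: pick one coordinate $i^*\in S_{01}$ and one $j^*\in S_{10}$, attach a generic non-zero unary weight $[\lambda_{l,0},\lambda_{l,1}]$ to every other coordinate $l$ (including those in $S_{00}\cup S_{11}$, now left unpinned), and sum out all those coordinates to T-construct a binary constraint $g'(x_{i^*},x_{j^*})$. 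The goal is to show that for generic choices of the $\lambda$'s one has $g'\in\NZ\setminus\DG$; then Lemma~\ref{affine-PP-induction} yields a binary $h'=(1,x,y,z)$ with $xyz\neq 0$ and $xy\neq z$ T-constructible from $g'$, and Proposition~\ref{1-x-y-z-case} delivers $\sharpcspstar(OR,\FF)\APreduces\sharpcspstar(h',\FF)\APreduces\sharpcspstar(f,\FF)$.

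The hardest single step is proving that some weighted projection really lands in $\NZ\setminus\DG$. The $\NZ$ part requires that the projection of $R_f$ onto coordinates $(i^*,j^*)$ is all of $\{0,1\}^2$, which should follow from $f$ being in simple form after choosing $i^*,j^*$ cleverly, since the columns of the representing Boolean matrix $M_f$ are distinct and non-constant. The $\DG$ part requires ruling out the rank-one identity $g'(0,0)g'(1,1)=g'(0,1)g'(1,0)$ for a generic weighting; were that identity forced for every choice of the $\lambda$'s, one would feed the resulting structural information back through Corollary~\ref{AG-vs-DD} (which collapses $\ED$ and $\DG$ on non-zero constraints) and a column-wise inspection reminiscent of the sweeping procedure in Lemma~\ref{simple-form} to produce an $\ED$-factorization of $f$, contradicting the hypothesis $f\not\in\ED$.
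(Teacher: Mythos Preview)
Your overall strategy matches the paper's: reduce to a binary constraint by pinning the coordinates where $a$ and $b$ agree and linking each of the two ``disagreement'' blocks to a single variable, then invoke Lemma~\ref{0-b-c-case}. The construction of $h$ and the computation $h(0,1)=f(a)$, $h(1,0)=f(b)$, $h(0,0)=f(a\wedge b)$, $h(1,1)=f(a\vee b)$ are exactly what the paper does (you should also note, as the paper does in its claim~$(*)$, that both $S_{01}$ and $S_{10}$ are nonempty, but this is immediate from $a,b\in R_f$ and $a\wedge b\notin R_f$ or $a\vee b\notin R_f$).

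Where you go astray is in treating the ``XOR-like'' subcase $h=(0,p,q,0)$ as a genuine obstacle requiring simple-form reductions and a weighted-projection genericity argument. That entire detour is unnecessary, and in fact the subcase is \emph{vacuous}. Your construction of $h$ from $f$ uses only pinning and linking---no projection---so Lemma~\ref{IMP-substitute}(2) applies and gives $h\notin\ED$ directly from $f\notin\ED$. But $(0,p,q,0)$ with $pq\neq 0$ lies in $\ED$ (you even wrote down the factorization yourself in the $k=2$ base case). Hence $h$ cannot have this form, and one of $h(0,0),h(1,1)$ must be nonzero; Lemma~\ref{0-b-c-case} then finishes immediately. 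This is precisely the paper's argument: it observes that the projection-free T-construction transfers $f\notin\ED$ to $g\notin\ED$, and then notes that $g(0,0)=0$ would force $g\in\ED$, a contradiction.

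Your proposed fallback via generic weighted projections is not only unnecessary but also left as a sketch: the claim that a universal rank-one identity on all weightings would force $f\in\ED$ is plausible but unproved, and the $\NZ$ part (full projection onto $\{0,1\}^2$) is asserted rather than established. Drop this entire branch and invoke Lemma~\ref{IMP-substitute}(2) instead.
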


The proof of this proposition relies on Lemma \ref{arity-2-IM-support}, which gives  a complete characterization of binary  
constraints inside $\IM\cup\NZ$. The proposition is proven easily by an  
assist of Lemma \ref{IMP-substitute} as well.  

\begin{proofof}{Proposition \ref{IM-XOR-IMP}}
Let $f$ be any constraint of arity $k\geq1$ and assume that $f$ has no imp support and $f\not\in\ED\cup\NZ$.  
Our proof proceeds by induction on $k$. 
The base case $k=1$ is  trivial since all unary constraints belong to 
$\ED$.  Next, assume that $k=2$. Notice that $f$ cannot be in $\IM$ since $R_f\not\in IMP$ by Lemma \ref{binary-IM-IMP}. We then  apply Lemma \ref{arity-2-IM-support} to $f$. 
It then follows that $f$ must have one of the following forms: $(0,b,c,0)$, $(0,b,c,d)$, and $(a,b,c,0)$. Since $f\not\in\ED$, $f$ cannot be of the form $(0,b,c,0)$. In all the other cases, Lemma \ref{0-b-c-case} establishes 
an AP-reduction from $\sharpcspstar(OR,\FF)$ to $\sharpcspstar(f,\FF)$ for any constraint set $\FF$.   

Finally, assume that $k\geq3$. Now, we want to build   
a constraint $g\not\in\ED\cup\NZ$ of arity two such that 
$g\leq_{con} f$ and $g$ has no imp support.
Since $R_f\not\in IMP\cup\NZ$,  
Lemma  \ref{DGJ10-IMP-property} supplies two vectors  
$a=(a_1,\ldots,a_k)$ and $b=(b_1,\ldots,b_k)$ in $R_f$ satisfying 
either $a\wedge b\not\in R_f$ or $a\vee b\not\in R_f$ (or both). 
First, we will claim that (*) there are indices $i,j\in[k]$ such that $(a_i,b_i)=(0,1)$ and $(a_j,b_j)=(1,0)$. Assume otherwise; namely, 
either $(a_i,b_i)\in\{(0,1),(0,0),(1,1)\}$ for all $i\in[k]$ or $(a_i,b_i)\in\{(1,0),(0,0),(1,1)\}$ for all $i\in[k]$.  Let us consider the case where $a\vee b\not\in R_f$.  It easily  
follows that either $a = a\vee b$ or $b=a\vee b$ holds. This is a contradiction against the assumption $a\vee b\not\in R_f$. The other case where $a\wedge b\not\in R_f$ is similarly treated. Therefore, the claim (*) should hold.  

Hereafter, we assume that $a\vee b\not\in R_f$ since the other case 
(\ie $a\wedge b\not\in R_f$) is similarly handled. 
For simplicity, let $(a_1,b_1)=(0,1)$ and $(a_2,b_2)=(1,0)$. Now, we recursively define a new constraint $g$. Initially, we set $f_2= f$. 
If $f_{i-1}$ ($3\leq i\leq k$) has been already defined, 
then we define $f_{i}$ as follows. 
For each bit $c\in\{0,1\}$, if $(a_i,b_i)=(c,c)$, then  set  
$f_{i}= f_{i-1}^{x_i=c}$. If $(a_i,b_i)=(a_1,b_1)$, then let 
$f_{i} = f_{i-1}^{x_i=x_1}$. If $(a_i,b_i)=(a_2,b_2)$, then let 
$f_{i} = f_{i-1}^{x_i=x_2}$. Finally, we define $g$ to be  $f_{k}$. 
By this construction of $g$, $(0,1)$ and $(1,0)$ are 
in $R_g$; however, $(1,1)$ 
is not in $R_f$ because $a\vee b= (1,1,c_3,c_4,\ldots,c_k)\not\in R_f$ (for certain bits $c_i$'s) implies $g(1,1)=0$. In summary, it holds that $g(0,1) g(1,0)\neq0$ and 
$g(0,0) g(1,1)=0$. Lemma \ref{arity-2-IM-support} then concludes  
that $g$ is not in $\IM\cup\NZ$.  In particular, since $g$ is of arity two, $g$ has no imp support by Lemma \ref{arity-2-IM-support}. 
Moreover, the above construction is actually {\em T-construction}, and thus this fact ensures that $g\leq_{con} f$. Because this T-construction obviously uses no projection operation, by Lemma \ref{IMP-substitute}(2), $f\not\in\ED$ implies $g\not\in\ED$.  
To end our proof, we will claim that $g(0,0)\neq0$. Assume otherwise; namely,  $g$ has the form $(0,x,y,0)$ with $xy\neq0$. Obviously, $g$ belongs to $\ED$, a contradiction. Hence, $g(0,0)\neq0$ holds. We then conclude that  $g$ equals $(w,x,y,0)$ for certain non-zero constants $x,y,w$. By Lemma \ref{0-b-c-case}, it follows that $\sharpcspstar(OR,\FF)\APreduces \sharpcspstar(g,\FF)$. Since $g\leq_{con}f$, we obtain the desired consequence.    
\end{proofof}

\section{Dichotomy Theorem}\label{sec:main-theorem}

Our dichotomy theorem states that all counting problems of the form $\sharpcspstar(\FF)$ can be classified into exactly two categories, one of which consists of polynomial-time solvable problems and the other consists of $\#\mathrm{SAT}_{\complex}$-hard problems, assuming that $\#\mathrm{SAT}_{\complex}\not\in\fp_{\complex}$.   
This theorem steps forward in a direction toward a full analysis of a more general form of constraints than Boolean constraints.  The theorem also gives an approximation version of the dichotomy theorem of Cai \etal for exact counting problems (described in Section \ref{sec:introduction}).  Here, we rephrase our main theorem given in Section \ref{sec:introduction} as follows. 

\ms
\n{\bf {\em Theorem 1.1}} (rephrased)\hs{2}
{\em Let $\FF$ be any set of constraints. If 
$\FF\subseteq \ED$, then $\sharpcspstar(\FF)$ is in $\fp_{\complex}$. 
Otherwise, $\#\mathrm{SAT}_{\complex} \APreduces \sharpcspstar(\FF)$ holds.}
\ms

Through Sections \ref{sec:constraint-set} to \ref{sec:IM2-support}, we have developed necessary foundations to the proof of this dichotomy theorem, and now we are ready to apply them properly to prove the theorem. A center point of the proof of the theorem is the next proposition. To simplify a later discussion, the proposition targets only a single constraint, instead of a set of constraints as in the theorem.  


\begin{proposition}\label{key-proposition}
Let $f$ be any constraint. 
If $f$ is not in $\ED$, then  $\sharpcspstar(OR,\FF)\APreduces \sharpcspstar(f,\FF)$ holds for any set $\FF$ of constraints.
\end{proposition}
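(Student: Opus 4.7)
The plan is to split the argument into the two natural cases suggested by the outline of the main theorem, according to whether or not $f$ has imp support, and then to split the second case further according to whether $f$ is non-zero. In each branch, the hard work has already been done in Propositions~\ref{no-affine-and-IM2}, \ref{1-x-y-z-case}, \ref{IM-XOR-IMP} and Lemma~\ref{affine-PP-induction}; what remains is to verify that their hypotheses are met and to glue the resulting AP-reductions together through Lemma~\ref{constructibility}.

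First I would handle the case where $f$ has imp support. Because $f \notin \ED$ by assumption, this situation falls exactly under the hypothesis of Proposition~\ref{no-affine-and-IM2}, which directly yields $\sharpcspstar(OR,\FF)\APreduces \sharpcspstar(f,\FF)$. So this case requires no additional argument.

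Next I would treat the case where $f$ does not have imp support, splitting on whether $f \in \NZ$. If $f \in \NZ$, then because $f \notin \ED$ by hypothesis, Lemma~\ref{AG-vs-DD} (which asserts $f\in\DG \Leftrightarrow f\in\ED$ for non-zero $f$) forces $f \notin \DG$. Note also that $f$ cannot be unary, since every unary constraint lies in $\UU \subseteq \ED$; hence the arity of $f$ is at least $2$, and Lemma~\ref{affine-PP-induction} applies to produce a binary constraint $h = (1,x,y,z)$ with $xyz \neq 0$ and $xy \neq z$ satisfying $h \leq_{con} f$. Proposition~\ref{1-x-y-z-case} then gives $\sharpcspstar(OR,\FF)\APreduces \sharpcspstar(h,\FF)$, and Lemma~\ref{constructibility} composes this with $h \leq_{con} f$ to yield $\sharpcspstar(OR,\FF)\APreduces \sharpcspstar(f,\FF)$. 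If instead $f \notin \NZ$, then $f \notin \ED \cup \NZ$ and $f$ lacks imp support, so Proposition~\ref{IM-XOR-IMP} applies directly to deliver the desired AP-reduction.

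The only real subtlety I anticipate is bookkeeping: ensuring that the hypotheses of each invoked result are met exactly, in particular that in the non-zero subcase we can legitimately discard the arity-one possibility and that the passage from $h$ to $f$ via T-constructibility composes correctly with the AP-reduction targeting $h$. These are routine, so the proof is essentially a short case distinction that assembles the prior propositions into a single statement.
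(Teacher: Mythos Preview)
Your proposal is correct and mirrors the paper's proof almost exactly: the same three-way case split (imp support; no imp support and $f\in\NZ$; no imp support and $f\notin\NZ$), invoking Propositions~\ref{no-affine-and-IM2}, \ref{1-x-y-z-case}, \ref{IM-XOR-IMP} and Lemma~\ref{affine-PP-induction} in the same places. Your explicit check that $f$ has arity at least $2$ before invoking Lemma~\ref{affine-PP-induction} is a small clarification the paper leaves implicit, and your route to $f\notin\DG$ via Lemma~\ref{AG-vs-DD} is equivalent to the paper's direct use of $\DG\subseteq\ED$.
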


\begin{proof}
Let $f$ be any constraint not in $\ED$. Moreover, let $\FF$ be any constraint set.  We want to establish an AP-reduction from $\sharpcspstar(OR,\FF)$ to $\sharpcspstar(f,\FF)$.  
First, suppose that $f$ has imp support.  
Since $f\not\in\ED$, we apply Proposition \ref{no-affine-and-IM2} and instantly obtain the desired AP-reduction from  
$\sharpcspstar(OR,\FF)$ to $\sharpcspstar(f,\FF)$, as requested. 
Next, suppose that $f$ has no imp support.  
To finish the proof, we hereafter consider two independent cases.  

[Case: $f\not\in\NZ$] Since $f\not\in\ED\cup\NZ$, Proposition \ref{IM-XOR-IMP} leads to an AP-reduction from $\sharpcspstar(OR,\FF)$ to 
$\sharpcspstar(f,\FF)$. 

[Case: $f\in \NZ$] 
Notice that $f\not\in\DG$ since $\DG\subseteq\ED$.   
Lemma \ref{affine-PP-induction} provides a constraint 
$h=(1,x,y,z)$ satisfying that $xyz\neq0$, $z\neq xy$, and $h\leq_{con}f$. 
To this $h$, we apply Proposition \ref{1-x-y-z-case}, from which it follows that 
$\sharpcspstar(OR,\FF)\APreduces \sharpcspstar(h,\FF)$. Since $h\leq_{con}f$, Lemma \ref{T-con-to-AP-reduction} implies $\sharpcspstar(h,\FF)\APreduces \sharpcsp(f,\FF)$. Combining those two AP-reductions, we obtain the desired AP-reduction from $\sharpcspstar(OR,\FF)$ to $\sharpcspstar(f,\FF)$. 

Therefore, we have completed the proof.
\end{proof}

Finally, we give the long-awaited proof of Theorem \ref{dichotomy-theorem} and accomplish the main task of this paper.

\begin{proofof}{Theorem \ref{dichotomy-theorem}}
Let $\FF$ be any constraint set. If $\FF\subseteq\ED$, then 
Lemma \ref{basic-case-FPC} implies that $\sharpcspstar(\FF)$ 
belongs to $\fp_{\complex}$. 
Henceforth, we assume that  $\FF\nsubseteq \ED$. 
{}From this assumption, we choose a constraint $f\in\FF$ for which $f\not\in\ED$.  
Proposition \ref{key-proposition} then yields the AP-reduction:  $\sharpcspstar(OR)\APreduces \sharpcspstar(f)$.  
Since $f\in\FF$, it holds that 
$\sharpcspstar(f)\APreduces \sharpcspstar(\FF)$. By the transitivity of AP-reducibility,   
$\sharpcspstar(OR)\APreduces \sharpcspstar(\FF)$ follows. Note that, 
by Lemma \ref{SAT-to-OR}, we obtain  $\#\mathrm{SAT}_{\complex} \APreduces \sharpcspstar(OR)$.   Therefore, we conclude that  $\#\mathrm{SAT}_{\complex}$ is AP-reducible to $\sharpcspstar(\FF)$. 
\end{proofof}


As demonstrated in Theorem \ref{dichotomy-theorem}, a free use of unary constraint helps us obtain a truly stronger claim---dichotomy theorem---than a trichotomy theorem of Dyer \etalc~\cite{DGJ09} on unweighted 
Boolean \#CSPs. Is this phenomenon an indication that we could eventually prove a similar type of {\em dichotomy theorem} for all weighted Boolean \#CSPs? 
In our dichotomy theorem, we have shown that all seemingly intractable   \#CSPs are at least as hard as $\#\mathrm{SAT}_{\complex}$. Are those  problems are all AP-equivalent to $\#\mathrm{SAT}_{\complex}$?  Those questions demonstrate that we still have a long way to acquire a full understanding of the approximation complexity of the weighted \#CSPs. 


Next, we wish to prove Corollary \ref{algebraic-main-theorem}, which strengthens Theorem \ref{dichotomy-theorem} when limiting the free use of ``arbitrary'' constraints within ``algebraic'' constraints. Recall from Section \ref{sec:introduction} that an algebraic constraint outputs only algebraic complex numbers. Moreover, we  recall the notations  $\sharpcspplus(\FF)$ and $\#\mathrm{SAT}_{\algebraic}$ from Section \ref{sec:upper-bound}. Similarly, we write $\sharpcspstar_{\algebraic}(\FF)$ to denote $\sharpcspstar(\FF)$ whose instances are only algebraic constraints. 
Now, let us re-state the corollary given in Section \ref{sec:introduction}. 

\ms
\n{\bf {\em Corollary 1.2}} (rephrased)\hs{2}
{\em Let $\FF$ be any set of constraints. If $\FF\subseteq\ED$, then $\sharpcspplus_{\algebraic}(\FF)$ is in $\fp_{\algebraic}$; otherwise, $\#\mathrm{SAT}_{\algebraic}\APreduces \sharpcspplus_{\algebraic}(\FF)$ holds. }
\ms

Earlier, Dyer \etalc~\cite{DGJ09} demonstrated how to eliminate two constant constraints---$\Delta_0$ and $\Delta_1$---using randomized 
approximation schemes for unweighted Boolean \#CSPs. Similarly,   
we can eliminate those two constraints and thus prove the corollary by approximating them by 
any two non-zero constraints of the following forms: $[1,\lambda]$ and $[\lambda,1]$ with $|\lambda|<1$. 
In Lemma \ref{equivalent-plus-star}, we will demonstrate how to eliminate from $\sharpcspplus_{\algebraic}(\FF)$ all unary constraints whose output values contain zeros. This elimination is possible by a use of AP-reductions and this exemplifies a significance of the AP-reducibility.   

\begin{lemma}\label{equivalent-plus-star}
For any constraint set $\FF$, it holds that $\sharpcspstar_{\algebraic}(\FF) \APequiv \sharpcspplus_{\algebraic}(\FF)$. 
\end{lemma}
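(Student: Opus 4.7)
The plan is to establish the two AP-reductions separately. The direction $\sharpcspplus_{\algebraic}(\FF)\APreduces \sharpcspstar_{\algebraic}(\FF)$ is immediate from Lemma~\ref{T-con-to-AP-reduction}, since every algebraic instance of the former is already a legal algebraic instance of the latter with the same value. For the converse, the task is to simulate the unary constraints in $\UU\setminus\NZ$, which up to rescaling are just $\Delta_0$ and $\Delta_1$. Given an input $\Omega$ to $\sharpcspstar_{\algebraic}(\FF)$ with error tolerance $\varepsilon$, my strategy is a single-step perturbation: fix a small non-zero algebraic number $\lambda$ (to be chosen later) and form $\Omega_{\lambda}$ by replacing every occurrence of $\Delta_0$ in $\Omega$ by $[1,\lambda]$ and every occurrence of $\Delta_1$ by $[\lambda,1]$. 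Both replacements lie in $\UU\cap\NZ$ and are algebraic, so $\Omega_{\lambda}$ is a legal input to $\sharpcspplus_{\algebraic}(\FF)$; expanding the definition shows $\csp_{\Omega_{\lambda}}=\csp_{\Omega}+\lambda A_1(\Omega)+\lambda^2 A_2(\Omega)+\cdots$, a polynomial in $\lambda$ whose constant term is exactly $\csp_{\Omega}$ and whose remaining coefficients $A_k$ are bounded in absolute value by a polynomial-time computable quantity $M(\Omega)$ (e.g., $2^n$ times a product of the magnitudes of all entries of constraints appearing in $\Omega$). Thus $|\csp_{\Omega_{\lambda}}-\csp_{\Omega}|\leq 2|\lambda|M(\Omega)$ whenever $|\lambda|\leq 1/2$.

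The delicate point is that the approximation definition of Section~\ref{sec:randomized-scheme} demands the reduction return exactly $0$ whenever $\csp_{\Omega}=0$, while an oracle for $\sharpcspplus_{\algebraic}(\FF)$ will generally return a non-zero approximation of $\csp_{\Omega_{\lambda}}$. Here algebraicity is essential: $\csp_{\Omega}$ lies in the number field $K$ generated by the finitely many algebraic weights in $\Omega$, so Liouville-style height bounds yield an explicit, polynomial-time computable threshold $B(\Omega)>0$ such that $\csp_{\Omega}\neq 0$ implies $|\csp_{\Omega}|\geq B(\Omega)$. I then choose $\lambda\in\algebraic\setminus\{0\}$ of polynomial bit-complexity with $|\lambda|$ so small that $|\csp_{\Omega_{\lambda}}-\csp_{\Omega}|\leq B(\Omega)/4$, query the oracle on $\Omega_{\lambda}$ with tolerance $\delta=\varepsilon/4$ to obtain $\hat v$, and output $0$ if $|\hat v|\leq B(\Omega)/2$ and $\hat v$ otherwise. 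If $\csp_{\Omega}=0$, then $|\csp_{\Omega_{\lambda}}|\leq B(\Omega)/4$ and $|\hat v|\leq 2^{\delta}B(\Omega)/4\leq B(\Omega)/2$, so the output is correctly $0$. If $\csp_{\Omega}\neq 0$, the factorisation $\hat v/\csp_{\Omega}=(\hat v/\csp_{\Omega_{\lambda}})\cdot(\csp_{\Omega_{\lambda}}/\csp_{\Omega})$ splits the multiplicative error into two pieces, each of which is forced within $2^{\varepsilon/2}$-tolerance in both absolute value and argument by the choices of $\delta$ and $\lambda$, yielding the required $2^{\varepsilon}$-approximation; moreover $1/\delta$ is polynomial in $1/\varepsilon$, so the reduction is a valid AP-reduction.

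The principal obstacle I expect is making the lower bound $B(\Omega)$ effective enough to be written down in polynomial time: one must track the field degree and naive (or Mahler) heights of the finitely many algebraic input constants through the polynomial-size sum-of-products expression that defines $\csp_{\Omega}$ and verify that $\log(1/B(\Omega))$ stays polynomial in $|\Omega|$, so that an algebraic $\lambda$ of the prescribed exponentially small magnitude but polynomial bit-complexity exists and can be exhibited efficiently. Once this effectivity is in place, the remaining inequalities reduce to standard tolerance-composition bookkeeping for the absolute value and the argument.
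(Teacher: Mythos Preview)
Your proposal is correct and follows essentially the same approach as the paper: replace the constant constraints by nearby non-zero algebraic unary constraints, invoke an effective algebraic-number lower bound (the paper cites Stolarsky's Lemma~\ref{complex-lower-bound} for what you call a Liouville-style height bound) to obtain a threshold $B(\Omega)$, and then threshold the oracle's answer. The only cosmetic differences are that the paper first merges all $\Delta_0$-pinned variables into a single variable so that the perturbed value is \emph{linear} in the perturbation parameter (your expansion is polynomial, which works just as well), and it eliminates $\Delta_0$ and $\Delta_1$ in two separate passes rather than simultaneously; also be sure when you write up the details that your smallness condition on $\lambda$ is tightened to depend on $\varepsilon$ as well as $B(\Omega)$, since the bound $|\csp_{\Omega_\lambda}-\csp_\Omega|\leq B(\Omega)/4$ alone controls the thresholding step but not the multiplicative ratio $\csp_{\Omega_\lambda}/\csp_\Omega$ to within $2^{\varepsilon/2}$.
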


An argument of Dyer \etalc~\cite{DGGJ03} for their claim of eliminating both  $\Delta_0$ and $\Delta_1$ exploits their use of non-negative integers. However, since our target is {\em arbitrary} (algebraic) complex numbers, the proof of Lemma \ref{equivalent-plus-star} demands a quite different argument.  To make the paper readable, we postpone the proof until the last section.  Finally, we will give the proof of Corollary \ref{algebraic-main-theorem} using Lemma \ref{equivalent-plus-star}. 

\begin{proofof}{Corollary \ref{algebraic-main-theorem}}
Using Lemma \ref{equivalent-plus-star}, all results in this paper on $\sharpcspstar(\FF)$'s can be restated in terms of  $\sharpcspplus_{\algebraic}(\FF)$'s. Therefore, we obtain from Theorem \ref{dichotomy-theorem} that $\sharpcspplus_{\algebraic}(\FF)\in\fp_{\algebraic}$ if $\FF\subseteq\ED$, and $\sharpcspplus_{\algebraic}(OR)\APreduces \sharpcspplus_{\algebraic}(\FF)$ otherwise. It thus remains to show that $\#\mathrm{SAT}_{\algebraic} \APreduces \sharpcspplus_{\algebraic}(OR)$. 

As remarked in the end of Section \ref{sec:upper-bound}, following a similar argument given in the proof of Lemma \ref{SAT-to-OR}, it is possible to prove that $\#\mathrm{SAT}_{\algebraic}$ is AP-reducible to $\sharpcspstar_{\algebraic}(OR)$.  By Lemma \ref{equivalent-plus-star}, it follows that  $\#\mathrm{SAT}_{\algebraic} \APreduces \sharpcspplus_{\algebraic}(OR)$.  Therefore, the corollary holds. 
\end{proofof}

\section{Proofs of Lemmas \ref{constructibility} and \ref{equivalent-plus-star}}\label{sec:elimination}

This last section will fill the missing proofs of Sections \ref{sec:constructibility} and \ref{sec:main-theorem} to complete the proofs of our main theorem and its corollary. 
First, we will give the proof of Lemma \ref{equivalent-plus-star}. 
A use of algebraic numbers in the lemma ensures the correctness of a randomized approximation scheme  
used in the proof of the lemma. Underlying ideas of the scheme   
come from the proofs of \cite[Lemma 10]{DGJ09} and \cite[Theorem 3(2)]{Yam03}. Particularly, the latter relied on the following well-known 
lower bound of the absolute values of polynomials in algebraic numbers. 

\begin{lemma}\label{complex-lower-bound}{\rm \cite{Sto74}}\hs{1}
Let $\alpha_1,\ldots,\alpha_m\in\algebraic$ and let $c$ be the degree of $\rational(\alpha_1,\ldots,\alpha_m)/\rational$. There exists a constant $e>0$ that satisfies the following statement for any complex number $\alpha$ of the form $\sum_{k}a_{k}\left(\prod_{i=1}^{m}\alpha_i^{k_i}\right)$, where $k=(k_1,\ldots,k_m)$ ranges over $[N_1]\times\cdots\times[N_m]$, $(N_1,\ldots,N_m)\in\nat^{m}$, and $a_k\in\integer$. If $\alpha\neq0$, then $|\alpha|\geq \left(\sum_{k}|a_k|\right)^{1-c}\prod_{i=1}^{m}e^{-cN_i}$. 
\end{lemma}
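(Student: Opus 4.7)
The plan is to prove this as a Liouville-type bound by working in the number field $K := \rational(\alpha_1,\ldots,\alpha_m)$ of degree $c$ over $\rational$ and exploiting the product formula $|N_{K/\rational}(\beta)| = \prod_{j=1}^{c}|\sigma_j(\beta)|$ over the $c$ embeddings $\sigma_1 = \mathrm{id}, \sigma_2, \ldots, \sigma_c : K\hookrightarrow\complex$. The reason a nonzero element of $K$ cannot be too small in absolute value is that, once its denominator is cleared, the norm becomes a nonzero rational integer and therefore has absolute value at least $1$; comparing this global lower bound against trivial upper bounds on the $c-1$ non-identity conjugates then yields the claim.

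First I would clear denominators. For each $i\in[m]$ choose a positive rational integer $e_i$ with $\beta_i := e_i\alpha_i$ an algebraic integer, and set $D := \prod_{i=1}^m e_i^{N_i}$. Because every exponent satisfies $k_i\leq N_i$, the rewriting $D\alpha = \sum_k a_k\prod_i e_i^{N_i-k_i}\beta_i^{k_i}$ exhibits $D\alpha$ as an $\integer$-linear combination of algebraic integers, hence itself an algebraic integer of $K$. Therefore $N_{K/\rational}(D\alpha)\in\integer$, and since $\alpha\neq 0$ this norm is nonzero, so $|N_{K/\rational}(D\alpha)|\geq 1$. Using $N_{K/\rational}(D)=D^c$ and multiplicativity of the norm, this rearranges to $\prod_{j=1}^{c}|\sigma_j(\alpha)| = |N_{K/\rational}(\alpha)| \geq D^{-c} = \prod_{i=1}^m e_i^{-cN_i}$.

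Next I would upper-bound the $c-1$ non-identity conjugates and divide. Setting $B_{ij} := \max(|\sigma_j(\alpha_i)|,1)\geq 1$, one has $|\sigma_j(\alpha_i)|^{k_i}\leq B_{ij}^{N_i}$ for every $k_i\leq N_i$, so the triangle inequality gives $|\sigma_j(\alpha)|\leq (\sum_k|a_k|)\prod_i B_{ij}^{N_i}$ for each $j$. Since $|\alpha|=|\sigma_1(\alpha)|$ equals $|N_{K/\rational}(\alpha)|$ divided by $\prod_{j=2}^c|\sigma_j(\alpha)|$, combining the two bounds yields $|\alpha|\geq (\sum_k|a_k|)^{1-c}\prod_{i=1}^m (e_i^c\prod_{j=2}^c B_{ij})^{-N_i}$. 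Defining the absolute constant $e := \max_{1\leq i\leq m}(e_i\cdot\max_{1\leq j\leq c}B_{ij})$, which depends only on $\alpha_1,\ldots,\alpha_m$ and on $c$, one checks $e_i^c\prod_{j=2}^c B_{ij}\leq e^c$ for every $i$, and the displayed inequality of the lemma drops out.

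The only real subtlety is the bookkeeping: the hypothesis $k_i\leq N_i$ is essential so that the powers $e_i^{N_i-k_i}$ remain non-negative during denominator clearing, and the $c-1$ conjugate bounds together with the denominators must be packaged into a single constant $e$ whose exponent is exactly $-cN_i$ (rather than $-(c-1)N_i$ or some other combination). Once these choices are made correctly, everything else reduces to the product formula for norms and the standard fact that a nonzero rational integer has absolute value at least $1$.
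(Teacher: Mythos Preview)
Your argument is the standard Liouville--norm proof and is correct: clearing denominators makes $D\alpha$ an algebraic integer of $K$, the product formula gives $|N_{K/\rational}(\alpha)|\geq D^{-c}$, the triangle inequality bounds each non-identity conjugate by $(\sum_k|a_k|)\prod_i B_{ij}^{N_i}$, and the packaging $e:=\max_i(e_i\cdot\max_j B_{ij})$ indeed yields $e_i^{c}\prod_{j=2}^{c}B_{ij}=e_i\cdot\prod_{j=2}^{c}(e_iB_{ij})\leq e\cdot e^{c-1}=e^{c}$.

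There is nothing to compare here, however: the paper does not supply its own proof of this lemma. It is quoted verbatim from Stolarsky's monograph \cite{Sto74} and used as a black box inside the proof of Lemma~\ref{equivalent-plus-star}. Your write-up is thus an independent, self-contained derivation of the cited fact rather than an alternative to anything the authors wrote.
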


Now, we start the proof of Lemma \ref{equivalent-plus-star}.

\begin{proofof}{Lemma \ref{equivalent-plus-star}}
Since any input instance to $\sharpcspplus_{\algebraic}(\FF)$ is obviously an instance to $\sharpcspstar_{\algebraic}(\FF)$, it easily follows that $\sharpcspplus_{\algebraic}(\FF)$ is AP-reduced to $\sharpcspstar_{\algebraic}(\FF)$. Hereafter, we wish to prove the other direction, namely,  $\sharpcspstar_{\algebraic}(\FF)\APreduces 
\sharpcspplus_{\algebraic}(\FF)$.  Since    $\sharpcspstar_{\algebraic}(\FF)$ coincides with  $\sharpcspplus_{\algebraic}(\FF,\Delta_0,\Delta_1)$, we wish to 
demonstrate  how to eliminate $\Delta_0$ from $\sharpcspplus_{\algebraic}(\FF,\Delta_0,\Delta_1)$.   
Without loss of generality, we aim at proving   
that $\sharpcspplus_{\algebraic}(\FF,\Delta_0)$ is AP-reducible to $\sharpcspplus_{\algebraic}(\FF)$. 

Let $\Omega=(G,X|\FF',\pi)$ be any constraint frame given as an input instance to  $\sharpcspplus_{\algebraic}(\FF,\Delta_0)$, where $G=(V_1|V_2,E)$, $X=\{x_1,x_2,\ldots,x_n\}$, and $\FF'\subseteq \FF\cup\{\Delta_0\}$. Let $n$ be the number of distinct variables used in $G$. If $\FF$ contains $\Delta_0$, the lemma is trivially true. Henceforth, we assume that $\Delta_0\not\in \FF$.
Choose any complex number $\lambda$ satisfying $0<|\lambda|<1$ and 
define $u(x)=[1,\lambda]$, which is clearly in $\UU\cap\NZ$. 
For later use, let $|\Omega|$ denote $\prod_{v\in V_2}\max\{1,|f_v|\}$, where $|f_v|= \max\{|f_v(x)| \mid x\in\{0,1\}^k\}$ and $k$ is the arity of $f_v$. 

First, we will modify the graph $G$ as follows. Let us select all nodes in $V_1$ that are adjacent to certain nodes in $V_2$ having the label $\Delta_0$. 
We first merge all selected variable nodes into a single node, say, 
$v$ with a ``new'' label, say, $x$, and then delete all the nodes labeled $\Delta_0$ and their incident edges. Finally, we attach a ``new'' node labeled $\Delta_0$ to  the 
node $v$ by an additional single edge. It is not difficult to show that this modified graph produces the same output value as its original one. In what follows, we assume that the constraint $\Delta_0$ appears exactly once as a node label in the graph $G$ and it depends only on the variable $x$. 

Let $G_0$ be the graph obtained from $G$ by removing the unique node $\Delta_0$. Its associated constraint frame is briefly denoted $\Omega_0$. Note that $\csp_{\Omega_0}$ can be expressed as $\sum_{x\in\{0,1\}}h(x)$  using an appropriate complex-valued function $h$ depending on the value of 
$x$. With this $h$, $\csp_{\Omega}$ is calculated as   $\sum_{x\in\{0,1\}}h(x)\Delta_0(x)$, which obviously equals $h(0)$. 
Moreover, let $u_m = u^m$ for any fixed number $m\in\nat^{+}$. Denote by $G_m$ the graph obtained from $G$ by replacing $\Delta_0$ by $u_{m}$ and let $\Omega_m$ be its associated constraint frame.  Since $u_m=[1,\lambda^m]$, 
it holds that  
$\csp_{\Omega_m}=\sum_{x}h(x)u_m(x) = h(0)+\lambda^m h(1)$.  
Letting $K= h(1)$, we obtain  $\csp_{\Omega} = \csp_{\Omega_m} - \lambda^{m}K$. Note that, for each fixed variable assignment $\sigma:X\rightarrow\{0,1\}$, the product of the outcomes of all constraints is at most $|\Omega|$. Since  there are $2^n$ distinct variable assignments, $|K|$ is thus upper-bounded by 
$2^n|\Omega|$.  

Meanwhile, we assume that $\csp_{\Omega}\neq0$. Since all entries of any  constraint in $\FF'$ are taken from $\algebraic$, we want to apply Lemma \ref{complex-lower-bound}. For a use of this lemma, however, we need to express the value $|\csp_{\Omega}|$ using three series $\{a_{k}\}_{k}$, $\{\alpha_{i}\}_{i}$, and $\{k_i\}_{i}$ given in the lemma. Let us define them as follows.  
Let $I=\{\pair{v,w}\mid v\in V_2, w\in\{0,1\}^{r}\}$, where $r$ is the arity of $f_v$. Here, we assume a fixed enumeration of all elements in $I$. For each variable assignment $\sigma:X\rightarrow\{0,1\}$, we define  a vector  $k^{(\sigma)}=(k^{(\sigma)}_{i})_{i\in I}$ as follows: for each $i=\pair{v,w}\in I$, let $k^{(\sigma)}_{i} =1$ if $f_v$ depends on a certain variable series $(x_{i_1},\ldots,x_{i_r})$ and $w$ equals $(\sigma(x_{i_1}),\ldots,\sigma(x_{i_k}))$; otherwise, let $k^{(\sigma)}_{i}=0$. 
Moreover, let $N_i$ ($i\in I$) equal $1$ and set $N=\prod_{i\in I}[N_i]$.  For any vector $k\in N$, let $a_{k} = 1$ if there exists a valid assignment $\sigma$ satisfying $k=k^{(\sigma)}$; otherwise, let $a_{k}=0$. 
Finally, let $\alpha_{\pair{v,w}}= f_{v}(w)$, where $\pair{v,w}\in I$. 
By these definitions, the value $\csp_{\Omega} = \sum_{\sigma}\prod_{v\in  V_2}f_{v}(\sigma(x_{i_1}),\ldots,\sigma(x_{i_k}))$ equals $\sum_{k\in N}a_{k}\left(\prod_{i\in I}\alpha^{k_i}_{i}\right)$. 
Now, Lemma \ref{complex-lower-bound} provides two constants $c,e>0$ for which $|\csp_{\Omega}|$ is lower-bounded by the value $\left(\sum_{k\in N}a_k\right)^{1-c}\prod_{i\in I}e^{-cN_i}$. For our purpose, we set $d= (1/2)\left(\sum_{k\in N}a_k\right)^{1-c}\prod_{i\in I}e^{-cN_i}$, from which we obtain $|\csp_{\Omega}|>d$. 
For convenience, whenever $d\geq1$, we automatically set $d=1/2$ so that we can always assume that $0<d<1$. 

Now, we will describe an AP-reduction $N$ from $\sharpcspplus_{\algebraic}(\FF,\Delta_0)$ to $\sharpcspplus_{\algebraic}(\FF)$.  Let $(\Omega,1/\epsilon)$ be any input to the algorithm $N$, where $\epsilon\in(0,1)$ is an error tolerance parameter. Without loss of generality, we assume that $0<\epsilon<1/4$.   
Let $M$ be an arbitrary oracle that is a randomized approximation scheme  designed to solve $\sharpcspplus_{\algebraic}(\FF)$. 
To compute the value $\csp_{\Omega}$ on the given input $(\Omega,1/\epsilon)$,  $N$ works as follows.  

\begin{quote}
Let $\delta =\epsilon/2$ and choose a positive integer $m$ for which (i) $2^{n+\delta}|\Omega||\lambda|^m \leq (1-2^{-\delta})d$ and (ii) $2^n|\Omega||\lambda|^m \leq \delta' d$, where $n =|X|$ and $\delta' = \frac{\pi}{3}\delta$.  
Next, $N$ constructs the constraint frame $\Omega_m$ from $\Omega$.
To the oracle $M$, $N$ makes a query with a query word $(\Omega_m,1/\delta)$. 
Let $z$ denote an oracle answer from $M$. Notice that $z$ is a random variable.  If $|z|<d$, then $N$ outputs $0$; otherwise, it outputs $z$. 
\end{quote}

The correctness of the above algorithm $N$ is shown as follows. 
Let us focus on the case where $\csp_{\Omega}=0$; in other words, 
$\csp_{\Omega_m} - \lambda^mK =0$, which is equivalent to $\csp_{\Omega_m}= \lambda^mK$. 
Since $z$ is a $2^{\delta}$-approximate solution for $\csp_{\Omega_m}$, 
the value $z$ must satisfy that  $2^{-\delta}|\csp_{\Omega_m}|\leq |z|\leq 2^{\delta}|\csp_{\Omega_m}|$. 
It thus follows by Condition (i) that    
\[
|z| \leq 2^{\delta}|\csp_{\Omega_m}| \leq 2^{\delta}|\lambda^mK|
 \leq 2^{n+\delta}|\Omega||\lambda^m| \leq (1-2^{-\delta})d < d.
\] 
In this case, the algorithm $N$ outputs $0$, that is, $N(\Omega,1/\varepsilon)=0$. This means that $N$ correctly computes $\csp_{\Omega}$ with high probability.  

Let us consider the other case where $\csp_{\Omega}\neq0$.  
Due to the choice of $d$, $|\csp_{\Omega}|>d$ holds.  
Since the oracle returns a $2^{\delta}$-approximate solution $z$ for 
$\csp_{\Omega_m}$, it follows that $2^{-\delta}|\csp_{\Omega_m}|\leq |z| \leq 2^{\delta}|\csp_{\Omega_m}|$. 
Now, we want to show that (iii) $2^{-\epsilon}|\csp_{\Omega}|\leq 2^{-\delta}(|\csp_{\Omega}|-|\lambda^mK|)$ and 
(iv) $2^{\delta}(|\csp_{\Omega}|+|\lambda^mK|)\leq 2^{\epsilon}|\csp_{\Omega}|$, because these bounds together imply 
\[
2^{-\epsilon}|\csp_{\Omega}|\leq 2^{-\delta}(|\csp_{\Omega}|-|\lambda^mK|) \leq  |z|  
\leq  2^{\delta}(|\csp_{\Omega}|+|\lambda^mK|)\leq 2^{\epsilon}|\csp_{\Omega}|.
\]
In other words, $2^{-\epsilon}\leq \left| z/\csp_{\Omega} \right|\leq 2^{\epsilon}$ holds. 
Our next task is to prove  Conditions (iii) and (iv). 
Condition (iii) is equivalent to $|\lambda^mK|\leq (1-2^{-\delta})|\csp_{\Omega}|$, whereas Condition (iv) is equivalent to $|\lambda^mK| \leq (2^{\delta}-1)|\csp_{\Omega}|$.  
Since $2^{\delta}-1\geq 1-2^{-\delta}$ holds for our choice of $\delta$, Condition (iv) follows instantly from Condition (iii).   
By Condition (i), we obtain  $|\lambda^mK|\leq 2^n|\Omega||\lambda|^m\leq (1-2^{-\delta})d$.  This immediately implies Condition (iii) since $|\csp_{\Omega}|>d$. 

To complete the proof, we still need to show that $|\arg(z/\csp_{\Omega})|\leq\varepsilon$ whenever $\csp_{\Omega}\neq0$.   Let us assume that $\csp_{\Omega}\neq0$.  Since $z$ is an output of $M$ on the input $(\Omega_{m},1/\delta)$, it holds that $|\arg(z)-\arg(\csp_{\Omega_m})|\leq \delta$. Now, we set $\theta=|\arg(\csp_{\Omega_m}) - \arg(\csp_{\Omega})|$. Notice that the value $\theta$ represents an angle in the complex plane between two ``vectors'' $\csp_{\Omega_m}$ and $\csp_{\Omega}$.  Since $\csp_{\Omega_m} = \csp_{\Omega}+\lambda^mK$, the value  $\theta$ is maximized when the vector  $\lambda^mK$ is perpendicular to the line extending the vector  $\csp_{\Omega_m}$. This implies that $|\csp_{\Omega}|\sin\theta \leq |\lambda^mK|$.   
Condition (ii) implies that $|\lambda^mK|\leq 2^n|\Omega||\lambda|^m\leq \delta'd$. Because $|\csp_{\Omega}|>d$, we also obtain $\sin\theta\leq \frac{|\lambda^mK|}{|\csp_{\Omega}|}\leq \frac{\delta' d}{d} = \delta'$.  Since $\delta'=\frac{\pi}{3}\delta = \frac{\pi}{6}\varepsilon<\frac{\pi}{24}<\frac{1}{2}$, we may assume that $0\leq \theta\leq \frac{\pi}{6}$. Within this range, it always holds that $\frac{\pi}{3}\theta \leq \sin\theta$. Therefore, we conclude that $\theta \leq \frac{3}{\pi}\delta' = \delta$. By the triangle inequality, it thus follows that 
\[
|\arg(z)-\arg(\csp_{\Omega})| \leq |\arg(z)-\arg(\csp_{\Omega_m})| + |\arg(\csp_{\Omega_m})-\arg(\csp_{\Omega})|\leq \delta+\delta = \varepsilon.
\]

By following the above argument closely, it is also possible to prove that $\sharpcspplus_{\algebraic}(\FF,\Delta_1)$ is AP-reducible to $\sharpcspplus_{\algebraic}(\FF)$. Therefore, we have completed the proof of the lemma. 
\end{proofof}

In our argument toward the dichotomy theorem, we have omitted the proof of Lemma \ref{constructibility}, which shows a fundamental property of T-constructibility. Now, we will give the proof of the lemma.  The proof 
will proceed by induction on the number of operations applied to construct a target constraint. 

\begin{proofof}{Lemma \ref{constructibility}}
Let $\FF$ be any set of constraints. For simplicity, assume that $f$ is obtained from $g$ (or $\{g_1,g_2\}$ in the case of the multiplication operation) 
by applying exactly one of the seven operations described in Section \ref{sec:constructibility}.  Our purpose is to show that 
$\sharpcspstar(f,\FF)$ is AP-reducible to $\sharpcspstar(g,\FF)$. 
Notationally, we set $\Omega=(G,X|\HH,\pi)$ 
and $\Omega'=(G',X'|\HH',\pi')$ to be any  
constraint frames associated with $g$ and $f$, respectively. Note that  
$\HH$ and $\HH'$ are finite subsets of  $\{g\}\cup\FF\cup\UU$ and 
$\{f\}\cup\FF\cup\UU$, and $X$ and $X'$ are both finite sets of Boolean variables. To improve the readability, we assume that $X=X'=\{x_1,x_2,\ldots,x_n\}$.  Let $\epsilon$ be any error tolerance parameter. 
For each operation, we want to explain how to produce $G$ and $\pi$ from $G'$ and $\pi'$ in polynomial time so that, after making a query $(\Omega,1/\epsilon)$ to any oracle (which is a randomized approximation scheme solving $\sharpcspstar(g,\FF)$), from its oracle answer $\csp_{\Omega}$, we can compute a $2^{\epsilon}$-approximate solution for  $\csp_{\Omega'}$ with high probability. This procedure indicates that $\sharpcspstar(f,\FF)\APreduces \sharpcspstar(g,\FF)$. For simplicity, we  will omit the mentioning of $\epsilon$ in the following description. 

\s

[{\sc Permutation}] Assume that $f$ is obtained from $g$ by exchanging two  indices $i$ and $j$ of variables $\{x_i,x_j\}$. 
{}From $G'$, we build $G$ by swapping only the labels $x_i$ and $x_j$ of the corresponding nodes (without changing any edge incident on them). The labeling function $\pi$ is also naturally induced from $\pi'$. 
Clearly, this step requires linear time.  
Our underlying randomized approximation scheme $N$ works as follows: it first constructs $\Omega$ from $\Omega'$, makes a single query to obtain a  $2^{\epsilon}$-approximate solution $z$ for $\csp_{\Omega}$ from the oracle $\sharpcspstar(g,\FF)$, and outputs $z$ instantly. 
Since $\csp_{\Omega'} = \csp_{\Omega}$, the output of $N$ is also a $2^{\epsilon}$-approximate solution for $\csp_{\Omega'}$. 

[{\sc Pinning}] Let $f=g^{x_i=c}$ for $i\in[k]$ and $c\in\{0,1\}$. {}From $G'$, we construct $G$ in polynomial time as follows: append a new node whose label is $\Delta_c$ and connect it to  the node labeled $x_i$ by a new edge. Because $\csp_{\Omega'} = \csp_{\Omega}$ holds, an  algorithm similar to the one in the previous case can approximate $\csp_{\Omega'}$.  

[{\sc Projection}] Let $f = g^{x_i=*}$ with index $i\in[k]$. 
Notice that $f$ does not have the variable $x_i$. For simplicity, assume that $G'$ has no node with the label $x_i$. 
Let $V'$ denote the set of all nodes having the label $f$ in $G'$. 
Now, we construct $G$ from $G'$ by adding a new node labeled $x_i$ to $V_2$, 
by replacing the label $f$ by $g$, and by connecting the node $x_i$ to all nodes in $V'$ by new single edges. This transformation implies that $\csp_{\Omega'} = \csp_{\Omega}$.  The rest is similar to the previous cases. 

[{\sc Linking}] Let $f=g^{x_i=x_j}$ and assume that $i<j$. In this case, we obtain $G$ from $G'$  by replacing the label $f$ by $g$ and adding an extra  edge between the node $x_j$ and the node $g$. Note that there are 
now two different edges between the node $x_j$ and the node $g$. 
Using this new graph $G$, we obtain $\csp_{\Omega'} = \csp_{\Omega}$.  

[{\sc Expansion}] Let $f(x_1,\ldots,x_{i},y,x_{i+1},\ldots,x_k)= g(x_1,\ldots,x_i,x_{i+1},\ldots,x_k)$,  where $y$ is a free variable. 
To define $G$, we delete from $G'$ any edge between the node $y$ and any node labeled $f$. Note that, in general, we cannot remove the node $y$ from $G'$ because it may be connected to other nodes although $g$ does not have the variable $y$. Since any node labeled $f$ has not initially {\em depended} on the node $y$, 
it follows that $\csp_{\Omega'} = \csp_{\Omega}$. 
Since a $2^{\epsilon}$-approximate solution $z$ for $\csp_{\Omega}$ can be obtained as an answer to an oracle query regarding $\Omega$, $z$  approximates  $\csp_{\Omega'}$ as well.  

[{\sc Multiplication}] Assume that $f=g_1\cdot g_2$, 
where $g_1$ and $g_2$ share the same input variable series. 
We intend to define $G$ in the following way. 
Since $\{g_1,g_2\}$ is a factor list for $f$, we first replace every node labeled $f$ in $G'$ by two new nodes having the labels 
$g_1$ and $g_2$, 
each of which has the same incident set as $f$ does. Using  
$\sharpcspstar(g_1,g_2,\FF)$ as an oracle, we obtain a $2^{\epsilon}$-approximate solution $z$ for $\csp_{\Omega}$ as an oracle answer.  
Since $\csp_{\Omega'} = \csp_{\Omega}$, $z$ approximates $\csp_{\Omega'}$.  

[{\sc Normalization}] Let $f= \lambda \cdot g$ for a constant $\lambda\in\complex-\{0\}$. 
Define $G$ to be $G'$ except that every occurrence of $f$ is 
replaced by $g$.  Let $n$ be the number of nodes in $G'$ 
that have the label $f$. Since $\csp_{\Omega'} = \lambda ^n\cdot \csp_{\Omega}$,  we first obtain a $2^{\varepsilon}$-approximate solution $z$ for $\csp_{\Omega}$ by making a query to the oracle $\sharpcspstar(g,\FF)$. We then multiply $z$ by $\lambda^m$ and output the resulting value. Clearly, this value approximates $\csp_{\Omega'}$. 

\s

{}From all seven cases discussed above, we conclude that 
$\sharpcspstar(f,\FF)$ is AP-reducible to $\sharpcspstar(g,\FF)$. 
This finishes the proof of Lemma \ref{constructibility}.  
\end{proofof}

\paragraph{Acknowledgments:} 
The author is grateful to Leslie Goldberg for helping him understand the notion of AP-reductions. The author is also appreciative of Jin-Yi Cai's introducing him a theory of signature while he was visiting at the University of Wisconsin in February 2010. 

\bibliographystyle{alpha}

\end{document}